\theoremstyle{plain}
\newtheorem{theorem}{Theorem}
\newtheorem{lemma}[theorem]{Lemma}
\newtheorem{corollary}[theorem]{Corollary}
\theoremstyle{definition}
\newtheorem{definition}[theorem]{Definition}
\newtheorem{example}[theorem]{Example}
\theoremstyle{remark}
\newtheorem*{remark}{Remark}
\title{Optimal Assumptions for Synthesis}
\author{Romain Brenguier (University of Oxford)}
\def\IA{\textsf{IA}}
\def\F{\textsf{F}}
\def\allpaths{\ensuremath{(\Sigma_I \cdot \Sigma_O)^\omega}}
\def\allhistories{\ensuremath{(\Sigma_I \cdot \Sigma_O)^*}}
\def\spec{\ensuremath{L}}
\def\adam{the environment\xspace}
\newcommand{\switch}[2]{\ensuremath{\left[#1\leftarrow#2\right]}}
\newcommand{\mypara}[1]{\medskip \noindent \textbf{#1}\ }
\begin{document}

\maketitle

\begin{abstract}
  Controller synthesis is the process of constructing a correct system automatically from its specification.
  This often requires assumptions about the behaviour of the environment.
  It is difficult for the designer to identify the assumptions that ensures the existence of a correct controller, and doing so manually can lead to assumptions that are stronger than necessary.
  As a consequence the generated controllers are sub optimal in terms of generality and robustness.
  In this work, given a specification, we identify the weakest assumptions that ensures the existence of a controller.
  We also consider two important classes of assumptions: the ones that can be ensured by the environment and assumptions that speaks only about inputs of the systems.
  We show that optimal assumptions correspond to strongly winning strategies, admissible strategies and remorse-free strategies respectively.
  Based on this correspondence, we propose an algorithm for computing optimal assumptions that can be ensured by the environment.
\end{abstract}

\section{Introduction}

Reactive systems are in outgoing interaction with their environment.
The goal of synthesis is the implementation of a correct reactive system from a high level specification.
A specification is given by a language over the alphabet of input and output signals of the desired system or program.
We will consider $\omega$-regular languages which are a powerful and natural way to describe the behaviour of reactive systems.
A specification is realisable if for all finite sequence of input signal we can produce at the same rate a sequence of output signals, such that the resulting sequence will belong to the language.
When this is the case we can implement a correct program with respect to the specification.
For regular languages this can be done using finite memory and thus implemented using Moore machines.

In general, the realisation of a specification requires some assumption about the behaviour of the environment.
In this work, given a specification, we compute the weakest assumption that makes it realisable.
Apart from looking for any assumption we also consider two important classes of assumptions: ensurable and input assumptions.
Ensurable assumptions are the one that can be ensured by the environment; in other term they cannot be falsified by a strategy of the controller.
These assumptions are natural to consider in a reactive environment context.
Input assumptions are independent of the sequence of output that is produced.
They are better suited than more general classes when the behaviour of the environment does not depend on the outputs of our system.

Synthesis is in general achieved by computation of winning strategies in a game.
For instance, if the specification is given by a parity automaton, we can see it has a game where the controller chooses output symbols and the adversary controls input symbols.
The existence of a winning strategy for the controller means that there is an implementation of the system such that all executions satisfies the specification and answers the realizability question.
When winning strategies do not exist, different classes have been introduced to characterise the ones that make a best-effort.
In particular, strongly winning strategy~\cite{Faella09} play a winning strategy as soon as the current history (sequence of signals seen so far) renders the existence of a winning strategy possible.
Admissible strategies~\cite{berwanger07} are not \emph{dominated} by other ones, in the sense that no strategy performs better than them against all adversary strategies.
Remorse-free strategies~\cite{DF11} are strategies for which no other performs better than them against all \emph{words} played by the adversary.
We draw a link between classes of assumptions and these classes of strategies.

\mypara{Example}
As an example, assume we want to design a sender on a network where packets can be lost or corrupted, and our goal is to obtain a protocol similar to the classical bit alternation protocol.
The outputs of the sender are actions $send_0$ and $send_1$, and the environment controls $ack_0$, $ack_1$ corresponding to acknowledgement of good reception of the packet.
The specification of the system are described by the $\omega$-regular expression:
$\Sigma_I \cdot (send_0 \cdot (\lnot ack_0 \cdot send_0)^* \cdot ack_0 \cdot send_1 \cdot (\lnot ack_1 \cdot send_1)^* \cdot ack_1)^\omega$.
Intuitively, we have to send message with bit control 0 until receiving the corresponding acknowledgement then do the same thing with the next message with bit control 1 and repeat this forever.

\noindent\begin{minipage}{0.55\textwidth}
  {\centering{
    \begin{tikzpicture}[xscale=1.8,yscale=1.4]
      \draw (0,0) node[draw,minimum size=6mm] (S1) {$s_1$};
      \draw (1,0) node[draw,circle,double] (S2){$s_2$};
      \draw (2,0) node[draw,minimum size=6mm] (S3) {$s_3$};
      \draw (2,-1) node[draw,circle] (S4) {$s_4$};
      \draw (3,0) node[draw,circle] (S5) {$s_5$};
      \draw (4,0) node[draw,minimum size=6mm] (S6) {$s_6$};
      \draw (4,-1) node[draw,circle] (S7) {$s_7$};
      \draw (2,-2) node[draw,minimum size=6mm] (BOT) {$\bot$};
      \draw (2,-3) node[draw,circle] (BOTO) {$\bot$};

      \draw [-latex'] (-0.5,0) -- (S1);
      \draw [-latex'] (S1) -- node[above] {$\Sigma_I$} (S2);
      \draw [-latex'] (S2) -- node[above] {$send_0$} (S3);
      \draw [-latex'] (S2) edge[bend right] node[left] {$\lnot send_0$} (BOT);
      \draw [-latex'] (S3) -- node[above] {$ack_0$} (S5);
      \draw [-latex'] (S3) edge[bend left] node[right] {$\lnot ack_0$} (S4);
      \draw [-latex'] (S4) edge[bend left] node[left] {$send_0$} (S3);
      \draw [-latex'] (S5) -- node[above] {$send_1$} (S6);
      \draw [-latex'] (S4) -- node[right] {$\lnot send_0$} (BOT);
      \draw  (S6.90) edge[bend right,-latex'] node[above] {$ack_1$} (S2.90);
      \draw [-latex'] (S6) edge[bend left] node[right] {$\lnot ack_1$} (S7);
      \draw [-latex'] (S7) edge[bend left] node[left] {$send_1$} (S6);
      \draw [-latex'] (S7) edge[bend left] node[right] {$\lnot send_1$} (BOT);
      \draw [-latex'] (BOT) edge[bend left] node[right] {$\Sigma_I$} (BOTO);
      \draw [-latex'] (BOTO) edge[bend left] node[left] {$\Sigma_O$} (BOT);
    \end{tikzpicture}
  }}
  \captionof{figure}{A specification by a B\"uchi automaton of the language~$\Sigma_I \cdot (send_0 \cdot (\lnot ack_0 \cdot send_0)^* \cdot ack_0 \cdot send_1 \cdot (\lnot ack_1 \cdot send_1)^* \cdot ack_1)^\omega$.
    Accepting states (with parity colour 0) are double lined.
    Square states mean that the next signal is an input, while circles mean it will be an output.}
  
  \label{fig:bit-alternation}
\end{minipage}
\hfill
\begin{minipage}{0.4\textwidth}
Although the implementation of the protocol seems straightforward, classical realizability fails here since if all packets are lost after some point the specification will not be satisfied.
To ensure realizability we have make the assumption that a packet that is repeatedly sent will eventually be acknowledged.
An admissible strategy for this specification can be implemented by a Moore machine which has the same structure as the automaton in \figurename~\ref{fig:bit-alternation} with output function $G$ such that $G(s_2) = G(s_4) = send_0$ and $G(s_5) = G(s_7) = send_1$.
This implementation is natural for the given specification and corresponds indeed to the bit alternation protocol.
The assumption corresponding to this strategy is the language recognised by the same automaton where we add $\bot$-states to the set of accepting states.
As we will see in Thm.~\ref{thm:non-dominated-nonrestrictive}, it is an optimal ensurable assumption for the specification.
\end{minipage}

\mypara{Scenarios} 
Giving specifications for our system is a way to disallow some behaviours that are not desirable.
Dually, we may want to specify execution scenarios that should be possible in the synthesised system.
We ask then for a system whose outcomes are all within the specifications and contains all the given scenarios.

\mypara{Generalisation}
Sometimes, some particular assumptions are natural for the problem we consider but we want to synthesise a system which is as robust as possible by generalising this assumption.
For instance, for the bit alternation protocol we could suggest as an initial assumption that two successive packets cannot be lost.
The synthesised systems would offer no guarantee two packets in a row are lost.
By generalising the assumption, we ensure that the strategy synthesised works well for the initial assumptions we have in mind, and for as many input sequences as possible.
For the bit alternation protocol the protocol works well if not all packet are lost after some point in time.


\mypara{Contribution}
In this article we establish correspondences between class of assumptions and classical classes of strategies.
These correspondences are summarised in the following table.
\begin{center}
\begin{tabular}{c c l}
  Class of assumption: &  Optimal achieved by: &\\
  \hline 
  General ($\mathcal{A}$) & strongly winning strategies & Thm.~\ref{thm:strongly-winning-optimal} \\
  Ensurable ($\mathcal{E}$) & admissible strategies & Thm.~\ref{thm:non-dominated-nonrestrictive}  \\
  Input ($\mathcal{I}$)  & remorsefree strategies & Thm.~\ref{thm:remorse-free-input} 
\end{tabular}
\end{center}
Based on these results, we show existence of optimal assumptions in most case and give algorithms to compute optimal assumptions.
In particular, we show:
\begin{itemize}
\item Existence of sufficient input assumptions compatible with a scenario is always true~(Thm.~\ref{thm:existence-optimal-compatible}).
  It is also true for safety assumption if the scenario is itself a safety language~(Thm.~\ref{thm:existence-optimal-compatible-safety}).
\item There may exist an infinite number of optimal and ensurable-optimal assumptions~(Thm.~\ref{thm:infinity-optimal}) and of input-optimal assumptions~(Thm.~\ref{thm:infinity}).
\item We can compute an optimal ensurable assumption in exponential time for parity specification and in polynomial time if we have an oracle to solve parity games~(Thm.~\ref{thm:Moore-ensurable-optimal} and Thm.~\ref{thm:compute-optimal}).
\item There is an exponential algorithm that given a sufficient assumption~$H$, generalises it by finding $H'$ such that $H \subseteq H'$ and $H'$ is ensurable-optimal~(Thm.~\ref{thm:generalization}).
\end{itemize}

\mypara{Comparison with previous works on assumptions for synthesis}
In \cite{CHJ08}, the study is focused on safety conditions defined by forbidding edges of the automaton defining specification $\spec$.
This approach is less general than ours since it depends on the choice of the automaton representing $\spec$.

\smallskip

\noindent\begin{minipage}{0.5\textwidth}
In the setting of \cite{CHJ08}, comparison between assumptions is based on the number of edges, while we compare them based on language inclusion which we find more relevant.
Consider the example of \figurename~\ref{fig:exCHJ08} taken from~\cite{CHJ08}.
Player 1 has no winning strategy from $s_1$.
According to~\cite{CHJ08}, there are 2 minimal sufficient assumptions which are $E'_s = \{(s_3,s_6)\}$ and $E'_s = \{(s_5,s_7)\}$.
However if we remove the edge from $s_3$ to $s_6$, $s_5$ is no longer accessible which means that the first assumption is in fact stronger than the second one, taking the point of view of language inclusion.
\end{minipage}
\hfill
\begin{minipage}{0.4\textwidth}
  \centering{
  \includegraphics[width=0.9\textwidth]{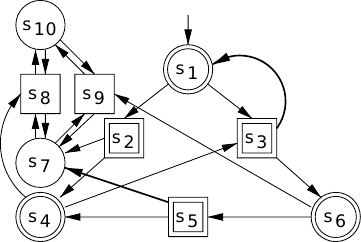}
  }
  \captionof{figure}{A game from~\cite{CHJ08}.}
  \label{fig:exCHJ08}
\end{minipage}

\bigskip

\noindent
\begin{minipage}{0.45\textwidth}
  \centering{
    \begin{tikzpicture}
      \draw (-1.3,0) node[draw,minimum size=6mm] (S1) {$s_1$};
      \draw (0,0) node[draw,circle] (S2) {$s_2$};
      \draw (1,1) node[draw,minimum size=6mm] (S3){$s_3$};
      \draw (1,-1) node[draw,minimum size=6mm] (S4) {$s_4$};
      \draw (2.5,-1) node[draw,circle,double] (S5) {$s_5$};
      \draw (4,-1) node[draw,double,minimum size=6mm] (S8) {$s_8$};

      \draw (2.5,1) node[draw,circle] (S6) {$s_6$};
      \draw (4,1) node[draw,minimum size=6mm] (S9) {$s_7$};
      \draw [-latex'] (-2,0) -- (S1);
      \draw [-latex'] (S1) -- node[above] {$\Sigma_I$} (S2);
      \draw [-latex'] (S2) -- node[above] {$o_1$} (S3);
      \draw [-latex'] (S2) -- node[above] {$o_2$} (S4);
      \draw [-latex'] (S3) -- node[pos=0.3,above] {$i_1$} (S5);
      \draw [-latex'] (S3) -- node[above] {$i_2$} (S6);

      \draw [-latex'] (S4) -- node[above] {$i_2$} (S5);
      \draw [-latex'] (S4) -- node[pos=0.3,above] {$i_1$} (S6);
      \draw [-latex'] (S5) edge[bend left] node[above] {$\Sigma_O$} (S8);
      \draw [-latex'] (S8) edge[bend left] node[below] {$\Sigma_I$} (S5);
      \draw [-latex'] (S6) edge[bend left] node[above] {$\Sigma_O$} (S9);
      \draw [-latex'] (S9) edge[bend left] node[below] {$\Sigma_I$} (S6);
    \end{tikzpicture}
  }
  \captionof{figure}{Automaton for specification $\Sigma_I \cdot (o_1 \cdot i_1 \mid o_2 \cdot i_2) \cdot \Sigma_O \cdot \allpaths$.}
  \label{fig:counter-example}
\end{minipage}
\hfill
\begin{minipage}{0.5\textwidth}
Moreover, we show that even for this restricted class of safety assumption, the claim that there is a unique optimal assumption (\cite[Thm.~5]{CHJ08}) is wrong.
Consider the example of \figurename~\ref{fig:counter-example}.
Removing $(s_3,s_6)$ or $(s_4,s_6)$ is sufficient for $\spec = \Sigma_I \cdot (o_1 \cdot i_1 \mid o_2 \cdot i_2) \cdot \Sigma_O \cdot \allpaths$.
Note that both these assumptions are ensurable since they lead to immediate violation of the safety objective.
This therefore contradicts~\cite[{Thm.~5}]{CHJ08} which claims that if $\spec \ne \varnothing$ then there exists a unique minimal safety assumption that is ensurable and sufficient for $\spec$.
\end{minipage}

\section{Preliminaries}

Given a finite alphabet $\Sigma$ and an infinite word $w \in \Sigma^\omega$, we use $w_i$ to denote the $i$-th symbol of $w$,
and $w_{\le i} = w_1 \cdots w_i$ the finite prefix of $w$ of length $i$.
We write $|w_{\le i}| = i$ its length.

\subsection{Classical realizability}

A reactive system reads \emph{input signals} in a finite alphabet $\Sigma_I$ and produces \emph{output signals} in a finite alphabet $\Sigma_O$.
We assume for the rest of this paper that these alphabets are fixed.
A \emph{specification} of a reactive system is an $\omega$-regular language $L \subseteq (\Sigma_I \cdot \Sigma_O)^\omega$.
A \emph{program} or \emph{strategy} is a mapping $\sigma_\exists \colon (\Sigma_I \cdot \Sigma_O)^* \cdot \Sigma_I \mapsto \Sigma_O$.
An \emph{outcome} of such a strategy $\sigma_\exists$ is a word $w$ such that for all $i\in \mathbb{N}$, $w_{2\cdot i+2} = \sigma_\exists (w_{\le 2 \cdot i+1})$.
We write $\Out(\sigma_\exists)$ for the set of outcomes of $\sigma_\exists$.

\mypara{Realizability problem~\cite{PR89}}
Given a specification~$\spec$,
the realizability problem asks whether there exists
a strategy $\sigma_\exists$ such that
$\Out(\sigma_\exists) \subseteq \spec$.
Such a strategy is said \emph{winning} for $\spec$.
The process of constructing such a strategy is called \emph{synthesis}.

\mypara{Parity automata}
We will assume that specifications are given by deterministic parity automata, which can recognise any $\omega$-regular languages~\cite{GTW02}.
A \emph{parity automaton} is given by $\langle S, s_0, \Delta, \chi \rangle$, where $S$ is a finite set of states, $s_0 \in S$ is the initial state, $\Delta \in S \times (\Sigma_I \cup \Sigma_O) \times S$ is the transition relation, and $\chi \colon S \to \mathbb{N}$ is a colouring function.
A path~$\rho\in S^\omega$ is accepting if the smallest colour seen infinitely often is even (i.e. if $\min\{ c \mid \forall i\in \mathbb{N}.\ \exists j\ge i.\ \chi(w_j) = c \} \in 2 \cdot \mathbb{N}$).
A word~$w$ is accepted if there is an accepting path whose labelling is $w$.
A \emph{B\"uchi automaton} is a parity automaton for which $\chi(S) \subseteq \{ 0, 1 \}$.
A \emph{safety automaton} is a B\"uchi automaton where states of colour $1$ are absorbing.
The language recognised by an automaton is the set of words it accepts.
In practice specification are often given in temporal logics such as LTL before being translated to an automaton representation.
In our examples, we will sometimes mention LTL formulas, using the syntax $\X \phi$ meaning $\phi$ holds in the next state, $\phi_1 \U \phi_2$ meaning $\phi_1$ holds until $\phi_2$ holds (and $\phi_2$ must hold at some point), $\F \phi := \true\ \U\ \phi$ and $\G \phi := \lnot \F (\lnot \phi)$.

\mypara{Strategies}
The realizability problem is best seen as a game between two players~\cite{FJR09}. 
The environment chooses the input signals and the controller the output signals.
We therefore also define the concept of \emph{environment-strategy} which is a mapping $\sigma_\forall \colon (\Sigma_I \cdot \Sigma_O)^* \mapsto \Sigma_I$.
Given an environment-strategy $\sigma_\forall$, we write $\Out(\sigma_\forall)$ the word $w$ such that for all $i\in \mathbb{N}$, $w_{2\cdot i+1} = \sigma_\forall(w_{\le 2 \cdot i})$.
Given an input word $u \in {\Sigma_I}^\omega$, we write $\Out(\sigma_\exists,u)$ the unique outcome such that for all $i\in \mathbb{N}$, $w_{2\cdot i+1} = u_{i+1}$ and $w_{2\cdot i+2} = \sigma_\exists(w_{\le 2\cdot i+1})$.
We also write $\Out(\sigma_\exists,\sigma_\forall) = \Out(\sigma_\exists) \cap \Out(\sigma_\forall)$, note that it contains only one outcome.
A finite prefix of an outcome is called a \emph{history}.
Given a history~$h$, we write $\Out_h(\sigma_\exists)$ a word $w$ such that for all $i\le |h|$, $w_i = h_i$ and for all $i$ such that $2\cdot i+2> |h|$, $w_{2\cdot i+2} = \sigma_\exists (w_{\le 2 \cdot i+1})$.

We write $\pi_I$ and $\pi_O$ the \emph{samplings} over input and output signals respectively, that is $\pi_I \colon (\Sigma_I \cdot \Sigma_O)^\omega \to {\Sigma_I}^\omega$ is such that $\pi_I(w)_i = w_{2 \cdot i-1}$ and $\pi_O \colon (\Sigma_I \cdot \Sigma_O)^\omega \to {\Sigma_O}^\omega$ is such that $\pi_O(w)_i = w_{2 \cdot i}$.

\mypara{Moore machine}
In practice strategies are implemented using Moore machines, that correspond to strategies that only use finite memory.
  A \emph{Moore machine} is given by $\langle S_I,S_O, s_0, \delta, G \rangle$ where $S=S_I \cup S_O$ is a finite set of states, $S_I$ is a set of input states and $S_O$ of output states, $s_0 \in S_I$ is the initial state, $\delta \in S \times \Sigma_I \cup \Sigma_O \to S$ is the transition function, and $G \colon S_O \to \Sigma_O$ is an output function.
  A Moore machine implements a strategy~$\sigma_\exists$ where for all history $h\in \allhistories \cdot \Sigma_I$, $\sigma_\exists(h) = G(s_{|h|})$ where for all $0 \le i < |h|$, $s_{i+1} = \delta(s_i,h_{i+1})$.
  Note that our Moore machine read both inputs and outputs.
  In many application there would be no need to read the outputs since it can be left undefined on histories that are incompatible with the strategy.
  However we prefer this definition which is coherent with our definition of strategies and makes it easier to combine strategies which may not be compatible with the same histories.


\medskip

\noindent\begin{minipage}{0.52\textwidth}
\begin{example}
  In all the examples of this article we will assume that the set of input signals is $\Sigma_I = \{ i_1 , i_2 \}$ and the set output of output signals is $\Sigma_O = \{ o_1 , o_2 \}$.
  The specification given in \figurename~\ref{fig:simple-example} is realisable.
  The corresponding winning strategy consists for the controller to invert the input signals: if $i_1$ is the input then in next round we do not output $o_1$ and vice-versa.
  Formally, this strategy is given for all history $h \in (\Sigma_I \cdot \Sigma_O) \cdot \Sigma_I$ by if $h_{|h|} = i_1$ then $\sigma_\exists(h) = o_2$ and if $h_{|h|} = i_2$ then $\sigma_\exists(h) = o_1$.
  A Moore machine which implement a winning strategy can also be obtained from the parity automaton by setting the output function to be such that~$G(s_2)=o_2$, $G(s_1)=o_1$.
\end{example}
\end{minipage}\hfill
\begin{minipage}{0.45\textwidth}
  \centering{
    \begin{tikzpicture}[xscale=1.5,yscale=1.5]
      \draw (0,0) node[draw,minimum size=6mm,double] (S1) {$s_1$};
      \draw (1,1) node[draw,circle,double] (S2){$s_2$};
      \draw (1,-1) node[draw,circle,double] (S3) {$s_3$};
      \draw (2,0) node[draw,minimum size=6mm] (S4) {$s_4$};
      \draw (3,0) node[draw,circle] (S5) {$s_5$};

      \draw [-latex'] (-0.7,0) -- (S1);
      \draw [-latex'] (S1) -- node[below] {$i_1$} (S2);
      \draw [-latex'] (S1) -- node[below] {$i_2$} (S3);
      \draw [-latex'] (S2) edge[bend right] node[above] {$o_2$} (S1);
      \draw [-latex'] (S2) -- node[above] {$o_1$} (S4);
      \draw [-latex'] (S3) edge[bend right] node[above] {$o_1$} (S1);
      \draw [-latex'] (S3) -- node[above] {$o_2$} (S4);
      \draw [-latex'] (S4) edge[bend right] node[below] {$\Sigma_I$} (S5);
      \draw [-latex'] (S5) edge[bend right] node[above] {$\Sigma_O$} (S4);
    \end{tikzpicture}
  }
  
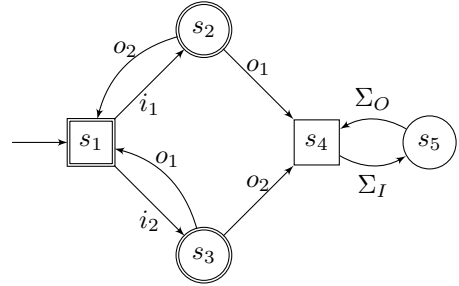
\captionof{figure}{A specification by a B\"uchi automaton corresponding to LTL formula~$\G(i_1 \Leftrightarrow \X (\lnot o_1))$.
    Accepting states (with parity colour 0) are double lined.
    Square states mean that the next signal is an input, while circles mean it will be an output.}
  \label{fig:simple-example}
\end{minipage}

\subsection{Assumptions}
An \emph{assumption} is an $\omega$-regular language $A\subseteq (\Sigma_I \cdot \Sigma_O)^\omega$.
Assumption $A$ is \emph{sufficient} for specification $\spec$ if there is a strategy~$\sigma_\exists$ of the controller such that any outcome either satisfies $\spec$ or is not in $A$, i.e. $\Out(\sigma_\exists) \cap A \subseteq \spec$.
In that case we also say that $A$ is \emph{sufficient for $\sigma_\exists$}.
We are looking for assumptions that are the least restrictive.
We say that assumption $A$ is \emph{less restrictive} than $B$ if 
$A \subseteq B$.
We say it is \emph{strictly less restrictive} if $A \subset B$ (i.e. $A\subseteq B$ and $A \ne B$).

We will consider several class of assumptions:
\begin{itemize}
\item The general class of assumptions is written $\mathcal{A} = (\Sigma_I \cdot \Sigma_O)^\omega$.
\item An \emph{input-assumption} is an assumption which concerns only inputs and does not restrict at all outputs.
  $A$ is an input-assumption if for all words $w$ and $w'$, $\pi_I(w) = \pi_I(w') \land w \in A \Rightarrow w' \in A$.
  We write this class $\mathcal{I} = \{ A \in \mathcal{A} \mid \forall w,w'\in \allpaths.\ \pi_I(w) = \pi_I(w') \land w \in A \Rightarrow w' \in A \}$.
\item 
  We say that $A$ is \emph{ensurable}, if the environment has a winning strategy for $A$, i.e.
  for each $w \in \allhistories$, if $w \cdot \allpaths \cap A \ne \varnothing$ then
  there exists an environment strategy $\sigma_\forall$ such that $w \cdot \allpaths \cap \Out(\sigma_\forall) \ne \varnothing$ and $\Out(\sigma_\forall) \subseteq A$.
  We write $\mathcal{E}$ for this class.
  The fact that \adam can ensure the assumption is often a requirement done in synthesis~(see for instance~\cite{BEJK14}).
\item We say that an assumption~$A$ is \emph{output-restrictive} if it restricts the strategies of the controller, that is
  if there is $w \in (\Sigma_I \cdot \Sigma_O)^*$ and $\sigma_\exists$ a strategy, such that 
\[
\left\{
\begin{array}{l r l}
  & A \cap w \cdot \allpaths  & \ne \varnothing \\
  \text{and} & \Out(\sigma_\exists) \cap w \cdot \allpaths & \ne \varnothing \\
  \text{and} & A \cap \Out(\sigma_\exists) \cap w \cdot \allpaths  & = \varnothing 
\end{array}
\right.
\]
Intuitively an output-restrictive assumption $A$ forbids strategy $\sigma_\exists$, so playing $\sigma_\exists$ would be a trivial way to satisfy $A \Rightarrow \spec$.
From the point of view of synthesis it is better to avoid such assumptions.
We write this class $\mathcal{O}$.
\item Given an assumption $A$, we write the set of bad prefixes $\Bad(A) = \{ h\in \allhistories \mid h\cdot \allpaths \cap A = \varnothing \}$.
  Assumption $A$ is a \emph{safety assumption} if every word not in $A$ has a bad prefix~\cite{FK15}, i.e. $A = \allpaths \setminus \{ w \mid \exists k.\ w_{\le 2 \cdot k} \in \Bad(A)\}$.
  We write $\mathcal{S}$ for this class. 
\end{itemize}

For a class~$\mathcal{C}$ of assumptions, we say that assumption $A$ is \emph{$\mathcal{C}$-optimal} for $\spec$ if $A$ belongs to $\mathcal{C}$, is sufficient for $\spec$ and there is no assumption $B \in \mathcal{C}$ that is strictly less restrictive than $A$ and sufficient for $\spec$.

\begin{remark}
  Note that $\spec$ is always a sufficient assumptions, however it is too strong and will never be interesting for synthesis: if we assume that our specification always hold then there is nothing left to do.
  That is why we ask for assumptions that are as weak as possible.
\end{remark}

We now prove the relationships that exist between the classes of assumptions.
\begin{lemma}
  Non-empty input assumptions are ensurable, i.e. $(\mathcal{I}\setminus \{\varnothing\}) \subset \mathcal{E}$
\end{lemma}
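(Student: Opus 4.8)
The plan is to unpack the definition of ensurable directly and show that a non-empty input assumption $A$ satisfies it. Recall that $A \in \mathcal{E}$ means: for every history $w \in \allhistories$ with $w \cdot \allpaths \cap A \ne \varnothing$, there is an environment strategy $\sigma_\forall$ with $w \cdot \allpaths \cap \Out(\sigma_\forall) \ne \varnothing$ and $\Out(\sigma_\forall) \subseteq A$. The key observation I would exploit is that for an input assumption, membership in $A$ depends only on the input projection $\pi_I(w)$; the outputs are irrelevant. So the environment should be able to ensure $A$ by committing to a fixed sequence of inputs that is known to be "good", regardless of what the controller outputs.

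First I would fix a witness word. Since $A \ne \varnothing$, pick some $v \in A$ and let $u = \pi_I(v) \in {\Sigma_I}^\omega$ be its input projection. By the input-assumption closure property, \emph{every} word $w'$ with $\pi_I(w') = u$ lies in $A$. Next, given an arbitrary history $w$ with $w \cdot \allpaths \cap A \ne \varnothing$, I want an input word that is consistent with $w$ on the input positions seen so far and that continues into a "good" input sequence. The cleanest route is: because $w \cdot \allpaths \cap A \ne \varnothing$, there is some completion lying in $A$; take its input projection $u'$. Then $u'$ agrees with $w$ on all input positions inside $w$, and any word with input projection $u'$ is in $A$ by closure.

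Then I would build the environment strategy $\sigma_\forall$ that simply replays this fixed input word $u'$ position by position, ignoring the controller's outputs: formally $\sigma_\forall(h) = u'_{|h|/2 + 1}$ for every history $h$ of even length (the positions where the environment moves). By construction every outcome consistent with $\sigma_\forall$ has input projection exactly $u'$, so $\Out(\sigma_\forall) \subseteq A$ by the input-assumption property. Moreover, since $u'$ extends the inputs appearing in $w$, the history $w$ is compatible with $\sigma_\forall$, giving $w \cdot \allpaths \cap \Out(\sigma_\forall) \ne \varnothing$. This establishes ensurability. The strictness of the inclusion, i.e. that some ensurable assumption is not an input assumption, follows from exhibiting any ensurable assumption whose membership genuinely depends on an output choice — for instance an assumption the environment can force but which excludes some word sharing its input projection.

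The main obstacle I expect is purely one of bookkeeping rather than mathematical depth: matching the index conventions in the definitions of $\Out(\sigma_\forall)$ and the history positions so that "$u'$ agrees with $w$ on the input positions" and "$\sigma_\forall$ replays $u'$" are stated with the correct subscripts ($w_{2i+1}$ for inputs, $w_{2i+2}$ for outputs), and confirming that the \emph{non-emptiness} hypothesis $A \ne \varnothing$ is exactly what is needed to get started. No genuinely hard step is anticipated; the content is that for input assumptions the environment's ability to choose inputs freely is sufficient to ensure the language on its own.
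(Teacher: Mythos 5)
The inclusion half of your argument is correct and takes essentially the same route as the paper: the environment commits to replaying a fixed input word extracted from a witness of $A$, and closure of $A$ under changes of the output coordinates gives $\Out(\sigma_\forall) \subseteq A$, while agreement with the inputs of the given history gives compatibility. You are in fact more careful than the paper's own write-up, which picks a single word $w \in A$ and follows it --- in effect checking the condition only for the empty history --- whereas you correctly re-select the witness for each history $w$ with $w \cdot \allpaths \cap A \ne \varnothing$, which is exactly the quantification the definition of ensurable requires.

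The gap is in the strictness half. The lemma asserts $(\mathcal{I}\setminus\{\varnothing\}) \subset \mathcal{E}$, a \emph{strict} inclusion, so you must exhibit a concrete ensurable assumption that is not an input assumption; your closing sentence only paraphrases what such an example would have to look like (``an assumption the environment can force but which excludes some word sharing its input projection'') without constructing one, so nothing is actually proved there. The paper supplies a witness: the language of $\X o_1 \Leftrightarrow \X\X i_1$. It is ensurable because the environment can echo the controller's first output (answer $o_1$ with $i_1$ and $o_2$ with $i_2$), yet it is not an input assumption since $w = (i_1 \cdot o_1)^\omega$ belongs to it and $w' = (i_1 \cdot o_2)^\omega$ does not, although $\pi_I(w) = \pi_I(w')$. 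Adding some such example (this one or another) is needed to complete the proof of the statement as written.
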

\begin{proof}
  $(\mathcal{I}\setminus \{\varnothing\}) \subseteq \mathcal{E}$:
  Let $A$ be a non-empty input-assumption, and $w \in A$.
  Let $\sigma_\forall$ be the strategy of the environment that follows $w$, that is for all histories~$h$, $\sigma_\forall(h) = w_{|h|+1}$.
  For all outcome $u \in \Out(\sigma_\forall)$, $\pi_I(u) = \pi_I(w)$.
  Since $\pi_I(w) = \pi_I(u)$ and $A$ is an input assumption $u \in A$.
  Hence $A$ is ensurable.

  $(\mathcal{I}\setminus \{\varnothing\}) \ne \mathcal{E}$:
  Consider the specification $\spec = \X o_1 \Leftrightarrow \X\X i_1$, it is ensurable by any strategy such that $\sigma_\forall(\Sigma_I \cdot o_1) = i_1$ and $\sigma_\forall(\Sigma_I \cdot o_2) = i_2$.
  However it is not an input-assumption: $w = (i_1 \cdot o_1)^\omega \in \spec$ and $w' = (i_1 \cdot o_2)^\omega \not\in \spec$ but $\pi_I(w) = \pi_I(w')$.
\end{proof}

\begin{lemma}\label{lem:ensurable-nonrestrictive}
  For $\omega$-regular specifications, ensurable assumptions are exactly the assumptions that are not-output-restrictive, $\mathcal{E} = \mathcal{G} \setminus \mathcal{O}$.
\end{lemma}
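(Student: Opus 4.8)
The plan is to treat $\mathcal{G}$ as the general class $\mathcal{A}$ and prove the two inclusions of $\mathcal{E} = \mathcal{A}\setminus\mathcal{O}$ separately, i.e.\ that an assumption is ensurable exactly when it is not output-restrictive.

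For $\mathcal{E}\subseteq\mathcal{A}\setminus\mathcal{O}$, I would argue by contradiction: assume $A$ is ensurable but output-restrictive, witnessed by a history $w$ and a controller strategy $\sigma_\exists$ meeting the three displayed clauses. Since the first clause gives $A\cap w\cdot\allpaths\neq\varnothing$, ensurability applied to $w$ produces an environment strategy $\sigma_\forall$ with $\Out(\sigma_\forall)\subseteq A$ and some outcome extending $w$. The existence of an outcome of $\sigma_\forall$ extending $w$ forces $\sigma_\forall$ to play the inputs of $w$ on the even-length prefixes of $w$; symmetrically, the second clause $\Out(\sigma_\exists)\cap w\cdot\allpaths\neq\varnothing$ forces $\sigma_\exists$ to play the outputs of $w$ on its odd-length prefixes. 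A short induction on $|w|$ then shows the unique joint outcome $\Out(\sigma_\exists,\sigma_\forall)$ agrees with $w$ on its first $|w|$ letters, so it lies in $w\cdot\allpaths$; as it also lies in $\Out(\sigma_\exists)$ and in $\Out(\sigma_\forall)\subseteq A$, it belongs to $A\cap\Out(\sigma_\exists)\cap w\cdot\allpaths$, contradicting the third clause. This direction uses neither determinacy nor $\omega$-regularity.

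For the converse $\mathcal{A}\setminus\mathcal{O}\subseteq\mathcal{E}$, I would view the problem as the zero-sum game in which the environment controls inputs with objective $A$ and the controller controls outputs with objective $\allpaths\setminus A$. Call a history $h$ \emph{$A$-consistent} if $h\cdot\allpaths\cap A\neq\varnothing$, and let $W$ be the set of histories from which the environment has a strategy ensuring that every compatible extension of $h$ lies in $A$. The crux is the claim that, when $A$ is not output-restrictive, every $A$-consistent history lies in $W$: if some $A$-consistent $h$ were not in $W$, then by determinacy of $\omega$-regular games the controller wins $\allpaths\setminus A$ from $h$, yielding a strategy $\sigma_\exists$ that we may take consistent with $h$ and satisfying $\Out(\sigma_\exists)\cap h\cdot\allpaths\cap A=\varnothing$; but then $h$ and $\sigma_\exists$ meet all three clauses of output-restrictiveness, a contradiction.

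To finish, I would fix an $A$-consistent history $w$ and note that all its prefixes are $A$-consistent, hence in $W$; the winning-region structure of parity games then gives that at environment positions of $W$ some input stays in $W$, while at controller positions of $W$ every output stays in $W$. Using this I define one environment strategy $\sigma_\forall$ that plays the inputs of $w$ along the prefixes of $w$ (legal, as those prefixes lie in $W$) and, the moment the play completes $w$ or the controller first deviates from it, switches to a winning strategy from the history then reached, which is necessarily in $W$. Because the parity acceptance condition is a tail property, the finite detour along $w$ is harmless, so every outcome of $\sigma_\forall$ lies in $A$, giving $\Out(\sigma_\forall)\subseteq A$, while the outcome following $w$ witnesses $w\cdot\allpaths\cap\Out(\sigma_\forall)\neq\varnothing$; hence $A$ is ensurable. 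The main obstacle is exactly this converse: turning the mere non-existence of a controller counter-strategy (obtained via determinacy) into a single global environment strategy whose \emph{every} outcome—not only those extending $w$—lands in $A$, and checking that patching the forced prefix together with local winning strategies preserves membership in $A$ via the tail nature of the parity condition.
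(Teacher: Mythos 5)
Your proof is correct, and both inclusions follow the same overall line as the paper: the first inclusion is argued exactly as there (the joint outcome $\Out(\sigma_\exists,\sigma_\forall)$ extends $w$ because $w$ is compatible with both strategies, hence lies in $A\cap\Out(\sigma_\exists)\cap w\cdot\allpaths$), and the converse rests, as in the paper, on determinacy of $\omega$-regular games together with controller strategies that follow $w$ and then play a strategy winning for $\allpaths\setminus(w^{-1}\cdot A)$. The genuine difference is in how much of the converse gets justified. The paper assumes ensurability fails at $w$ and immediately asserts that the environment has no winning strategy for the quotient $w^{-1}\cdot A$; this identification is not automatic, because a strategy winning the subgame after $w$ need not extend to one with $\Out(\sigma_\forall)\subseteq A$ \emph{globally} --- outcomes in which the controller deviates from $w$ are not covered by the subgame strategy. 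Your construction supplies exactly the missing justification: the crux claim (every $A$-consistent history lies in the environment's winning region $W$, by determinacy plus non-restrictiveness applied at that history, not only at $w$), the closure of $W$ under arbitrary controller outputs, and the patching of the forced prefix along $w$ with subgame winning strategies at $w$ and at first-deviation histories together produce a single $\sigma_\forall$ with $\Out(\sigma_\forall)\subseteq A$ and an outcome extending $w$. So on this direction your argument is more careful than the paper's own. One blemish: the appeal to the parity condition being a tail property is unnecessary and, read literally, incorrect ($A$ is not prefix-independent in general); no such step is needed, since your $W$ is defined by requiring that every compatible \emph{extension of $h$} lies in $A$, so the strategies you switch to already account for the prefix travelled.
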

\begin{proof}
  $\mathcal{E} \subseteq \mathcal{G} \setminus \mathcal{O}$:
  Let $A$ be an ensurable assumption, and $w \in \allhistories$.
  Assume there is a strategy $\sigma_\exists$ such that $A \cap w \cdot \allpaths \ne \varnothing$ and $\Out(\sigma_\exists) \cap  w \cdot \allpaths \ne \varnothing$, we prove that $A \cap \Out(\sigma_\exists) \cap  w \cdot \allpaths \ne \varnothing$, which shows that $A$ is nonrestrictive.
  Let $u \in A \cap w \cdot \allpaths$ by the definition of ensurable assumptions there exists $\sigma_\forall$ compatible with $w$ and such that $\Out(\sigma_\forall) \cap A$.
  Let $w' = \Out(\sigma_\exists,\sigma_\forall)$, $w$ is a prefix of $w'$ since it is compatible with both $\sigma_\exists$ and $\sigma_\forall$.
  Since $\Out(\sigma_\forall) \cap A$, $w' \in A$ so $w \in A \cap \Out(\sigma_\exists) \cap w \cdot \allpaths$.
  Hence $A$ is nonrestrictive.

  $\mathcal{G} \setminus \mathcal{O} \subseteq \mathcal{E}$:
  Let $A$ be an assumption which is not output restrictive.
  Assume towards a contradiction that there is some history $w$ such that $w \cdot \allpaths \cap A \ne \varnothing$ and environment has no winning strategy~$\sigma_\forall$ for $A$ which is compatible with $w$.
  This means environment has no winning strategy for $w^{-1} \cdot A = \{ w' \in \allpaths \mid w \cdot w' \in A \}$.
  By determinacy of $\omega$-regular objectives~\cite[Corollary~2.10]{GTW02} there is a winning strategy $\sigma_\exists$ for $\allpaths \setminus (w^{-1} \cdot A)$.
  Consider the strategy $\sigma'_\exists$ that plays according to $w$ and then switches to $\sigma_\exists$, that is for a history $h$ if $|h| < |w|$ then $\sigma'_\exists(h) = w_{|h|+1}$ and if $|h| \ge |w|$, $\sigma'_\exists(h) = \sigma_\exists(h_{\ge |w|}$.
  The history $w$ is compatible with $\sigma'_\exists$ so $\Out(\sigma'_\exists) \cap w \cdot \allpaths \ne \varnothing$.
  But since $\sigma_\exists$ is winning for $\allpaths \setminus (w^{-1}\cdot A)$, no outcome is of the form $w \cdot w'$ with $w' \in w^{-1} \cdot A$.
  This means that $A \cap \Out(\sigma'_\exists) \cap w \cdot \allpaths= \varnothing$ which contradicts the fact that $A$ is not output restrictive.
\end{proof}
\begin{example}
  The specification $\spec = o_1 \U i_1$ is not realisable, however several assumptions can be sufficient for it.
  The automaton corresponding to $\spec$ is represented in \figurename~\ref{fig:o1Ui1}.
  Consider for instance the assumption~$A$ given by the LTL formula $\F o_1$.
  It is sufficient for $\spec$ and is in fact sufficient for any specification since a strategy~$\sigma_\exists$ of \eve which never plays $o_1$ has no outcome in $A$.
  To avoid this degenerate assumptions we focus on non-restrictive assumptions: $\F o_1$ is indeed output restrictive.
  On the other hand $\F(o_1) \Leftrightarrow \F(i_1)$ is ensurable because \adam can react to make the assumption hold, no matter the strategy $\sigma_\exists$ we chose.
  We can also check that it is sufficient for $\spec$: the strategy that always play $o_1$ is winning.
  This is a fine assumption in the context of a reactive environment, but if the behaviour of the environment is independent of the output of the system, talking about $o_1$ in the assumption is not relevant.
  In that case we would prefer for instance the input-assumption $\F i_1$ which is sufficient for $\spec$ and is independent of outputs.
\end{example}
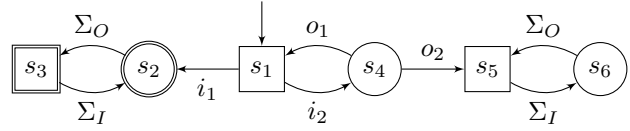
\begin{figure}[h]
  \centering{
    \begin{tikzpicture}[xscale=1.5,yscale=1.5]
      \draw (0,0) node[draw,minimum size=6mm] (S1) {$s_1$};
      \draw (-1,0) node[draw,circle,double] (S2){$s_2$};
      \draw (-2,0) node[draw,minimum size=6mm,double] (S3) {$s_3$};
      \draw (1,0) node[draw,circle] (S4) {$s_4$};
      \draw (2,0) node[draw,minimum size=6mm] (S5) {$s_5$};
      \draw (3,0) node[draw,circle] (S6) {$s_6$};

      \draw [-latex'] (0,0.6) -- (S1);
      \draw [-latex'] (S1) -- node[below] {$i_1$} (S2);
      \draw [-latex'] (S2) edge[bend right] node[above] {$\Sigma_O$} (S3);
      \draw [-latex'] (S3) edge[bend right] node[below] {$\Sigma_I$} (S2);
      \draw [-latex'] (S1) edge[bend right]  node[below] {$i_2$} (S4);
      \draw [-latex'] (S4) edge[bend right]  node[above] {$o_1$} (S1);
      \draw [-latex'] (S4) -- node[above] {$o_2$} (S5);
      \draw [-latex'] (S5) edge[bend right]  node[below] {$\Sigma_I$} (S6);
      \draw [-latex'] (S6) edge[bend right]  node[above] {$\Sigma_O$} (S5);
    \end{tikzpicture}
  }
  \caption{B\"uchi automaton recognising the language corresponding to LTL formula~$o_1 \U i_1$.}
  \label{fig:o1Ui1}
\end{figure}

\subsection{Refinement using scenarios}

As we will see in the next sections, in general there are an infinite number of incomparable optimal assumptions.
This brings the problem of choosing one among all the possibilities.
How can the algorithm know which assumption will satisfy the designer?
A solution is to get some feedback from the user in the form of \emph{scenarios}.
A scenario is a behaviour that the environment can exhibit in practice and that the strategy produced in the end should allow.
We will use scenarios provided by the user to chose a solution.

\mypara{Scenario}
Formally a \emph{scenario} is given by a language $S \subseteq \allpaths$.
A strategy $\sigma_\exists$ is \emph{compatible} with the set of scenarios $S$ when $S \subseteq \Out(\sigma_\exists)$.
Similarly $S$ is \emph{compatible} with specification $\spec$ if $S \subseteq \spec$.
A scenario $S$ is \emph{coherent} 
if there is no history $h \in \allhistories \cdot \Sigma_I$ prefixes of two words $w,w' \in S$ such that $w_{|h|+1} \ne w'_{|h|+1}$.
We say that assumption $H$ is sufficient for $\spec$ and $S$, if there exists $\sigma_\exists$ such that $S \subseteq H \cap \Out(\sigma_\exists) \subseteq \spec$.

If $S$ is not coherent then a compatible strategy in $h$ would need to allow both $w_{|h|+1}$ and $w'_{|h|+1}$ which is impossible.
Being coherent is in fact a necessary and sufficient condition for the existence of a compatible strategy.
\begin{lemma}\label{lem:scenario-compatible}
  Scenario $S$ is coherent if, and only if, there exists a strategy compatible with $S$.
  In particular, given a coherent scenario $S$ and a strategy $\sigma_\exists$, 
  the strategy $[S \rightarrow \sigma_\exists]$ defined by:
  \[[S \rightarrow \sigma_\exists] (h) = \left\{\begin{array}{ll}
  w_{h+1} & \textrm{if there is } w \in S \text{ such that } h \text{ is a prefix of } w\\
  \sigma_\exists(h) & \textrm{otherwise}
  \end{array}\right.\]
  is compatible with $S$.
\end{lemma}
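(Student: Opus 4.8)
The plan is to prove the biconditional in two directions, with the harder direction being the construction that also establishes the explicit formula for $[S \rightarrow \sigma_\exists]$.

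\medskip

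\noindent\textbf{Proof plan.}
First I would dispatch the easy direction: if there exists a strategy $\sigma_\exists$ compatible with $S$, then $S$ must be coherent. I argue by contraposition. Suppose $S$ is not coherent, so there is a history $h \in \allhistories \cdot \Sigma_I$ that is a prefix of two words $w, w' \in S$ with $w_{|h|+1} \ne w'_{|h|+1}$. Since $h$ ends with an input symbol (its length is odd), the next symbol is an output, determined by any strategy as $\sigma_\exists(h)$. Compatibility with $S$ means $S \subseteq \Out(\sigma_\exists)$, so both $w$ and $w'$ are outcomes of $\sigma_\exists$; but then $w_{|h|+1} = \sigma_\exists(h) = w'_{|h|+1}$, contradicting $w_{|h|+1} \ne w'_{|h|+1}$. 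Hence no compatible strategy exists.

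\medskip

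For the converse, I would take an arbitrary strategy $\sigma_\exists$ (one exists since the strategy space is nonempty) and show that the strategy $[S \rightarrow \sigma_\exists]$ defined in the statement is well-defined and compatible with $S$, assuming $S$ is coherent. The central point is well-definedness: the first clause prescribes $w_{|h|+1}$ whenever $h$ is a prefix of some $w \in S$, so I must check this value does not depend on the choice of $w$. Here coherence does the work: if $h$ is a prefix of both $w$ and $w'$ in $S$ and $h \in \allhistories \cdot \Sigma_I$ (the only histories at which an output is prescribed, so that $w_{|h|+1}$ is an output symbol), then coherence forbids $w_{|h|+1} \ne w'_{|h|+1}$, so the value is unambiguous. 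I would note that histories $h$ at which the strategy is queried are exactly those in $\allhistories \cdot \Sigma_I$, matching the coherence hypothesis. Once well-definedness holds, compatibility follows directly: for any $w \in S$ and any prefix $h = w_{\le 2i+1}$ of odd length, $h$ is a prefix of $w \in S$, so $[S \rightarrow \sigma_\exists](h) = w_{|h|+1} = w_{2i+2}$; thus $w$ satisfies the outcome condition $w_{2i+2} = [S\rightarrow\sigma_\exists](w_{\le 2i+1})$ for every $i$, giving $w \in \Out([S \rightarrow \sigma_\exists])$ and hence $S \subseteq \Out([S \rightarrow \sigma_\exists])$.

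\medskip

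The main obstacle is purely the well-definedness check, since the definition branches on a nondeterministic ``there is $w \in S$'' and I must confirm all valid witnesses agree. The $\omega$-regularity of $S$ plays no role in this lemma; it is a combinatorial statement about prefix structure, so I would keep the argument at the level of word indices and the parity of history lengths, taking care that the first clause is only ever invoked at histories ending in an input (where the prescribed symbol is indeed an output, as a strategy requires).
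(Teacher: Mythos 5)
Your proposal is correct and follows essentially the same route as the paper: one direction uses the fact that a compatible strategy forces $w_{|h|+1} = \sigma_\exists(h) = w'_{|h|+1}$ on common prefixes (you state it contrapositively, the paper directly), and the other uses coherence to show that the value $[S \rightarrow \sigma_\exists]$ prescribes on any prefix of a scenario word is unambiguous and agrees with that word, giving $S \subseteq \Out([S \rightarrow \sigma_\exists])$. Your explicit separation of the well-definedness check from the compatibility check is a slightly cleaner organization of the same argument the paper performs in one step.
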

\begin{proof}
  Let $S$ be a coherent scenario, we show that for any strategy $\sigma_\exists$, $[S \rightarrow \sigma_\exists]$ is compatible with $S$.
  Let $w \in S$, and $i \in \mathbb{N}$.
  The history $w_{2 \cdot i +1}$ is prefix of a scenario and therefore there is $w' \in S$ such that $[S\rightarrow \sigma_\exists](w_{2 \cdot i +1}) = w'_{2\cdot i +2}$.
  Since $S$ is coherent $w_{2\cdot i +2} = w'_{2\cdot i +2}$ and $[S\rightarrow \sigma_\exists]$ is compatible with $w_{\le 2\cdot i +2}$.
  This shows $w$ is compatible with $[S \rightarrow \sigma_\exists]$.

  Now we will show that if $S$ is compatible with some strategy $\sigma_\exists$ then $S$ is coherent.
  Let $h$ be a prefix of some words $w,w' \in S$.
  Since $S$ is compatible with $\sigma_\exists$, $w_{|h|+1} = \sigma_\exists(h) = w'_{|h|+1}$.
  Hence $S$ is coherent.
\end{proof}

\begin{example}
  Consider the example in \figurename~\ref{fig:infinity}.
  There are a lot of different possible assumptions we could chose from.
  However if we give the scenario $\Sigma_I \cdot o_2 \cdot i_2 \cdot \Sigma_O \cdot \allpaths$ then the only optimal assumption is 
 $(\Sigma_I \cdot o_2 \cdot i_2 + \Sigma_I \cdot o_1 \cdot \Sigma_I) \cdot \Sigma_O \cdot \allpaths$.
  The corresponding winning strategy consists in playing $o_2$ for the first output.
\end{example}

\begin{figure}[h]
  \centering{
    \begin{tikzpicture}
      \draw (-2,0) node[draw,minimum size] (S0) {$s_0$};
      \draw (0,0) node[draw,minimum size=6mm] (S1) {$s_1$};
      \draw (2,0) node[draw,circle] (S2){$s_2$};
      \draw (4,0.8) node[draw,double,minimum size=6mm] (S3) {$s_3$};
      \draw (4,-0.8) node[draw,minimum size=6mm] (S4) {$s_4$};
      \draw (6,0.8) node[draw,double,circle] (S5) {$s_5$};
      \draw (6,-0.8) node[draw,circle] (S6) {$s_6$};

      \draw [-latex'] (-3,0) -- (S0);
      \draw [-latex'] (S0) edge[bend left] node[above] {$i_1,i_2$} (S1);
      \draw [-latex'] (S1) edge[bend left] node[below] {$o_1$} (S0);
      \draw [-latex'] (S1) -- node[above] {$o_2$} (S2);
      \draw [-latex'] (S2) -- node[above] {$i_1$} (S4);
      \draw [-latex'] (S2) -- node[above] {$i_2$} (S3);
      \draw [-latex'] (S3) edge[bend left] node[above] {$o_1,o_2$} (S5);
      \draw [-latex'] (S5) edge[bend left] node[above] {$i_1,i_2$} (S3);
      \draw [-latex'] (S4) edge[bend left] node[above] {$o_1,o_2$} (S6);
      \draw [-latex'] (S6) edge[bend left] node[above] {$i_1,i_2$} (S4);
    \end{tikzpicture}
  }
  \caption{B\"uchi automaton for specification~$(\lnot o_2) \U (o_2 \land \X i_2)$. }
  \label{fig:infinity}
\end{figure}

We say that an assumption~$A$ is $\mathcal{C}$-optimal for $\spec$ and $S$ if it is sufficient for $\spec$ and $S$ and there is no $A' \in \mathcal{C}$ strictly less restrictive than $A$ and sufficient for $\spec$ and $S$.

\subsection{Existence of a sufficient assumption with scenario}

We show that there always exist an input assumption which is sufficient and compatible with a given scenario.
\begin{theorem}\label{thm:existence-optimal-compatible}
  Let $\spec$ be a specification and $S$ a scenario.
  If $S$ is compatible with $\spec$, then there exists a 
  input assumption 
  which is sufficient for $\spec$ and compatible with $S$.
\end{theorem}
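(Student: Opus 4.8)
The plan is to exhibit one input-assumption that works, namely the set of all plays whose input projection already occurs among the scenario words. Concretely I would set $H = \pi_I^{-1}(\pi_I(S)) = \{ w \in \allpaths \mid \pi_I(w) \in \pi_I(S)\}$. This is an input-assumption essentially by construction: membership in $H$ depends only on $\pi_I(w)$, so $\pi_I(w) = \pi_I(w')$ together with $w \in H$ immediately yields $w' \in H$. It contains the scenario, $S \subseteq H$, and it is a genuine (i.e. $\omega$-regular) assumption whenever $S$ is $\omega$-regular, since both $\pi_I$ and $\pi_I^{-1}$ preserve $\omega$-regularity — this is the standing assumption on scenarios.

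For the strategy I would first recall that a strategy compatible with $S$ exists precisely when $S$ is coherent (Lemma~\ref{lem:scenario-compatible}); coherence is in fact necessary for the conclusion, because being sufficient for $\spec$ and $S$ demands a $\sigma_\exists$ with $S \subseteq \Out(\sigma_\exists)$. Assuming it, I take any strategy $\sigma_0$ and set $\sigma_\exists = [S \rightarrow \sigma_0]$, which by Lemma~\ref{lem:scenario-compatible} satisfies $S \subseteq \Out(\sigma_\exists)$. It then remains to establish the two inclusions in $S \subseteq H \cap \Out(\sigma_\exists) \subseteq \spec$.

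The left inclusion is immediate from $S \subseteq H$ and $S \subseteq \Out(\sigma_\exists)$. The real content is the right inclusion $H \cap \Out(\sigma_\exists) \subseteq \spec$. I would take $w \in H \cap \Out(\sigma_\exists)$ and write $u = \pi_I(w)$; since $w \in H$, there is some $w_S \in S$ with $\pi_I(w_S) = u$. The key observation is the uniqueness of the outcome of a strategy against a fixed input word: from $w \in \Out(\sigma_\exists)$ with $\pi_I(w) = u$ I get $w = \Out(\sigma_\exists,u)$, and from $w_S \in S \subseteq \Out(\sigma_\exists)$ with $\pi_I(w_S) = u$ I get $w_S = \Out(\sigma_\exists,u)$. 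Hence $w = w_S \in S \subseteq \spec$, using the hypothesis $S \subseteq \spec$. This in fact shows $H \cap \Out(\sigma_\exists) = S$, so $H$ is sufficient for $\spec$ and $S$ and compatible with $S$, as required.

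The main obstacle is conceptual rather than computational: one must see that restricting the admissible inputs to those appearing in $S$, while playing a strategy that follows $S$, forces every play left in $H \cap \Out(\sigma_\exists)$ to coincide with a scenario word, which lies in $\spec$ by assumption. Uniqueness of $\Out(\sigma_\exists,u)$ is exactly what collapses $H \cap \Out(\sigma_\exists)$ down onto $S$, and coherence of $S$ is exactly what makes the scenario-following strategy $[S \rightarrow \sigma_0]$ well-defined and compatible with all of $S$. The only routine point to keep track of is that $H$ is a legitimate $\omega$-regular assumption, which follows from the regularity of $S$.
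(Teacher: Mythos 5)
Your proposal is correct and follows essentially the same route as the paper: the same assumption $H = \{ w \mid \exists w' \in S.\ \pi_I(w) = \pi_I(w')\}$, the same scenario-following strategy $[S \rightarrow \sigma_0]$ from Lemma~\ref{lem:scenario-compatible}, and the same key step that uniqueness of $\Out(\sigma_\exists,u)$ for a fixed input word forces every play in $H \cap \Out(\sigma_\exists)$ to coincide with a scenario word, hence lie in $\spec$. Your explicit remark that coherence of $S$ is implicitly required is a careful touch the paper glosses over, but it does not change the argument.
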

\begin{proof}
  Assume $S$ is compatible with $\spec$ and let $H = \{ w \mid \exists w' \in S.\ \pi_I(w) = \pi_I(w') \}$.
  It is clear that $H$ is an input assumption.
  We will prove that for any strategy~$\sigma_\exists$, $H$ is sufficient for $[S \rightarrow \sigma_\exists]$.
  Let $w \in H\cap \Out([S\rightarrow \sigma_\exists])$.
  Since $w \in H$, there is $w' \in S$ such that $\pi_I(w') = \pi_I(w)$.
  Since $[S\rightarrow \sigma_\exists]$ is compatible with $S$ (Lem.~\ref{lem:scenario-compatible}), it is compatible with $w'$, and therefore $w' = \Out(\sigma_\exists, \pi_I(w)) = w$.
  This proves that $w \in S$ and therefore $H$ is sufficient for $\spec$.
\end{proof}

\begin{theorem}\label{thm:existence-optimal-compatible-safety}
  Let $\spec$ be a specification and $S$ a scenario compatible with $\spec$. If $S$ is a safety language then there exists a safety assumption which is sufficient for $\spec$ and compatible with $S$.
\end{theorem}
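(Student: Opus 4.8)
The plan is to reuse the scenario itself as the witnessing assumption: I would simply take $A = S$. This mirrors the construction of Thm.~\ref{thm:existence-optimal-compatible}, except that there $S$ had to be widened to an input assumption (its input closure $H$), whereas here the extra hypothesis that $S$ is a safety language means $S$ already lies in the target class $\mathcal{S}$, so no widening is needed. The work then reduces to exhibiting a single strategy that is simultaneously compatible with $S$ and makes $A \Rightarrow \spec$ hold.

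First I would fix an arbitrary strategy $\sigma_\exists$ and set $\tau = [S \rightarrow \sigma_\exists]$. Since a strategy compatible with $S$ exists only when $S$ is coherent, and coherence is in any case necessary for \emph{any} assumption to be sufficient for $\spec$ and $S$, I may assume $S$ is coherent; Lem.~\ref{lem:scenario-compatible} then yields $S \subseteq \Out(\tau)$. Next I would check the defining chain of the combined notion ``sufficient for $\spec$ and $S$'' for $A = S$: from $S \subseteq \Out(\tau)$ I get $S \subseteq S \cap \Out(\tau) = A \cap \Out(\tau)$, and since $S$ is compatible with $\spec$, i.e.\ $S \subseteq \spec$, I also get $A \cap \Out(\tau) = S \cap \Out(\tau) \subseteq S \subseteq \spec$. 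Together these read $S \subseteq A \cap \Out(\tau) \subseteq \spec$, which is exactly what is required. Finally, $A = S$ is a safety assumption directly by the hypothesis that $S$ is a safety language.

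The only genuine content — and the step I would treat most carefully — is pinning down why the safety hypothesis is indispensable rather than cosmetic. Concretely I would verify that ``$S$ is a safety language'' matches the paper's membership condition for $\mathcal{S}$, namely $S = \allpaths \setminus \{ w \mid \exists k.\ w_{\le 2\cdot k} \in \Bad(S) \}$, so that $A = S$ really is an admissible safety assumption. If one dropped this hypothesis, the cheapest repair would be to replace $S$ by its safety closure, the words all of whose even prefixes are even prefixes of some scenario word; but that closure can contain outcomes of $\tau$ lying outside $\spec$, which would break the inclusion $A \cap \Out(\tau) \subseteq \spec$. Thus the safety assumption on $S$ is precisely what guarantees that the smallest safety language containing $S$ — which, when $S$ is already safety, is $S$ itself — stays inside $\spec$ along the plays of $\tau$, so that no strictly larger safety language is forced upon us.
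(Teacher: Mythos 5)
Your proof is correct and is essentially the paper's own argument: the paper takes $A = \allpaths \setminus \{ w \mid \exists k.\ w_{\le 2 \cdot k} \in \Bad(S) \}$ (the safety closure of $S$) and uses the safety hypothesis to conclude every such word lies in $S \subseteq \spec$, which is exactly your choice $A = S$, certified sufficient via the same strategy $[S \rightarrow \sigma_\exists]$. The differences are purely presentational (you additionally make explicit the coherence of $S$ needed to invoke Lem.~\ref{lem:scenario-compatible}, which the paper leaves implicit).
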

\begin{proof}
  The set of bad prefixes of $S$ define a safety language $A = \allpaths \setminus \{ w \mid \exists k.\ w_{\le 2 \cdot k} \in \Bad(S) \}$.
  Let $\sigma_\exists$ be a strategy, and $w \in A \cap \Out([S \rightarrow \sigma_\exists])$.
  There is no $i$ such that $w_{\le i} \in \Bad(S)$.
  Therefore for all $i$, $w_{\le i}$ is the prefix of some $w' \in S$.
  Hence for all $i$, $w_{\le i}$ is not in $\Bad(S)$,
  and since $S$ is a safety language this means that $w \in S$ and since $S$ is compatible with $\spec$, $w \in \spec$.  
  Thus $A$ is sufficient for $\spec$.
\end{proof}

\begin{remark}
  There are example of scenarios that are not safety languages for which there is no sufficient safety assumption.
  Consider $S$ given by $\F i_1$ and $\spec = \F i_1$. 
  A safety assumption $H$ compatible with $S$ has to contain $\F i_1$.
  Assume towards a contradiction that there is $w\not\in H$.
  Since $H$ is a safety assumption, $w$ has a bad prefix, that is there is $i$ such that $w_{\le 2 \cdot i} \cdot \allpaths \cap H = \varnothing$.
  As $w_{\le 2 \cdot i} \cdot (i_1 \cdot o_1)^\omega \in S$, this is contradiction with the fact that $H$ is compatible with $S$.
  Therefore $H = \allpaths$ and this is not sufficient for $\spec$.
\end{remark}

\section{General assumptions}\label{sec:general}

In this section we study general assumptions, without distinguishing the ones that are restrictive.
Properties established in this section will be useful when studying ensurable assumptions.

\subsection{Necessary and sufficient assumptions for a strategy}
Given a specification $\spec$ and a strategy $\sigma_\exists$, we say that an assumption $A$ is \emph{necessary} for $\sigma_\exists$ if $B$ sufficient for $\sigma_\exists$ implies $B \subseteq A$.

\def\EA{\textsf{EA}}
\begin{lemma}\label{lem:ea-sufficient}
  Given a strategy $\sigma_\exists$ of \eve, the assumption 
  \(\EA(\sigma_\exists) = \spec \cup \left((\Sigma_I \cdot \Sigma_O)^\omega \setminus \Out(\sigma_\exists) \right)\)
  is sufficient and necessary for $\sigma_\exists$.
\end{lemma}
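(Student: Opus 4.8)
The plan is to verify the two defining properties separately, and both come down to elementary set manipulation. Recall that $\EA(\sigma_\exists) = \spec \cup (\allpaths \setminus \Out(\sigma_\exists))$, and that every assumption is by definition a subset of $\allpaths$; I will use this ambient containment in the necessity argument.

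For \emph{sufficiency} I would compute $\Out(\sigma_\exists) \cap \EA(\sigma_\exists)$ directly. Distributing the intersection over the union gives
\[
\big(\Out(\sigma_\exists) \cap \spec\big) \cup \big(\Out(\sigma_\exists) \cap (\allpaths \setminus \Out(\sigma_\exists))\big).
\]
The second term is empty, since a set is disjoint from its own complement, so the whole intersection equals $\Out(\sigma_\exists) \cap \spec$, which is contained in $\spec$. This is precisely the condition for $\EA(\sigma_\exists)$ to be sufficient for $\sigma_\exists$.

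For \emph{necessity}, suppose $B$ is sufficient for $\sigma_\exists$, i.e. $\Out(\sigma_\exists) \cap B \subseteq \spec$, and take an arbitrary $w \in B$. I would case-split on whether $w$ is an outcome of $\sigma_\exists$. If $w \notin \Out(\sigma_\exists)$, then, since $w \in B \subseteq \allpaths$, we have $w \in \allpaths \setminus \Out(\sigma_\exists) \subseteq \EA(\sigma_\exists)$. If instead $w \in \Out(\sigma_\exists)$, then $w \in \Out(\sigma_\exists) \cap B \subseteq \spec \subseteq \EA(\sigma_\exists)$. In either case $w \in \EA(\sigma_\exists)$, hence $B \subseteq \EA(\sigma_\exists)$, which is exactly what the definition of ``necessary for $\sigma_\exists$'' demands.

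Since both halves are one-line set computations, there is no genuine obstacle here; the only point requiring slight care is to invoke that $B$ (like every assumption) lives inside $\allpaths$, so that an element of $B$ which is not an outcome of $\sigma_\exists$ really does land in the complement $\allpaths \setminus \Out(\sigma_\exists)$ rather than outside the universe of words.
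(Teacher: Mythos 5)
Your proposal is correct and takes essentially the same approach as the paper: sufficiency amounts to the observation that $\Out(\sigma_\exists) \cap \EA(\sigma_\exists) = \Out(\sigma_\exists) \cap \spec \subseteq \spec$ (the paper phrases your set computation as a per-word case analysis), and your necessity argument is the identical case split on whether $w \in \Out(\sigma_\exists)$. There is nothing to add.
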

\begin{proof}
  Let us prove that $\EA(\sigma_\exists)$ is sufficient for $\sigma_\exists$.
  Let $w\in \Out(\sigma_\exists)$, either $w$ is winning for condition $\spec$, or it belongs to $\Out(\sigma_\exists) \setminus \spec$.
  In this second case, it does not satisfy assumption $\EA(\sigma_\exists)$ which is thus sufficient for $\sigma_\exists$.

  Let us now assume that some assumption $A$ is sufficient for $\sigma_\exists$ and prove that then $A\subseteq \EA(\sigma_\exists)$, which shows $\EA(\sigma_\exists)$ is necessary.
  Let $w \in A$.
  If $w \in \Out(\sigma_\exists)$ then since $A$ is sufficient for $\sigma_\exists$ we must have that $w \models \spec$ and therefore $w \in \EA(\sigma_\exists)$ by definition of $\EA$.
  Otherwise $w\not\in\Out(\sigma_\exists)$, then by definition of $\EA$, $w \in \EA(\sigma_\exists)$.
\end{proof}

\begin{corollary}\label{cor:optimal-ea}
  If $A$ is optimal then there exists $\sigma_\exists$ such that $A = \EA(\sigma_\exists)$.
\end{corollary}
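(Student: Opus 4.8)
The plan is to extract a witnessing strategy from the sufficiency of $A$ and then sandwich $A$ against $\EA$ of that strategy, invoking both directions of Lemma~\ref{lem:ea-sufficient}. First I would unfold optimality: being optimal, $A$ is in particular sufficient for $\spec$, and by the definition of sufficiency this produces a strategy $\sigma_\exists$ with $\Out(\sigma_\exists) \cap A \subseteq \spec$; in other words $A$ is sufficient for $\sigma_\exists$. This particular $\sigma_\exists$ is the one I keep fixed for the rest of the argument, and it is the strategy whose $\EA$ I claim equals $A$.

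Next I would feed $\sigma_\exists$ into Lemma~\ref{lem:ea-sufficient} and read off its two consequences. The necessity half tells us that every assumption sufficient for $\sigma_\exists$ lies inside $\EA(\sigma_\exists)$; applied to $A$ this gives $A \subseteq \EA(\sigma_\exists)$. The sufficiency half tells us that $\EA(\sigma_\exists)$ is itself sufficient for $\sigma_\exists$, so that $\sigma_\exists$ witnesses $\EA(\sigma_\exists)$ as a sufficient assumption for $\spec$ as well. At this point we have two sufficient assumptions, $A$ and $\EA(\sigma_\exists)$, with $A \subseteq \EA(\sigma_\exists)$.

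Finally I would close with optimality. Since $A$ is a weakest sufficient assumption, i.e.\ maximal for language inclusion among assumptions sufficient for $\spec$, the existence of the sufficient assumption $\EA(\sigma_\exists) \supseteq A$ rules out a strict inclusion $A \subsetneq \EA(\sigma_\exists)$, which would contradict optimality. Hence $A = \EA(\sigma_\exists)$, exhibiting the required strategy.

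The only real obstacle is bookkeeping rather than mathematics: I must ensure the single strategy $\sigma_\exists$ obtained from the sufficiency of $A$ is reused in both invocations of Lemma~\ref{lem:ea-sufficient}, so that the inclusion $A \subseteq \EA(\sigma_\exists)$ and the sufficiency of $\EA(\sigma_\exists)$ for $\spec$ refer to the same object. Once that strategy is fixed, the corollary is an immediate consequence of Lemma~\ref{lem:ea-sufficient} combined with the maximality packaged into optimality.
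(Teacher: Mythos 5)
Your proposal is correct and follows essentially the same route as the paper: extract a strategy $\sigma_\exists$ witnessing the sufficiency of $A$, use the necessity half of Lemma~\ref{lem:ea-sufficient} to get $A \subseteq \EA(\sigma_\exists)$, use the sufficiency half to see $\EA(\sigma_\exists)$ is itself sufficient, and conclude equality from the maximality of $A$ among sufficient assumptions. Your closing remark about fixing a single $\sigma_\exists$ for both invocations is exactly the (implicit) bookkeeping in the paper's three-line proof, so nothing is missing.
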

\begin{proof} 
  Since $A$ is sufficient, there is $\sigma_\exists$ such that $A$ is sufficient for $\sigma_\exists$.
  Since $\EA(\sigma_\exists)$ is necessary for $\sigma_\exists$, then $A\subseteq\EA(\sigma_\exists)$.
  Since $A$ is optimal and $\EA(\sigma_\exists)$ is sufficient, we also have $\EA(\sigma_\exists)\subseteq A$. Hence the equality.
\end{proof}


\subsection{Link with strongly winning strategies}

The goal of this section is to establish a link with the notion of strongly winning strategy.
Intuitively this corresponds to the strategies that play a winning strategy whenever it is possible from the current history.
\begin{definition}[\cite{Faella09,BRS15}]
  Strategy $\sigma_\exists$ is \emph{strongly winning} when for all history $h$,
  if there exists $\sigma'_\exists$ such that $\varnothing \ne  \Out(\sigma'_\exists)  \cap h \cdot \allpaths \subseteq \spec$ then $ \Out(\sigma_\exists) \cap h \cdot \allpaths \subseteq \spec$.
  
  We will also use the notion of \emph{subgame winning} strategies (called subgame perfect in \cite{Faella09}), which are such that for all history $h$,
  if there exists $\sigma'_\exists$ such that $\Out_h(\sigma'_\exists) \subseteq \spec$ then $\Out_h(\sigma_\exists) \subseteq \spec$.
\end{definition}

\begin{lemma}[{\cite[Lem.~1]{Faella09}}]
  For every specification, there exists strongly winning and subgame winning strategies.
\end{lemma}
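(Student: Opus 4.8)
The plan is to build a single strategy that is \emph{subgame winning} and then observe that subgame winning implies strongly winning, so that one strategy witnesses both properties at once. The starting point is that, since $\spec$ is given by a deterministic parity automaton $\langle S, s_0, \Delta, \chi\rangle$, the realizability problem is literally a parity game played on $S$: square states are owned by the environment and circle states by the controller, and a play is won by the controller exactly when its colouring is accepting. Crucially, whether a suffix of a play is accepting depends only on the colours seen infinitely often, so the acceptance of a full outcome $h \cdot w'$ depends only on the run starting from the state $q_h$ reached by reading $h$. In particular, $\Out_h(\sigma'_\exists) \subseteq \spec$ holds for some $\sigma'_\exists$ if and only if $q_h$ lies in the controller's winning region $W$ of this parity game.

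First I would invoke positional (memoryless) determinacy of parity games~\cite{GTW02} to fix a single positional strategy $\tau$ for the controller that is winning from every state of $W$ simultaneously. Using determinism of the automaton, each history $h$ determines a unique state $q_h$, and I define $\sigma_\exists(h) = \tau(q_h)$ whenever $q_h$ is an output state (playing any fixed symbol elsewhere). The \emph{uniformity} of $\tau$ over the whole region $W$ is exactly what lets one strategy respond correctly from all winning histories at once, which is the heart of the construction.

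Next I would verify that $\sigma_\exists$ is subgame winning. Fix a history $h$ for which some $\sigma'_\exists$ satisfies $\Out_h(\sigma'_\exists) \subseteq \spec$; by the observation above this forces $q_h \in W$. Every word in $\Out_h(\sigma_\exists)$ agrees with $h$ up to position $|h|$ and thereafter follows $\sigma_\exists$, hence follows the positional strategy $\tau$ from the state $q_h \in W$; since $\tau$ wins from $q_h$, the suffix, and therefore the whole word (the finite prefix $h$ contributing no colour infinitely often), is accepting. Thus $\Out_h(\sigma_\exists) \subseteq \spec$, as required. Finally I would close the gap between the two notions: given $h$ with $\varnothing \ne \Out(\sigma'_\exists) \cap h \cdot \allpaths \subseteq \spec$, non-emptiness forces $h$ to be compatible with $\sigma'_\exists$, so $\Out(\sigma'_\exists) \cap h \cdot \allpaths = \Out_h(\sigma'_\exists) \subseteq \spec$; subgame winning then yields $\Out_h(\sigma_\exists) \subseteq \spec$, and since $\Out(\sigma_\exists) \cap h \cdot \allpaths \subseteq \Out_h(\sigma_\exists)$ always holds, we get $\Out(\sigma_\exists) \cap h \cdot \allpaths \subseteq \spec$. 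Hence $\sigma_\exists$ is both subgame and strongly winning.

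I expect the main obstacle to be the reduction step rather than the game-theoretic core: one must carefully justify that ``winning in the subgame rooted at $h$'' depends only on the automaton state $q_h$, not on the full history $h$, so that a single positional strategy on $W$ genuinely handles every winning history. This is precisely where the $\omega$-regular/parity structure is essential, and it is what fails for an arbitrary $\spec$ where the relevant information could not be summarised by a finite state.
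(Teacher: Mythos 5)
Your proof is correct, but it takes a genuinely different route from the paper: the paper offers no proof of this lemma at all, it simply imports it from \cite{Faella09}. Faella's result is established for games with essentially arbitrary objectives by a patching argument --- at each history where winning first becomes possible, switch to a winning strategy fixed for that history --- which uses neither finite-state assumptions nor positional determinacy. Your argument instead exploits the paper's standing hypothesis that $\spec$ is recognised by a deterministic parity automaton: prefix-independence of the parity condition reduces ``winnable from $h$'' to ``$q_h$ lies in the winning region $W$'', and positional determinacy of parity games yields one uniform strategy $\tau$ serving all winning histories simultaneously; your closing step that subgame winning implies strongly winning is purely definitional and correct (non-emptiness of $\Out(\sigma'_\exists) \cap h \cdot \allpaths$ forces $h$ to be compatible with $\sigma'_\exists$, whence $\Out(\sigma'_\exists) \cap h \cdot \allpaths = \Out_h(\sigma'_\exists)$). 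What your route buys is effectiveness: the strategy you construct is positional on the automaton, hence finite-memory, which is exactly the shape the paper needs later for its algorithmic results (cf.\ Lem.~\ref{lem:Moore-non-dominated}); what it loses is generality, since it is tied to $\omega$-regular specifications, whereas the cited lemma is not. Two small points you should make explicit for the construction to be literally well defined: the deterministic automaton must be completed (add a rejecting sink) so that $q_h$ exists for every history $h$, and $\tau$ must be fixed arbitrarily on output states outside $W$ so that $\sigma_\exists$ is total; neither affects the argument, since nothing is claimed outside $W$, and plays that start in $W$ and follow $\tau$ are winning by the choice of $\tau$.
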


\begin{theorem}\label{thm:strongly-winning-optimal}
  Let $\EA(\sigma_\exists) = \spec \cup \left((\Sigma_I \cdot \Sigma_O)^\omega \setminus \Out(\sigma_\exists) \right)$.
  If strategy $\sigma_\exists$ is strongly winning for~$\spec$, then 
  $\EA(\sigma_\exists)$ is an optimal assumption for $\spec$.
  Reciprocally, if $A$ is an optimal assumption for $\spec$, then there is a strongly winning strategy $\sigma_\exists$ such that $A = \EA(\sigma_\exists)$.
\end{theorem}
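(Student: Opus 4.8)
The plan is to prove the two directions separately, leaning on the characterisation of optimal assumptions via $\EA$ already established in Corollary~\ref{cor:optimal-ea} and Lemma~\ref{lem:ea-sufficient}.

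\medskip\noindent\textbf{Forward direction.} Suppose $\sigma_\exists$ is strongly winning for $\spec$. By Lemma~\ref{lem:ea-sufficient}, $\EA(\sigma_\exists)$ is sufficient for $\sigma_\exists$, hence sufficient for $\spec$. It remains to show it is optimal, i.e. no sufficient assumption $B$ is strictly less restrictive. Suppose towards a contradiction that $B \supsetneq \EA(\sigma_\exists)$ is sufficient, witnessed by a strategy $\tau_\exists$ with $\Out(\tau_\exists) \cap B \subseteq \spec$. Pick $w \in B \setminus \EA(\sigma_\exists)$; by definition of $\EA$ this means $w \in \Out(\sigma_\exists)$ and $w \notin \spec$. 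The key idea is to use the history along $w$ where $\tau_\exists$ must still be doing something useful: since $w \in \Out(\tau_\exists) \cup (\allpaths \setminus \Out(\tau_\exists))$, I would split on whether $w \in \Out(\tau_\exists)$. If $w \in \Out(\tau_\exists)$, then $w \in B \cap \Out(\tau_\exists) \subseteq \spec$, contradicting $w \notin \spec$. If $w \notin \Out(\tau_\exists)$, consider the longest prefix $h$ of $w$ compatible with $\tau_\exists$; at the output move just after $h$, strategy $\tau_\exists$ deviates from $w$. The point is that after this deviation $\tau_\exists$ must guarantee $\spec$ against the remaining adversary choices that keep outcomes inside $B$, so $\tau_\exists$ witnesses the existence of a strategy winning from history $h$; strong winning of $\sigma_\exists$ then forces $\Out(\sigma_\exists) \cap h\cdot\allpaths \subseteq \spec$, contradicting $w \in \Out(\sigma_\exists) \setminus \spec$ with $h$ a prefix of $w$. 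Making this last step precise — extracting a strategy that is winning from $h$ in the sense required by the definition of strongly winning, and checking the nonemptiness condition $\varnothing \ne \Out(\cdot) \cap h\cdot\allpaths$ — is the main obstacle, and it is exactly where the hypothesis that $B$ strictly contains $\EA(\sigma_\exists)$ is consumed.

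\medskip\noindent\textbf{Reverse direction.} Suppose $A$ is an optimal assumption for $\spec$. By Corollary~\ref{cor:optimal-ea} there is a strategy $\sigma_\exists$ with $A = \EA(\sigma_\exists)$. I must upgrade $\sigma_\exists$ to a strongly winning one without changing $\EA$. The natural candidate is to fix any strongly winning strategy $\tau_\exists$ (which exists by Lemma~\cite[Lem.~1]{Faella09}) and argue $\EA(\tau_\exists) = A$, or alternatively to show directly that optimality forces $\sigma_\exists$ itself to behave in a strongly winning manner. The cleanest route is: since $\EA(\tau_\exists)$ is sufficient for $\tau_\exists$ (Lemma~\ref{lem:ea-sufficient}) and, by the forward direction just proved, optimal, and since $A$ is also optimal, I would compare $A$ and $\EA(\tau_\exists)$. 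Both are $\EA$ of some strategy and both are optimal; optimality means neither is strictly contained in the other. To conclude $A = \EA(\tau_\exists)$ I would show $A \subseteq \EA(\tau_\exists)$ or the reverse using sufficiency: $A$ is sufficient, so there is no sufficient assumption strictly above it, and $\EA(\tau_\exists)$ is sufficient, giving $\EA(\tau_\exists) \subseteq A$; symmetrically the forward direction gives optimality of $\EA(\tau_\exists)$, so $A \subseteq \EA(\tau_\exists)$. Hence $A = \EA(\tau_\exists)$ with $\tau_\exists$ strongly winning, as required.

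\medskip I expect the genuine difficulty to lie entirely in the forward direction, specifically in converting the witness $w \in B \setminus \EA(\sigma_\exists)$ together with the deviating strategy $\tau_\exists$ into a strategy satisfying the precise quantifier pattern $\varnothing \ne \Out(\sigma'_\exists) \cap h\cdot\allpaths \subseteq \spec$ demanded by the definition of strongly winning. The reverse direction is essentially bookkeeping once the forward direction and Corollary~\ref{cor:optimal-ea} are in hand.
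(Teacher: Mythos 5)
Your reverse direction contains a genuine error that cannot be repaired along the route you chose. You fix an \emph{arbitrary} strongly winning strategy $\tau_\exists$ and try to prove $A = \EA(\tau_\exists)$, and the pivotal step is ``$A$ is optimal and $\EA(\tau_\exists)$ is sufficient, giving $\EA(\tau_\exists) \subseteq A$.'' This is a non sequitur: optimality of $A$ only says that no sufficient assumption \emph{strictly contains} $A$, i.e.\ $A$ is \emph{maximal}, not a maximum, among sufficient assumptions. From sufficiency of $\EA(\tau_\exists)$ you may only conclude $A \not\subset \EA(\tau_\exists)$; the two sets can be incomparable, and in general they are: Thm.~\ref{thm:infinity-optimal} exhibits a specification with infinitely many pairwise incomparable optimal assumptions $\EA(\sigma^n_\exists)$, each $\sigma^n_\exists$ strongly winning, so no single choice of $\tau_\exists$ can serve every optimal $A$. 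The paper's proof instead takes the strategy $\sigma_\exists$ witnessing sufficiency of $A$ itself, so that $A = \EA(\sigma_\exists)$ by Cor.~\ref{cor:optimal-ea}, and then proves that \emph{this} $\sigma_\exists$ is strongly winning: given a history $h$ from which some $\sigma'_\exists$ wins, it forms the switched strategy $\sigma_\exists\switch{h}{\sigma'_\exists}$, checks that $\EA(\sigma_\exists) \cup h \cdot \allpaths$ is sufficient for it, and invokes optimality of $\EA(\sigma_\exists)$ to force $h \cdot \allpaths \subseteq \EA(\sigma_\exists)$, whence $\Out(\sigma_\exists) \cap h \cdot \allpaths \subseteq \spec$. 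Using optimality against a \emph{modified} strategy is the idea your argument is missing.

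Your forward direction has the right skeleton but leaves its key step open (as you acknowledge), and the justification you sketch is too weak to close it: ``$\tau_\exists$ must guarantee $\spec$ against the adversary choices that keep outcomes inside $B$'' gives only conditional winning, whereas the definition of strongly winning requires $\varnothing \ne \Out(\tau_\exists) \cap h \cdot \allpaths \subseteq \spec$ over \emph{all} outcomes from $h$. The missing observation is this: since $h = w_{\le k}$ is the \emph{longest} prefix of $w$ compatible with $\tau_\exists$ and incompatibility can only arise at an output position, we have $\tau_\exists(h) \ne \sigma_\exists(h) = w_{k+1}$; hence \emph{every} outcome of $\tau_\exists$ extending $h$ deviates from $\sigma_\exists$ at position $k+1$, so it lies in $\allpaths \setminus \Out(\sigma_\exists) \subseteq \EA(\sigma_\exists) \subseteq B$, and sufficiency of $B$ for $\tau_\exists$ puts it in $\spec$; nonemptiness holds because $h$ is compatible with $\tau_\exists$. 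With that, strong winning of $\sigma_\exists$ contradicts $w \in \Out(\sigma_\exists) \setminus \spec$, and your forward direction is complete. For comparison, the paper proves the same direction without ever assuming $B \supseteq \EA(\sigma_\exists)$: it applies strong winning in contrapositive form (since $w$ is a losing outcome of $\sigma_\exists$ through $h$, no strategy wins from $h$), extracts a \emph{losing} outcome $w'$ of $\tau_\exists$ from $h$, and notes $w' \notin B$ by sufficiency while $w' \in \EA(\sigma_\exists)$ because $w' \notin \Out(\sigma_\exists)$; this yields $\EA(\sigma_\exists) \not\subset B$ for an arbitrary sufficient $B$, which is slightly more general than your contradiction argument.
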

\begin{proof}
  \fbox{$\Rightarrow$}
  First notice that by Lem.~\ref{lem:ea-sufficient}, $\EA(\sigma_\exists)$ is sufficient for $\sigma_\exists$ and thus sufficient for $\spec$.
  Let $A$ be an assumption which is sufficient for $\spec$, we will prove that $\EA(\sigma_\exists) \not\subset A$, which shows that $\EA(\sigma_\exists)$ is optimal.
  Let $\sigma_\exists'$ be a strategy for which $A$ is sufficient.
  If $A \setminus \EA(\sigma_\exists) = \varnothing$ then $A \subseteq \EA(\sigma_\exists$ which shows the property.
  Otherwise there exists $w \in A \setminus \EA(\sigma_\exists)$.
    Since $w \not\in \EA(\sigma_\exists)$ and $\spec \subseteq  \EA(\sigma_\exists)$, $w\not\in \spec$, i.e. $w$ is losing.
    Since $w \not\in \EA(\sigma_\exists)$ and $\allpaths \setminus \Out(\sigma_\exists) \subseteq  \EA(\sigma_\exists)$, $w\in \Out(\sigma_\exists)$, i.e. it is an outcome of $\sigma_\exists$.
    Since $A \cap \Out(\sigma'_\exists) \subseteq \spec$ and $w \in A \setminus \spec$, $w\not\in \Out(\sigma'_\exists)$, i.e. it is not an outcome of $\sigma_\exists'$.
   
  Let $w_{\le k}$ be the longest prefix of $w$ that is compatible with $\sigma_\exists'$.
  Since $\sigma_\exists$ is strongly winning and $w$ is an outcome of $\sigma_\exists$ which is losing, for all strategies $\sigma''_\exists$, either $w_{\le k} \cdot \allpaths \cap \Out(\sigma''_\exists) = \varnothing$ or $w_{\le k} \cdot \allpaths \cap \Out(\sigma''_\exists) \setminus \spec \ne \varnothing$.
  Since $w_{\le k}$ is compatible with $\sigma'_\exists$, $w_{\le k} \cdot \allpaths \cap \Out(\sigma'_\exists) \ne \varnothing$ and therefore there is an outcome~$w'$  of $\sigma_\exists'$ which is losing.
  Since $A$ is sufficient for $\sigma_\exists'$, $w' \not\in A$.
  Note that $w'$ is not an outcome of $\sigma_\exists$: $w'_{k+1} = \sigma'_\exists(w_{\le k}) \ne \sigma_\exists(w_{\le k})$.
  Hence, $w\in \allpaths \setminus \Out(\sigma_\exists) \subseteq \EA(\sigma_\exists)$.
  Therefore $w' \in \EA(\sigma_\exists) \setminus A$ which proves $\EA(\sigma_\exists) \not\subset A$.
  \medskip

  \fbox{$\Leftarrow$}
  Let $A$ be an optimal assumption for $\spec$ and $\sigma_\exists$ a strategy for which $A$ is sufficient.
  Note that by Corollary.~\ref{cor:optimal-ea}, $A \subseteq \EA(\sigma_\exists)$.
  We show that $\sigma_\exists$ is strongly winning.
  Let $h$ be a history such that there is $\sigma'_\exists$ such that $\varnothing \ne  \Out(\sigma'_\exists)  \cap h \cdot \allpaths \subseteq (\spec)$.
  We prove that $\Out(\sigma_\exists) \cap h \cdot \allpaths \subseteq \spec$ which shows the result.

  If $\Out(\sigma_\exists) \cap h \cdot \allpaths = \varnothing$ then the property is obviously satisfied.
  Otherwise there is $w \in \Out(\sigma_\exists) \cap h \cdot \allpaths$.

  Consider the strategy $\sigma_\exists\switch{h}{\sigma'_\exists}$ that plays according to $\sigma_\exists$ and when $h$ is reached shifts to $\sigma'_\exists$.
  Formally, given a history $h'$:
  \[\sigma_\exists\switch{h}{\sigma'_\exists} = \left\{\begin{array}{ll}
  \sigma'_\exists(h') & \textrm{if } h \text{ is a prefix of } h' \\
  \sigma_\exists(h') & \textrm{otherwise}
  \end{array}\right.\]
  Since $h$ is compatible with $\sigma'_\exists$ and $\Out(\sigma'_\exists)  \cap h \cdot \allpaths \subseteq \spec$, we also have $\sigma_\exists\switch{h}{\sigma'_\exists}\cap h \cdot \allpaths \subseteq \spec$.
  Moreover all outcomes that are not in $h \cdot \allpaths$ are compatible with $\sigma_\exists$.
  Hence $\EA(\sigma_\exists) \cup h \cdot \allpaths$ is sufficient for $\sigma_\exists\switch{h}{\sigma'_\exists}$.
  By optimality of assumption $\EA(\sigma_\exists)$, $\EA(\sigma_\exists) \not\subset \EA(\sigma_\exists) \cup h \cdot \allpaths$.
  Hence $h \cdot \allpaths \subseteq \EA(\sigma_\exists)$.
  Since $\EA(\sigma_\exists)$ is sufficient for $\sigma_\exists$, $\Out(\sigma_\exists) \cap h \cdot \allpaths \subseteq \spec$, which shows the result.
\end{proof}

\begin{example}\label{ex:general}
  Consider the specification $\spec = (i_1 \Rightarrow \X o_1 \land i_2 \Rightarrow \X o_2) \U o_2$, for which a B\"uchi automaton is given in \figurename~\ref{fig:i1Xo1Fo2}.
  There is no winning strategy in this game, since if the input is always $i_1$ then controller must reply by $o_1$, by the first member of the conjunction, and this prevents the third member from being satisfied.
  However, if the current history is of the form $(i_1 \cdot o_1)^* \cdot i_2$, then controller has a winning strategy which consists in replying $o_2$ and continue to imitate the inputs.
  Strongly winning strategies must therefore present this behaviour for all $(i_1 \cdot o_1)^* \cdot i_2$ that are compatible with it.

  A strongly winning strategy~$\sigma_\exists$ is the one that if the first input is $i_2$ then it plays the winning strategy we described and otherwise the first input is $i_1$ and it always play $o_2$.
  This is a strongly winning strategy since for histories beginning with $i_2$ it is winning and for any other history compatible with $\sigma_\exists$ there is no winning strategy.
  The assumption corresponding to this strategy is $\EA(\sigma_\exists) = i_2 \cdot \Sigma_O \cdot \allpaths + i_1 \cdot (\Sigma_O \cdot \Sigma_I)^* \cdot o_1 \cdot \allpaths $.
  This is indeed an optimal assumption, but it may not be what we would expect because the expression $i_1 \cdot (\Sigma_O \cdot \Sigma_I)^* \cdot o_1$ is an assumption which talks about the controller rather than the environment.
  A controller that falsifies the assumption would then be considered correct.
  Instead of this, we would prefer an assumption which talks only about the environment.
  This motivates the search for nonrestrictive assumptions.
\end{example}

\begin{figure}[h]
  \centering{
    \begin{tikzpicture}
      \draw (0,0) node[draw,minimum size=6mm] (S0) {$s_0$};
      \draw (2,-0.8) node[draw,circle] (S1) {$s_1$};
      \draw (2,0.8) node[draw,circle] (S2){$s_2$};
      \draw (4,0.8) node[draw,double,minimum size=6mm] (S3) {$s_3$};
      \draw (4,-0.8) node[draw,minimum size=6mm] (S4) {$s_4$};
      \draw (6,0.8) node[draw,double,circle] (S5) {$s_5$};
      \draw (6,-0.8) node[draw,circle] (S6) {$s_6$};

      \draw [-latex'] (-1,0) -- (S0);
      \draw [-latex'] (S0) edge[bend left] node[above] {$i_1$} (S1);
      \draw [-latex'] (S0) edge[bend left] node[above] {$i_2$} (S2);
      \draw [-latex'] (S1) edge[bend left] node[below] {$o_1$} (S0);
      \draw [-latex'] (S1) -- node[above] {$o_2$} (S4);
      \draw [-latex'] (S2) -- node[above] {$o_1$} (S4);
      \draw [-latex'] (S2) -- node[above] {$o_2$} (S3);
      \draw [-latex'] (S3) edge[bend left] node[above] {$\Sigma_I$} (S5);
      \draw [-latex'] (S5) edge[bend left] node[above] {$\Sigma_O$} (S3);
      \draw [-latex'] (S4) edge[bend left] node[above] {$\Sigma_I$} (S6);
      \draw [-latex'] (S6) edge[bend left] node[above] {$\Sigma_O$} (S4);
    \end{tikzpicture}
  }
  \caption{B\"uchi automaton for specification~$(i_1 \Rightarrow \X o_1 \land i_2 \Rightarrow \X o_2) \U o_2$.}
  \label{fig:i1Xo1Fo2}
\end{figure}
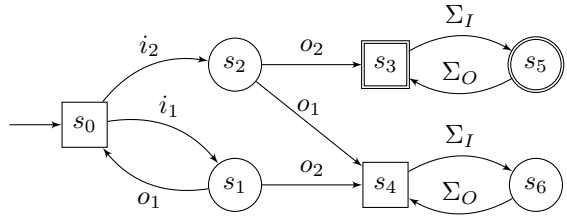


\subsection{Infinity of optimals}

\begin{theorem}\label{thm:infinity-optimal}
  There is a specification for which there are an infinite number of optimal assumptions and an infinite number of optimal ensurable assumptions.
\end{theorem}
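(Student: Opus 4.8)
The plan is to use the specification $\spec = (\lnot o_2)\ \U\ (o_2 \land \X i_2)$ of \figurename~\ref{fig:infinity} and to exhibit two explicit infinite families, one for each claim. Throughout, write $\sigma_k$ for the strategy that plays $o_1$ on its first $k-1$ outputs, then $o_2$, and then $o_1$ forever. The key structural fact I would establish first is that the controller can never force a win unless it has \emph{already} won: the moment it commits to $o_2$ the environment may answer $i_1$ and fall into the rejecting sink ($s_4$), and if it never plays $o_2$ the until is never fulfilled. Hence the only histories from which a winning strategy exists are those that have already reached the accepting sink $s_3$, from which every strategy wins trivially.

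For the infinity of optimal assumptions this fact immediately gives that every strategy, in particular every $\sigma_k$, is strongly winning (the defining implication is vacuous on non-winning histories and automatic on the accepting sink). By Thm~\ref{thm:strongly-winning-optimal} each $\EA(\sigma_k)$ is then an optimal assumption. I would finish by checking they are pairwise distinct: $\Out(\sigma_k)\setminus\spec$ is exactly the set of plays whose output projection is $o_1^{k-1}\cdot o_2\cdot o_1^\omega$ and whose input right after the $o_2$ is $i_1$, and these sets are non-empty and pairwise disjoint, so the assumptions $\EA(\sigma_k)=\allpaths\setminus(\Out(\sigma_k)\setminus\spec)$ genuinely differ.

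For the infinity of ensurable-optimal assumptions I would, for each $k$, let $E_k$ be the set of plays whose first $o_2$ sits at output position $k$ (i.e. $\pi_O$ begins with $o_1^{k-1}o_2$) and whose input immediately after that $o_2$ is $i_1$, and set $A^{(k)}=\allpaths\setminus E_k$. Sufficiency is witnessed by $\sigma_k$: each of its outcomes carries its unique $o_2$ at position $k$, so it lies in $A^{(k)}$ only when that $o_2$ is answered by $i_2$, i.e. only when it satisfies $\spec$. Ensurability follows from the environment strategy that answers $i_2$ right after the $k$-th output (and follows the current history where constrained): all its outcomes avoid $E_k$, and wherever this strategy is incompatible with a history the history already forces membership in $E_k$, so $A^{(k)}$ is unreachable there and the condition is vacuous; alternatively one checks $A^{(k)}\notin\mathcal O$ and invokes Lem.~\ref{lem:ensurable-nonrestrictive}. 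Pairwise incomparability is clear since the $E_k$ are non-empty and disjoint, so $A^{(k)}$ contains every $E_{k'}$ with $k'\neq k$ yet excludes $E_k$.

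The real obstacle, and where I would spend the effort, is proving each $A^{(k)}$ is \emph{optimal} among ensurable sufficient assumptions. Suppose $A'\supsetneq A^{(k)}$ is ensurable and sufficient, witnessed by $\sigma$. Since $A'\supseteq A^{(k)}$ contains every play with no $o_2$ and every losing play whose first $o_2$ falls at a position $\neq k$, sufficiency forces $\sigma$ to play $o_2$ eventually and never to let its first $o_2$ land off position $k$; the usual ``the environment can always answer $i_1$'' argument then pins $\sigma$ to outputting $o_1^{k-1}o_2$ on its first $k$ moves against every input. Now take $w_0\in A'\cap E_k$ and apply ensurability at the length-$(2k+1)$ prefix $h_0$ of $w_0$, which lies in the rejecting sink: every continuation of $h_0$ belongs to $E_k$, so the ensuring environment strategy $\sigma_\forall$ has all outcomes inside $A'\cap E_k$. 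Because $\sigma$ and $\sigma_\forall$ agree on the first $k$ outputs and on the inputs of $h_0$, the play $\Out(\sigma,\sigma_\forall)$ extends $h_0$, hence lies in $\Out(\sigma)\cap A'$ but outside $\spec$, contradicting sufficiency. The delicate points that make this go through are precisely that $E_k$ is closed under arbitrary continuations in the sink (so the ensuring strategy cannot escape it) and that a sufficient controller is genuinely forced to commit at position $k$.
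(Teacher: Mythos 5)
Your proof is correct, but it takes a route that differs from the paper's in an instructive way. The paper works in the game of \figurename~\ref{fig:infinity-general} (where, after the environment concedes with $i_2$, the controller must still play $o_2$ once more to win): it observes that the strategies $\sigma_\exists^n$ that delay their first $o_2$ are strongly winning, invokes Thm.~\ref{thm:strongly-winning-optimal} to obtain infinitely many incomparable optimal assumptions $\EA(\sigma_\exists^n)$, and then settles the ensurable half in a single sentence by asserting that these same assumptions are ensurable. Your first half is the same argument (strong winningness plus Thm.~\ref{thm:strongly-winning-optimal}), just played out in the simpler game of \figurename~\ref{fig:infinity}, where every strategy is strongly winning because no history outside the accepting sink admits a winning strategy. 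Your second half is genuinely different, and it is the more solid argument: you correctly refrain from claiming that the $\EA(\sigma_k)$ themselves are ensurable --- they are not, since from a history where the controller has committed to $\sigma_k$ and the environment has answered $i_1$, the controller may keep following $\sigma_k$ and the environment cannot force the play out of $\Out(\sigma_k)\setminus\spec$. The paper's one-line claim suffers exactly this defect: its identity $\EA(\sigma_\exists^n)=\allpaths\setminus(\Sigma_I\cdot o_1)^n\cdot\Sigma_I\cdot o_2\cdot i_1\cdot\Sigma_O\cdot\allpaths$ does not literally hold (the right-hand side is a strictly smaller, ensurable set whose optimality Thm.~\ref{thm:strongly-winning-optimal} does not establish; a faithful repair passes to $\EA^-(\sigma_\exists^n)$ via Lem.~\ref{lem:ea-minus-sufficient} and Thm.~\ref{thm:non-dominated-nonrestrictive}). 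You instead introduce the honestly ensurable sets $A^{(k)}=\allpaths\setminus E_k$ and prove their $\mathcal{E}$-optimality from scratch, turning the ensurability of a hypothetical strictly weaker sufficient assumption $A'$ into a contradiction with the forced shape of its witnessing strategy; this costs an explicit argument the paper never supplies, and buys a self-contained proof that does not rely on the machinery of Section~\ref{sec:ensurable}. One small repair is needed: ensurability in this paper is quantified over histories in $\allhistories=(\Sigma_I\cdot\Sigma_O)^*$, which have even length, so you cannot apply it at the odd-length prefix $h_0$ of $w_0$; apply it instead at the length-$(2k+2)$ prefix of $w_0$ --- any environment strategy compatible with that prefix must still answer $i_1$ at input position $k+1$, so $\Out(\sigma,\sigma_\forall)$ again extends $h_0$ and the identical contradiction follows.
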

\begin{proof}
Consider the game of \figurename~\ref{fig:infinity-general}.
In this game there are an infinite number of strongly winning strategies.
They must all play $o_2$ in $s_5$ but have the choice of how long to stay in $s_2$.
We write $\sigma_\exists^n$ the strategy that plays $o_1$, $n$ times before playing $o_2$ (note that we could also consider strategies that depend on the choice of input in $s_1$, but this will not be necessary here).
The sufficient hypothesis for $\sigma_\exists^n$ is $\EA(\sigma_\exists^n) = \allpaths \setminus (\Sigma_I \cdot o_1)^n \cdot \Sigma_I \cdot o_2 \cdot i_1 \cdot \Sigma_O \cdot \allpaths$.
They are incomparable and since $\sigma_\exists^n$ are strongly winning they are all optimal. 
This shows that there is an infinite number of optimal assumptions.
Note that these assumptions are ensurable and therefore there also is an infinite number of ensurable-optimal assumptions.
\end{proof}
\begin{figure}[h]
  \centering{
    \begin{tikzpicture}[xscale=1.5,yscale=1.5]
      \draw (0,0) node[draw,minimum size=6mm] (S1) {$s_1$};
      \draw (1,0) node[draw,circle] (S2){$s_2$};
      \draw (2,0) node[draw,minimum size=6mm] (S3) {$s_3$};
      \draw (3,0) node[draw,circle] (S5) {$s_5$};
      \draw (2,-1) node[draw,circle] (S7) {$s_7$};
      \draw (3,-1) node[draw,minimum size=6mm] (S8) {$s_8$};
      \draw (5,0) node[draw,circle,double] (S9) {$s_9$};
      \draw (4,0) node[draw,minimum size=6mm,double] (S10) {$s_{10}$};

      \draw [-latex'] (-1,0) -- (S1);
      \draw [-latex'] (S1) -- node[below] {$\Sigma_I$} (S2);
      \draw [-latex'] (S2) -- node[above] {$o_2$} (S3);
      \draw [-latex'] (S2) edge[bend right] node[above] {$o_1$} (S1);
      \draw [-latex'] (S3) -- node[right] {$i_1$} (S7);
      \draw [-latex'] (S3) -- node[above] {$i_2$} (S5);
      \draw [-latex'] (S5) -- node[right] {$o_1$} (S8);
      \draw [-latex'] (S5) -- node[above] {$o_2$} (S10);
      \draw [-latex'] (S7) edge[bend left] node[above] {$\Sigma_O$} (S8);
      \draw [-latex'] (S8) edge[bend left] node[below] {$\Sigma_I$} (S7);
      \draw [-latex'] (S9) edge[bend left] node[below] {$\Sigma_I$} (S10);
      \draw [-latex'] (S10) edge[bend left] node[above] {$\Sigma_O$} (S9);
    \end{tikzpicture}
  }
  \caption{A specification with an infinity of optimal assumptions.}
  \label{fig:infinity-general}
\end{figure}
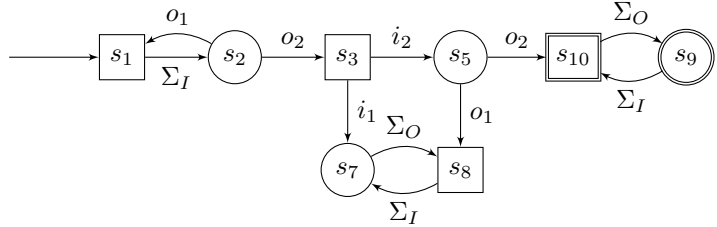

\subsection{Scenarios}

\begin{theorem}\label{thm:ea-scenario}
  If $\sigma_\exists$ is subgame winning strategy for $\spec$, then $\EA([S \rightarrow \sigma_\exists])$ is optimal for $\spec$ and $S$. 
  We recall that $\EA([S \rightarrow \sigma_\exists]) = \spec \cup \left((\Sigma_I \cdot \Sigma_O)^\omega \setminus \Out([S \rightarrow \sigma_\exists]) \right)$ and \[[S \rightarrow \sigma_\exists] (h) = \left\{\begin{array}{ll}
  w_{h+1} & \textrm{if there is } w \in S \text{ such that } h \text{ is a prefix of } w\\
  \sigma_\exists(h) & \textrm{otherwise.}
  \end{array}\right.\]
\end{theorem}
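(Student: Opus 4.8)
The plan is to mirror the forward direction of Theorem~\ref{thm:strongly-winning-optimal}, replacing the global strong-winning argument by a local, \emph{subgame}-winning one applied past the scenario. Write $\tau = [S \rightarrow \sigma_\exists]$, and assume, as optimality for $\spec$ and $S$ presupposes, that $S$ is coherent and compatible with $\spec$ (i.e.\ $S \subseteq \spec$). First I would establish sufficiency using $\tau$ itself as witness. By Lemma~\ref{lem:scenario-compatible}, $\tau$ is compatible with $S$, so $S \subseteq \Out(\tau)$; together with $S \subseteq \spec \subseteq \EA(\tau)$ this gives $S \subseteq \EA(\tau) \cap \Out(\tau)$. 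By Lemma~\ref{lem:ea-sufficient}, $\EA(\tau)$ is sufficient for $\tau$, hence $\EA(\tau) \cap \Out(\tau) \subseteq \spec$. Thus $S \subseteq \EA(\tau) \cap \Out(\tau) \subseteq \spec$, so $\EA(\tau)$ is sufficient for $\spec$ and $S$.

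For optimality I would show that no assumption $A'$ sufficient for $\spec$ and $S$ can strictly contain $\EA(\tau)$; concretely I prove $\EA(\tau) \not\subseteq A'$ whenever some $w \in A' \setminus \EA(\tau)$ exists, which contradicts $\EA(\tau) \subseteq A'$. Fix a witness $\rho$ with $S \subseteq A' \cap \Out(\rho) \subseteq \spec$. From $w \notin \EA(\tau)$ I read off that $w$ is losing ($w \notin \spec$) and $w \in \Out(\tau)$; since $A' \cap \Out(\rho) \subseteq \spec$ and $w \in A' \setminus \spec$, also $w \notin \Out(\rho)$. Let $w_{\le k}$ be the longest prefix of $w$ compatible with $\rho$; as input positions are never controlled by $\rho$, position $k+1$ is an output position with $w_{k+1} \ne \rho(w_{\le k})$. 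Let $m$ be the largest index for which $w_{\le m}$ is a prefix of a scenario word. The crux is the claim $k > m$: since $S \subseteq \Out(\rho)$, the prefix $w_{\le m}$ is $\rho$-compatible, so $m \le k$; and $m = k$ is impossible, because at the scenario prefix $w_{\le m}$ both $\tau$ and $\rho$ play the scenario move (by coherence of $S$ and $S \subseteq \Out(\rho)$), forcing $w_{m+1} = \rho(w_{\le m})$ against maximality of $k$.

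With $k > m$, every output decision after $w_{\le k}$ is taken at a non-scenario history, so $\tau$ there plays $\sigma_\exists$; hence $w \in \Out_{w_{\le k}}(\sigma_\exists)$ and therefore $\Out_{w_{\le k}}(\sigma_\exists) \not\subseteq \spec$. Now I invoke that $\sigma_\exists$ is subgame winning: by its contrapositive, no strategy $\sigma'$ satisfies $\Out_{w_{\le k}}(\sigma') \subseteq \spec$, so in particular $\Out_{w_{\le k}}(\rho) \not\subseteq \spec$ and there is a losing $w' \in \Out_{w_{\le k}}(\rho)$. Since $w_{\le k}$ is $\rho$-compatible, $w' \in \Out(\rho)$, and sufficiency of $A'$ for $\rho$ gives $w' \notin A'$. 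Finally $w'_{k+1} = \rho(w_{\le k}) \ne w_{k+1} = \tau(w_{\le k})$, so $w'$ diverges from $\tau$ at the output position $k+1$, whence $w' \notin \Out(\tau)$ and $w' \in \EA(\tau)$. Then $w' \in \EA(\tau) \setminus A'$ proves $\EA(\tau) \not\subseteq A'$, the desired contradiction.

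I expect the main obstacle to be precisely the claim $k > m$, namely that the first position where the witnessing strategy $\rho$ deviates from the losing outcome $w$ lies strictly beyond the scenario-controlled prefix of $w$. This is exactly what licenses the passage from the hybrid strategy $\tau$ to a purely subgame-winning statement about $\sigma_\exists$, and it is where the coherence of $S$ and the inclusion $S \subseteq \Out(\rho)$ are both essential: they force $\tau$ and $\rho$ to agree on the final scenario move, ruling out the boundary case $k = m$. (When $S = \varnothing$ the scenario part of $\tau$ is vacuous, $\tau = \sigma_\exists$, and the argument degenerates with $m$ undefined, so this case is strictly easier.)
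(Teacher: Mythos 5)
You prove the statement correctly and by essentially the same route as the paper: sufficiency comes from Lemmas~\ref{lem:scenario-compatible} and~\ref{lem:ea-sufficient}, and optimality from locating the first deviation of the witnessing strategy $\rho$ from the losing word $w$, using coherence of $S$ together with $S \subseteq \Out(\rho)$ to place that deviation strictly past the scenario region (the paper's step that ``$w_{\le i-1}$ is not a prefix of a word in $S$''), and then invoking subgame winning of $\sigma_\exists$ at that history. The only difference is logical packaging: the paper applies subgame winning positively (from $H \supseteq \EA([S\rightarrow\sigma_\exists])$ and sufficiency it deduces that the competing strategy wins from the deviation point, hence so does $\sigma_\exists$, contradicting $w \notin \spec$), while you apply its contrapositive to extract a losing $\rho$-outcome $w' \in \EA([S\rightarrow\sigma_\exists]) \setminus A'$ --- an interchangeable rearrangement of the same argument.
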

\begin{proof}
  The assumption $\EA([S \rightarrow \sigma_\exists])$ is sufficient for $[S \rightarrow \sigma_\exists]$ thanks to Lem.~\ref{lem:ea-sufficient}, moreover $[S \rightarrow \sigma_\exists]$ is compatible with $S$ by Lem.~\ref{lem:scenario-compatible}.

  Assume there is $H$ and $\sigma'_\exists$ such that $H$ is sufficient for $\sigma'_\exists$, $\sigma'_\exists$ compatible with $S$ and $H$ contains $\EA([S \rightarrow \sigma_\exists])$.
  Let $w \in H$ we want to prove that $w \in \EA([S \rightarrow \sigma_\exists])$ which will show that $H \subseteq \EA([S \rightarrow \sigma_\exists])$ and thus that $\EA([S \rightarrow \sigma_\exists])$ is optimal.

  Assume towards a contradiction that $w \not\in \EA([S \rightarrow \sigma_\exists])$ and therefore $w \not\in \spec$ and $w\in \Out(\sigma_\exists)$.
  Since $H$ is sufficient for $\sigma'_\exists$, $w \not\in \Out(\sigma'_\exists)$.
  Let $i$ be the first index such that $w_{i} \ne \sigma'_\exists(w_{\le i-1})$.
  Since  $\sigma_\exists$ and $\sigma'_\exists$ are compatible with $S$, $w_{\le i-1}$ is not a prefix of a word in $S$.
  As a consequence we have that $\Out([S\rightarrow \sigma_\exists]) \cap w_{\le i} \cdot \allpaths = \Out_{w_{\le i}}(\sigma_\exists)$.

  We have that $w_{\le i-1} \cdot \sigma'_\exists(w_{\le i-1}) \cdot \allpaths \subseteq \EA(\sigma_\exists)$ since these are not outcomes of $\sigma_\exists$.
  Since $H$ includes $\EA(\sigma_\exists)$, and $H$ is sufficient for $\sigma_\exists$, $\sigma'_\exists$ is winning from $w_{\le i-1}$ (i.e. $\Out(\sigma'_\exists) \cap w_{\le i-1} \cdot \Sigma_O \cdot \allpaths \subseteq \spec$).
  As $\sigma_\exists$ is subgame winning, $\Out_{w_{\le i}}(\sigma_\exists) \subseteq \spec$ and in particular $w \in \spec$.
  This shows that $\EA([S \rightarrow \sigma_\exists])$ is optimal.
\end{proof}

\subsection{Generalisation}

Assume now we are given a sufficient assumption $H$ and want to generalise it, that is find $H'$ optimal and such that $H \subseteq H'$.
We compute $\sigma_\exists$ winning for $H \Rightarrow \spec$ (i.e. such that $\Out(\sigma_\exists)\cap H \subseteq \spec$ and $\sigma'_\exists$ subgame winning for $\spec$.
We then define $\sigma'_\exists[H \setminus W \rightarrow \sigma_\exists]$ to be the function that maps $h$ to $\sigma'_\exists(h)$ if $h$ is not a prefix of a word $w \in H$ or $h \in W = \{ h \mid \Out(\sigma'_\exists)\cap h\cdot \allpaths \subseteq \spec\}$, and maps $h$ to $\sigma_\exists(h)$ otherwise.
\begin{lemma}\label{lem:generalisation}
  If $\Out(\sigma_\exists) \cap H \subseteq \spec$ and $\sigma'_\exists$ is subgame winning for $\spec$, then $\EA(\sigma'_\exists[H \setminus W \rightarrow \sigma_\exists])$ is an optimal assumption for $\spec$ and contains $H$.
\end{lemma}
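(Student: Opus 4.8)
The plan is to reduce optimality to Theorem~\ref{thm:strongly-winning-optimal} by showing that the combined strategy $\tau := \sigma'_\exists[H\setminus W\rightarrow\sigma_\exists]$ is strongly winning for $\spec$, and to prove the containment $H\subseteq\EA(\tau)$ separately. The whole argument hinges on reading $W$ as the winning region of the subgame-winning strategy $\sigma'_\exists$, i.e. the set of histories $h$ from which playing $\sigma'_\exists$ henceforth guarantees $\spec$ (phrased through $\Out_h(\sigma'_\exists)\subseteq\spec$). The first step I would record is its closure property: if $h\in W$ then following $\sigma'_\exists$ on outputs and letting the environment play any input keeps all successors in $W$; formally $h\cdot\sigma'_\exists(h)\cdot i\in W$ when $h$ is an output history, and $h\cdot i\in W$ when $h$ is an input history. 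This is immediate, since any word that agrees with the successor history and then follows $\sigma'_\exists$ already agrees with $h$ and follows $\sigma'_\exists$, hence lies in $\Out_h(\sigma'_\exists)\subseteq\spec$.

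First, $\tau$ strongly winning. Fix a history $h$ for which some $\sigma''_\exists$ satisfies $\varnothing\ne\Out(\sigma''_\exists)\cap h\cdot\allpaths\subseteq\spec$. Such a $\sigma''_\exists$ is consistent with $h$, so its outcomes through $h$ coincide with $\Out_h(\sigma''_\exists)$, giving $\Out_h(\sigma''_\exists)\subseteq\spec$; by the subgame-winning property of $\sigma'_\exists$ this yields $\Out_h(\sigma'_\exists)\subseteq\spec$, i.e. $h\in W$. Now take any $w\in\Out(\tau)\cap h\cdot\allpaths$ (if there is none the requirement is vacuous). By definition of $\tau$, at every output history $g\in W$ we have $\tau(g)=\sigma'_\exists(g)$, whether or not $g$ is a prefix of an $H$-word. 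Combining this with the closure property and an induction along $w$ started from $h\in W$ shows that every output history of $w$ from $h$ on lies in $W$ and that $w$ follows $\sigma'_\exists$ after $h$. Hence $w\in\Out_h(\sigma'_\exists)\subseteq\spec$, so $\Out(\tau)\cap h\cdot\allpaths\subseteq\spec$ and $\tau$ is strongly winning. Theorem~\ref{thm:strongly-winning-optimal} then gives that $\EA(\tau)$ is optimal for $\spec$.

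Second, the containment $H\subseteq\EA(\tau)$. Since $\EA(\tau)=\spec\cup(\allpaths\setminus\Out(\tau))$, it suffices to show that every $w\in H\cap\Out(\tau)$ satisfies $w\in\spec$; note that all prefixes of such a $w$ are prefixes of the $H$-word $w$. I distinguish whether $w$ ever enters $W$. If no output history of $w$ lies in $W$, then at every output history $\tau$ uses $\sigma_\exists$ (each such history is an $H$-prefix outside $W$), so $w\in\Out(\sigma_\exists)\cap H\subseteq\spec$ by hypothesis. Otherwise let $h$ be the first output history of $w$ in $W$; by the same closure-and-induction argument, $w$ follows $\sigma'_\exists$ from $h$ on, so $w\in\Out_h(\sigma'_\exists)\subseteq\spec$. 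In both cases $w\in\spec$, which establishes $H\subseteq\EA(\tau)$ and completes the proof.

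The step I expect to be most delicate is the handling of $W$: both the closure argument and the identification of the strong-winning premise with membership in $W$ rely on reading $W$ as the genuine winning region (equivalently, via $\Out_h(\sigma'_\exists)$) rather than as a possibly empty intersection $\Out(\sigma'_\exists)\cap h\cdot\allpaths$. This reading is exactly what guarantees that entering $W$ is never vacuous, so that the switch from $\sigma_\exists$ to $\sigma'_\exists$ occurs only at histories from which $\sigma'_\exists$ truly wins and can never strand $w$ on a hybrid path that is consistent with neither strategy; once this is pinned down, the two inductions along $w$ are routine.
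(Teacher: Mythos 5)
Your proof is correct, and it follows the same skeleton as the paper's: show that the combined strategy $\tau=\sigma'_\exists[H\setminus W\rightarrow\sigma_\exists]$ is strongly (indeed subgame) winning for $\spec$, invoke Thm.~\ref{thm:strongly-winning-optimal} for optimality, and get the containment from $\Out(\tau)\cap H\subseteq\spec$. The substantive difference is in the containment step, and there your version is more careful than the paper's own. The paper argues that for $w\in H\cap\Out(\tau)$ every prefix $w_{\le 2\cdot i+1}$ is a prefix of an $H$-word, ``so'' $\tau(w_{\le 2\cdot i+1})=\sigma_\exists(w_{\le 2\cdot i+1})$ --- but this ignores the second clause in the definition of $\tau$: once the history lies in $W$, $\tau$ plays $\sigma'_\exists$, not $\sigma_\exists$. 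Your case split (either no decision point of $w$ ever lies in $W$, so $w\in\Out(\sigma_\exists)\cap H\subseteq\spec$ by hypothesis, or $w$ enters $W$ at a first history $h$ and by the closure property follows $\sigma'_\exists$ thereafter, so $w\in\Out_h(\sigma'_\exists)\subseteq\spec$) is exactly what is needed to close that gap. Your final remark about the reading of $W$ is also on point, and it is not a cosmetic issue: with the paper's literal definition $W=\{h\mid\Out(\sigma'_\exists)\cap h\cdot\allpaths\subseteq\spec\}$, every history incompatible with $\sigma'_\exists$ belongs to $W$ vacuously; then a single deviation of $\sigma_\exists$ from $\sigma'_\exists$ (taken while outside $W$) makes all later histories vacuously members of $W$, so $\tau$ switches to $\sigma'_\exists$ at a point from which $\sigma'_\exists$ need not win, and one can build instances where the resulting losing outcome lies in $H\cap\Out(\tau)$, falsifying $H\subseteq\EA(\tau)$. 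So $W$ must be read, as you do, as $\{h\mid\Out_h(\sigma'_\exists)\subseteq\spec\}$; under that reading membership in $W$ is never vacuous, your two inductions go through, and the lemma holds.
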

\begin{proof}
  First, we show that $\sigma'_\exists[H \setminus W \rightarrow \sigma_\exists]$ is winning for $H \Rightarrow \spec$.
  Let $w \in H \cap \Out(\sigma'_\exists[H \setminus W \rightarrow \sigma_\exists])$.
  We have that for all $i \in \mathbb{N}$, $w_{\le 2 \cdot i+1}$ is prefix of a word in $H$, so $\sigma'_\exists[H \setminus W \rightarrow \sigma_\exists](w_{\le 2 \cdot i+1})=\sigma_\exists(w_{\le 2 \cdot i +1})$.
  Therefore $w \in \Out(\sigma_\exists)$ and since $\Out(\sigma_\exists)\cap H \subseteq \spec$, $w\in \spec$.

  We now show that $\sigma'_\exists[H \setminus W \rightarrow \sigma_\exists]$ is strongly winning for $\spec$.
  Let $h$ be a history such that there exists $\sigma''_\exists$ and $\varnothing \ne h \cdot \allpaths \subseteq \spec$.
  Since $\sigma'_\exists$ is subgame winning, we know that $\Out_h(\sigma'_\exists) \subseteq \spec$.
  This implies that $h \in W$, then $\sigma'_\exists[H \setminus W \rightarrow \sigma_\exists]$ plays according to $\sigma'_\exists$ for the rest of the play which means $\Out_h(\sigma'_\exists[H \setminus W \rightarrow \sigma_\exists]) \subseteq \spec$.

  Then $\sigma'_\exists[H \setminus W \rightarrow \sigma_\exists]$ is subgame winning for $\spec$ and winning for $H \Rightarrow \spec$.
  By Thm.~\ref{thm:strongly-winning-optimal}, $\EA(\sigma'_\exists[H \setminus W \rightarrow \sigma_\exists])$ is an optimal assumption for $\spec$.
  Since it is winning for $H \Rightarrow \spec$, $\Out(\sigma'_\exists[H \setminus W \rightarrow \sigma_\exists]) \cap H \subseteq \spec$.
  This means that  $\EA(\sigma'_\exists[H \setminus W \rightarrow \sigma_\exists]) = \spec \cup \left((\Sigma_I \cdot \Sigma_O)^\omega \setminus \Out(\sigma'_\exists[H \setminus W \rightarrow \sigma_\exists]) \right)$ contains $H$.
\end{proof}

\section{Ensurable assumptions}\label{sec:ensurable}

\subsection{Necessary and sufficient non-restrictive assumptions}

In this section, we show properties of assumption that are not restrictive.
As we have seen in Lem.~\ref{lem:ensurable-nonrestrictive}, this coincide with ensurable assumptions for $\omega$-regular objectives.

\def\Doomed{\textsf{Doomed}}
Given a strategy $\sigma_\exists$ of \eve, the word~$w$ is \emph{doomed for} $\sigma_\exists$ if
there is an index $k$ such that one outcome of $\sigma_\exists$ has prefix $w_{\le k}$ and all outcome of $\sigma_\exists$ that have prefix $w_{\le k}$ do not satisfy $\spec$.
We write $\Doomed(\sigma_\exists)$ for the set of words that are doomed for $\sigma_\exists$ i.e. $\Doomed(\sigma_\exists) = \{ w \mid \exists k\in 2\cdot \mathbb{N}.\ \Out(\sigma_\exists) \cap w_{\le k} \cdot \allpaths \ne \varnothing \text{ and } \Out(\sigma_\exists) \cap w_{\le k} \cdot \allpaths \cap \spec = \varnothing \}$.
We consider the assumption $\EA^-(\sigma_\exists) = \EA(\sigma_\exists) \setminus \Doomed(\sigma_\exists)$.

\begin{lemma}\label{lem:ea-minus-sufficient}
  Let $\sigma_\exists$ be a strategy, we have the following properties:
  \begin{enumerate}
  \item\label{ea-suf} $\EA^-(\sigma_\exists)$ is sufficient for $\sigma_\exists$, and nonrestrictive;
  \item\label{ea-nec} for all assumption~$A$
    sufficient for $\sigma_\exists$ and not output-restrictive, we have that $A \subseteq \EA^-(\sigma_\exists)$.
  \end{enumerate}
\end{lemma}
\begin{proof}
  \textbf{\ref{ea-suf}.}
  The fact that $\EA^-(\sigma_\exists)$ is sufficient for $\sigma_\exists$ is a consequence of the fact that it is included in $\EA(\sigma_\exists)$ which is sufficient for $\sigma_\exists$ (Lem.~\ref{lem:ea-sufficient}).
  Let us show that $\EA^-(\sigma_\exists)$ is not output-restrictive.
  Let $w \in \EA^-(\sigma_\exists)$,
  we show that for all $k$, and all strategy $\sigma'_\exists$, if $w_{\le k}$ is prefix of some outcome of $\sigma'_\exists$ then $w_{\le k}$ can be completed in a word of $\EA^-(\sigma_\exists) \cap \Out(\sigma'_\exists)$.

  Assume that $w_{\le k}$ is prefix of an outcome of $\sigma'_\exists$.
  Since $w$ is not doomed for $\sigma_\exists$, either $w_{\le k}$ is not a prefix of an outcome of $\sigma_\exists$, 
  or there is an outcome~$w'$ of $\sigma_\exists$ that has prefix $w_{\le k}$ and satisfies $\spec$.
  \begin{itemize}
  \item If $w_{\le k}$ is not a prefix of an outcome of $\sigma_\exists$, then any suffix of $w_{\le k}$ belongs to $\EA^-(\sigma_\exists)$, by selecting an arbitrary outcome of $\sigma'_\exists$ after $w_{\le k}$ we obtain a word of $\EA^-(\sigma_\exists) \cap \Out(\sigma'_\exists)$.
  \item Otherwise let $w'$ outcome of $\sigma_\exists$ that has prefix $w_{\le k}$ and satisfies $\spec$.
    The word $w'$ is not doomed: any prefix of $w'$ is prefix of a word (in particular $w'$) which is an outcome of $\sigma_\exists$ and satisfy $\spec$.
    Let $\sigma_\forall$ be a strategy of \adam such that $\Out(\sigma_\exists,\sigma_\forall) = w'$.
    Let $w'' = \Out(\sigma'_\exists,\sigma_\forall)$.
    \begin{itemize}
    \item If $w'' = w'$.
      The word $w'$ belongs to $\EA(\sigma_\exists)$ and it is not doomed, hence $w''$ belongs to $\EA^-(\sigma_\exists)$.
    \item Otherwise, we have that $w'' \not\in \Out(\sigma_\exists)$ and therefore $w'' \in \EA(\sigma_\exists)$.
      Lets show that $w''$ is not doomed.
      Let $k$ be the smallest index such that $w''_k \ne w'_k$ and let $k'$ be another index.
      If $k' < k$ then $w''_{\le k'} < w'_{\le k'}$ and since $w'$ is not doomed, there is an outcome of $\sigma_\exists$ that have prefix $w''_{\le k'}$ and satisfy $\spec$.
      If $k' \ge k$ then $w''_{\le k'}$ is prefix of no outcome of $\sigma_\exists$.
      Therefore $w''$ is not doomed for $\sigma_\exists$ and $w'' \in\EA^-(\sigma_\exists)$.
    \end{itemize}
  \end{itemize}
  This shows that $\EA^-(\sigma_\exists)$ is not output-restrictive.
  \medskip 
  
  \textbf{\ref{ea-nec}.}
  Lets now assume that $A$ is sufficient for $\sigma_\exists$ and not output-restrictive.
  Let $w\in A$, since $A$ is sufficient we have that $w \in \EA(\sigma_\exists)$ by (2).
  Let us show that $w \not\in \Doomed(\sigma_\exists)$.

  Towards a contradiction assume $w\in \Doomed(\sigma_\exists)$ and let $k$ be the first index such that for all outcome~$w'$ of $\sigma_\exists$ that have prefix $w_{\le k}$, $w' \not\models \spec$.
  Since $A$ is not output restrictive, $A \cap w_{\le k}\cdot \allpaths \ne \varnothing $, and $\Out(\sigma_\exists) \cap w_{\le k}\cdot \allpaths \ne \varnothing$, we have that $A \cap \Out(\sigma_\exists) \cap w_{\le k}\cdot \allpaths \ne \varnothing$.
  Therefore there is a word $w'$ that has prefix $w_{\le k}$ and belongs to $A \cap \Out(\sigma_\exists)$.
  By definition of $w_{\le k}$, the word $w'$ does not satisfy $\spec$.
  This contradicts the fact that $A$ is sufficient for $\spec$.
  
  This proves that $w$ is not doomed for $\sigma_\exists$ and therefore belongs to $\EA^-(\sigma_\exists)$ which shows that $A \subseteq \EA^-(\sigma_\exists)$.
\end{proof}

\begin{example}
  For the strategy $\sigma_\exists$ we defined in example~\ref{ex:general}, the set of doomed histories is $i_1 \cdot o_2 \cdot \allpaths$.
  Then $\EA^-(\sigma_\exists)$ is $i_2 \cdot \Sigma_O \cdot (\Sigma_I \cdot \Sigma_O)^\omega$ which is nonrestrictive.
  This assumption describes better than $\EA(\sigma_\exists)$ the assumptions on the environment necessary to win.
  However it is not optimal among nonrestrictive assumptions, and we will now characterise the strategies for which $\EA^-(\sigma_\exists)$ is optimal.
\end{example}

\subsection{Link with non-dominated strategies}

We use the notion of weak dominance classical in game theory.
Intuitively a strategy dominates another one if it performs at least as well against any strategy of the environment.
\begin{definition}[\cite{BRS14}]
  Strategy $\sigma_\exists$ is \emph{very weakly dominated} from history~$h$ by strategy $\sigma'_\exists$ if for all strategy $\sigma_\forall$ of \adam, $\Out_h(\sigma_\exists, \sigma_\forall) \in \spec \Rightarrow \Out_h(\sigma'_\exists, \sigma_\forall) \in \spec$.
  It is \emph{weakly dominated} from $h$ by $\sigma'_\exists$ if moreover $\sigma'_\exists$ is not very weakly dominated by $\sigma_\exists$ from $h$.
  A strategy is said \emph{non-dominated} if there is no strategy that weakly-dominates it from the empty history~$\varepsilon$.
  A strategy is \emph{non-subgame-dominated} if there is no strategy that weakly-dominates it from any history~$h$.
\end{definition}


We draw a link between optimal assumptions and non-dominated strategies.
\begin{lemma}\label{lem:subset-implies-weakly-dominated}
  If
  $\EA^-(\sigma_\exists) \subseteq \EA^-(\sigma'_\exists)$ 
  then $\sigma_\exists$ is very weakly dominated by $\sigma'_\exists$.
\end{lemma}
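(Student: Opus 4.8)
The plan is to unfold the conclusion "$\sigma_\exists$ is very weakly dominated by $\sigma'_\exists$" at the empty history $\varepsilon$: I must show that for every environment strategy $\sigma_\forall$, if $w := \Out(\sigma_\exists,\sigma_\forall) \in \spec$ then $w' := \Out(\sigma'_\exists,\sigma_\forall) \in \spec$. So fix such a $\sigma_\forall$ with $w \in \spec$ and argue by contradiction, assuming $w' \notin \spec$. The goal is to exhibit a single word lying in $\EA^-(\sigma_\exists) \setminus \EA^-(\sigma'_\exists)$, which contradicts the hypothesis $\EA^-(\sigma_\exists) \subseteq \EA^-(\sigma'_\exists)$; the witness will be $w'$ itself.

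First I would locate where $w$ and $w'$ diverge. Both follow $\sigma_\forall$ on inputs, so a short induction shows they agree at every input position as long as they agreed before; hence their first disagreement occurs at an output position, say $2m$, with $w_{\le 2m-1} = w'_{\le 2m-1}$ and $w_{2m} \ne w'_{2m}$. (If they never disagree then $w = w' \in \spec$ and there is nothing to prove.) Since $w_{2m} = \sigma_\exists(w_{\le 2m-1})$ while $w'_{2m} = \sigma'_\exists(w_{\le 2m-1})$, the word $w'$ violates the outcome condition of $\sigma_\exists$ at position $2m$, so $w' \in \allpaths \setminus \Out(\sigma_\exists) \subseteq \EA(\sigma_\exists)$.

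The core step, which I expect to be the main obstacle, is to verify that $w' \notin \Doomed(\sigma_\exists)$, and I would do this by a case split on the even index $k$ in the definition of $\Doomed$. For every even $k \le 2m-2$ we have $w'_{\le k} = w_{\le k}$, and $w$ is an outcome of $\sigma_\exists$ through this prefix that satisfies $\spec$, so it is false that all outcomes through $w'_{\le k}$ fail the specification. For every even $k \ge 2m$, any outcome of $\sigma_\exists$ passing through $w'_{\le k}$ would have to play output $w'_{2m}$ after the history $w_{\le 2m-1}$, whereas $\sigma_\exists$ plays $w_{2m} \ne w'_{2m}$ there; hence $\Out(\sigma_\exists) \cap w'_{\le k}\cdot \allpaths = \varnothing$. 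In both cases the index $k$ fails to witness doomedness, so $w' \notin \Doomed(\sigma_\exists)$ and therefore $w' \in \EA^-(\sigma_\exists)$.

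Finally I would observe that $w' = \Out(\sigma'_\exists,\sigma_\forall)$ is an outcome of $\sigma'_\exists$ with $w' \notin \spec$, so $w' \in \Out(\sigma'_\exists) \setminus \spec$, which is disjoint from $\EA(\sigma'_\exists) = \spec \cup (\allpaths \setminus \Out(\sigma'_\exists))$ and hence from $\EA^-(\sigma'_\exists)$. Combining this with the previous paragraph yields $w' \in \EA^-(\sigma_\exists) \setminus \EA^-(\sigma'_\exists)$, the desired contradiction, so in fact $w' \in \spec$. The delicate part is entirely the doomedness analysis: it is crucial that the divergence sits at an output position, since this is exactly what makes every continuation past position $2m$ leave $\Out(\sigma_\exists)$ while every prefix before it still has the winning outcome $w$ available as a witness.
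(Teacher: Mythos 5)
Your proof is correct and takes essentially the same route as the paper's: fix $\sigma_\forall$, note that the first divergence of $w = \Out(\sigma_\exists,\sigma_\forall)$ and $w' = \Out(\sigma'_\exists,\sigma_\forall)$ must occur at an output position, and use that divergence to show $w'$ is neither a losing outcome of $\sigma_\exists$ nor doomed for $\sigma_\exists$ (prefixes before the divergence inherit the winning outcome $w$; prefixes past it are compatible with no outcome of $\sigma_\exists$), contradicting $\EA^-(\sigma_\exists) \subseteq \EA^-(\sigma'_\exists)$ since $w' \notin \EA^-(\sigma'_\exists)$. If anything, your write-up is slightly more complete, since the paper dispatches the subcase $w' \in \Out(\sigma_\exists) \setminus \spec$ by reference to an earlier argument that does not actually appear in the text, whereas you rule it out explicitly.
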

\begin{proof}
  Let $\sigma_\forall$ be a strategy of \adam.
  Let $w = \Out(\sigma_\exists,\sigma_\forall)$ and  $w' = \Out(\sigma'_\exists,\sigma_\forall)$.
  We aim at showing $w \models \spec \Rightarrow w' \models \spec$.
  If $w \not\models \spec$ or $w = w'$ the implication is obvious.
  Assume now $w\models \spec$ and $w \ne w'$, we will prove that $w'\models \spec$.
  Since $w\ne w'$, let $k$ be the greatest index such that $w_{\le k} = w'_{\le k}$.
  Note that $w_k$ is not controlled by \adam, since the $w$ and $w'$ result from the same strategy of \adam and $w_{k+1} \ne w'_{k+1}$, therefore it is controlled by \eve.


  Assume towards a contradiction that $w'$ does not satisfy $\spec$.
  Then $w' \in \Out(\sigma'_\exists) \setminus \spec$ so $w' \not\in \EA(\sigma'_\exists)$ and $w' \not\in \EA^-(\sigma'_\exists)$.
  Using the hypothesis that $\EA^-(\sigma_\exists) \subseteq \EA^-(\sigma'_\exists)$ this means that $w' \not\in \EA^-(\sigma_\exists)$.
  Therefore either $w' \in \Out(\sigma_\exists) \setminus \spec $ or $w'$ is doomed for $\sigma_\exists$.
  \begin{itemize}
  \item If $w' \in \Out(\sigma_\exists) \setminus \spec$, 
    the case is solved as for the case $\EA(\sigma_\exists) \subseteq \EA(\sigma'_\exists)$.
  \item Otherwise $w'$ is doomed for $\sigma_\exists$.
    Let $k'$ be such that $w'_{\le k'}$ is an outcome of $\sigma_\exists$ and all outcome of $\sigma_\exists$ that have prefix $w_{\le k}$ do not satisfy $\spec$.
    If $k' \le k$ then since $w$ has $w'_{\le k'}$ for prefix and is an outcome of $\sigma_\exists$, it does satisfy $\spec$ which is a contradiction.
    Otherwise $k' > k$.
    Since we mentioned previously that $w_k$ is controlled by \eve, the fact that $w'_{\le k'}$ is prefix of an outcome of $\sigma_\exists$ means that $\sigma_\exists(w_{\le k}) = w'_{k+1}$ which contradicts the fact that $k$ is the greatest index such that $w_{\le k} = w'_{\le k}$.
  \end{itemize}
  Therefore $w'$ satisfies $\spec$ and the implication holds.
\end{proof}

\begin{example}
  Consider the game in \figurename~\ref{fig:counter-ea}.
  There are two kind strategies: $\sigma_\exists$ which starts by playing $o_1$ and $\sigma'_\exists$ which plays $o_2$.
  We have that $\sigma'_\exists$ weakly dominates $\sigma_\exists$.
  The assumption $\EA(\sigma_\exists)$ necessary to $\sigma_\exists$ is $\Sigma_I \cdot o_2 \cdot \allpaths$ while the assumption necessary to $\sigma'_\exists$ is $\EA(\sigma'_\exists) = \Sigma_I \cdot \Sigma_O \cdot i_1 \cdot \allpaths$.
  $\Doomed(\sigma'_\exists) = \Sigma_I \cdot o_2 \cdot i_2 \cdot \Sigma_O \cdot \allpaths$ while $\Doomed(\sigma_\exists) = \allpaths$.
  So while $\EA(\sigma_\exists)$ and $\EA(\sigma'_\exists)$ are incomparable, we indeed have $\EA^-(\sigma_\exists) \subset \EA^-(\sigma'_\exists)$.
\end{example}

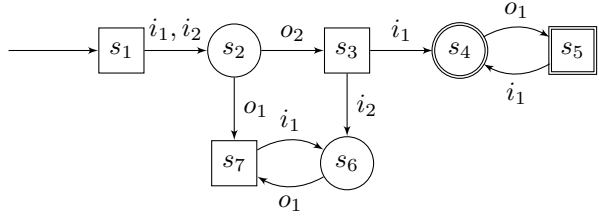
\begin{figure}[h]
  \centering{
    \begin{tikzpicture}[xscale=1.5,yscale=1.5]
      \draw (0,0) node[draw,minimum size=6mm] (S1) {$s_1$};
      \draw (1,0) node[draw,circle] (S2){$s_2$};
      \draw (2,0) node[draw,minimum size=6mm] (S3) {$s_3$};
      \draw (3,0) node[draw,circle,double] (S4){$s_4$};
      \draw (4,0) node[draw,minimum size=6mm,double] (S5) {$s_5$};
      \draw (2,-1) node[draw,circle] (S6) {$s_6$};
      \draw (1,-1) node[draw,minimum size=6mm] (S7) {$s_7$};

      \draw [-latex'] (-1,0) -- (S1);
      \draw [-latex'] (S1) -- node[above] {$i_1,i_2$} (S2);
      \draw [-latex'] (S2) -- node[above] {$o_2$} (S3);
      \draw [-latex'] (S2) -- node[right] {$o_1$} (S7);
      \draw [-latex'] (S3) -- node[above] {$i_1$} (S4);
      \draw [-latex'] (S3) -- node[right] {$i_2$} (S6);
      \draw [-latex'] (S4) edge[bend left] node[above] {$o_1$} (S5);
      \draw [-latex'] (S5) edge[bend left] node[below] {$i_1$} (S4);
      \draw [-latex'] (S6) edge[bend left] node[below] {$o_1$} (S7);
      \draw [-latex'] (S7) edge[bend left] node[above] {$i_1$} (S6);
    \end{tikzpicture}
  }
  \caption{A illustration of a game where $\Doomed$ makes a difference between restrictive and non-restrictive assumptions.}
  \label{fig:counter-ea}
\end{figure}

\begin{lemma}\label{lem:dominated-implies-doomed-subset}
  If $\sigma_\exists$ is very weakly dominated by $\sigma'_\exists$ then $\Doomed(\sigma'_\exists) \subseteq \Doomed(\sigma_\exists)$.
\end{lemma}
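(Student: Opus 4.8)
The plan is to work with the contrapositive form of very weak domination: since $\sigma_\exists$ is very weakly dominated by $\sigma'_\exists$ (from $\varepsilon$), for every environment strategy $\sigma_\forall$ we have that $\Out(\sigma'_\exists,\sigma_\forall)\not\models\spec$ implies $\Out(\sigma_\exists,\sigma_\forall)\not\models\spec$; that is, any input behaviour that defeats $\sigma'_\exists$ also defeats $\sigma_\exists$. Starting from a word $w\in\Doomed(\sigma'_\exists)$, I fix an even index $k$ witnessing this, so that $w_{\le k}$ is compatible with $\sigma'_\exists$ (some outcome of $\sigma'_\exists$ has this prefix) and every outcome of $\sigma'_\exists$ extending $w_{\le k}$ is losing. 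The goal is to exhibit an even index $k^\ast$ for which $w_{\le k^\ast}$ witnesses $w\in\Doomed(\sigma_\exists)$.

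The main case is when $w_{\le k}$ is also compatible with $\sigma_\exists$, where I claim $k^\ast=k$ works. Let $v$ be any outcome of $\sigma_\exists$ with prefix $w_{\le k}$ and let $\sigma_\forall$ be the environment strategy that replays the input stream of $v$, i.e. $\sigma_\forall(h)=v_{|h|+1}$. Then $\Out(\sigma_\exists,\sigma_\forall)=v$, while the play $\Out(\sigma'_\exists,\sigma_\forall)$ feeds $\sigma'_\exists$ exactly the inputs of $v$; since $v_{\le k}=w_{\le k}$ and $w_{\le k}$ is compatible with $\sigma'_\exists$, this play has prefix $w_{\le k}$ and is therefore a losing outcome of $\sigma'_\exists$. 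By the contrapositive of domination, $v=\Out(\sigma_\exists,\sigma_\forall)$ is losing as well. As $v$ was arbitrary and at least one such outcome exists, every outcome of $\sigma_\exists$ through $w_{\le k}$ is losing, so $w\in\Doomed(\sigma_\exists)$.

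The delicate case is when $w_{\le k}$ is not compatible with $\sigma_\exists$, i.e. $\sigma_\exists$ and $\sigma'_\exists$ already disagree on some output at an even position $\le k$. I would then let $k^\ast$ be the largest even index for which $w_{\le k^\ast}$ is compatible with $\sigma_\exists$ (this exists since $w_{\le 0}=\varepsilon$ always is), so that $\sigma_\exists(w_{\le k^\ast+1})\ne w_{k^\ast+2}$, and attempt to show that $w_{\le k^\ast}$ is a doomed witness for $\sigma_\exists$. The natural tool is again the environment strategy $\sigma_\forall$ replaying the inputs of $w$: the resulting play $\Out(\sigma'_\exists,\sigma_\forall)$ still has prefix $w_{\le k}$ and loses, so $\Out(\sigma_\exists,\sigma_\forall)$ loses too, yielding one losing outcome of $\sigma_\exists$ through $w_{\le k^\ast}$.

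I expect this last step to be the main obstacle. A doomed witness requires \emph{every} outcome of $\sigma_\exists$ through $w_{\le k^\ast}$ to lose, whereas the contrapositive of domination only controls the single play in which the environment replays the inputs of $w$; for other input choices after $w_{\le k^\ast}$ the strategy $\sigma_\exists$ could branch into a winning region that $\sigma'_\exists$ never enters, precisely because $\sigma'_\exists$ was diverted by its own output into the doomed prefix $w_{\le k}$. Closing this gap seems to need an additional argument relating the two strategies from $w_{\le k^\ast}$ onward — for instance exploiting that the divergence occurs on an output, so the behaviours of $\sigma_\exists$ and $\sigma'_\exists$ on the shared input continuations can be compared — and it is here that the domination hypothesis must be used most carefully; I would concentrate the effort on this case.
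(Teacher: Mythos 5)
Your first case (the doomed witness $w_{\le k}$ for $\sigma'_\exists$ is still compatible with $\sigma_\exists$) is correct and is essentially the paper's argument. The gap you flag in the other case is genuine, but it is worse than a missing step: the claim your plan aims for is false, so no argument can close it. With $k^\ast$ the largest \emph{even} index such that $w_{\le k^\ast}$ is compatible with $\sigma_\exists$, it need not hold that all outcomes of $\sigma_\exists$ through $w_{\le k^\ast}$ lose. Concretely, take $\spec = i_1 \cdot \Sigma_O \cdot \allpaths$ (the play wins iff the first input is $i_1$), let $\sigma'_\exists$ always output $o_1$ and $\sigma_\exists$ always output $o_2$; the two strategies very weakly dominate each other, since membership of the outcome in $\spec$ depends only on $\sigma_\forall$'s first move. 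The word $w = (i_2 \cdot o_1)^\omega$ is doomed for $\sigma'_\exists$ with witness $w_{\le 2} = i_2 \cdot o_1$, yet $k^\ast = 0$ and the outcome $i_1 \cdot o_2 \cdots$ of $\sigma_\exists$ wins: the branching into a winning region that you anticipated really happens, and no even-index witness for $\sigma_\exists$ exists here at all.

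The missing idea --- and how the paper's proof handles this case --- is to place the witness at the \emph{odd} index $k^\ast+1$, i.e.\ the longest prefix of $w$ compatible with $\sigma_\exists$, including the input $w_{k^\ast+1}$; in the paper's notation this is $w_{\le \min(d,k)}$ with $k$ the first index at which the two strategies disagree along $w$ (the paper's concluding sentence writes $w_{\le d}$, but $w_{\le\min(d,k)}$ is what its argument supports). Every outcome $v$ of $\sigma_\exists$ with prefix $w_{\le k^\ast+1}$ now differs from every outcome $u$ of $\sigma'_\exists$ with prefix $w_{\le d}$ first at the \emph{output} position $k^\ast+2$, since $v_{k^\ast+2} = \sigma_\exists(w_{\le k^\ast+1}) \ne w_{k^\ast+2} = u_{k^\ast+2}$. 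Because the separation happens on an output, a single environment strategy is well defined that plays $v$'s inputs on prefixes of $v$ and $w$'s inputs on prefixes of $u$ (the two clauses agree on the common prefixes, which all lie within $w_{\le k^\ast+1}$); against $\sigma_\exists$ it produces exactly $v$, against $\sigma'_\exists$ it produces $u$, which loses, so very weak domination forces $v$ to lose. Note that the right tool is thus not your replay strategy, which feeds the same input stream to both strategies, but the environment's freedom to react \emph{differently} after the output divergence --- that freedom is exactly what the domination hypothesis lets you exploit. Note also that the witness obtained is odd: this is fine under the paper's prose definition of doomed (``there is an index $k$''), and that reading is forced, because under the displayed formal definition of $\Doomed$ requiring $k \in 2\cdot\mathbb{N}$ the counterexample above refutes the lemma itself. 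You carefully respected that even-index constraint, and it is precisely what made your second case impossible.
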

\begin{proof}
  Let $w$ be a word doomed for $\sigma'_\exists$.
  If $w$ is not an outcome of $\sigma_\exists$, consider the first index $k$ such that $\sigma_\exists(w_{\le k}) \ne \sigma'_\exists(w_{\le k})$.
  Let $d$ be the index from which $w$ is doomed, that is $w_{\le d}$ is prefix of an outcome of $\sigma'_\exists$ and all outcome that have $w_{\le d}$ as prefix do not satisfy $\spec$.
  Let $\sigma_\forall$ be a strategy of \adam.
  Note that if $\Out(\sigma'_\exists,\sigma_\forall)$ has prefix $w_{\le d}$ then $\Out(\sigma_\exists,\sigma_\forall)$ has prefix $w_{\le \min(d,k)}$.
  Moreover if $\Out(\sigma'_\exists,\sigma_\forall)$ has prefix $w_{\le d}$ then it does not satisfy $\spec$ and since $\sigma'_\exists$ very weakly dominates $\sigma_\exists$, $\Out(\sigma_\exists,\sigma_\forall)$ does not satisfy $\spec$.
  Therefore this means $w_{\le d}$ is prefix of an outcome of $\sigma_\exists$ and all outcomes of $\sigma_\exists$ that have this prefix do not satisfy $\spec$.
  Therefore $w$ is doomed for $\sigma_\exists$.
\end{proof}

\begin{lemma}\label{lem:weakly-dominated-implies-output-restrictive-or-subset}
  If $\sigma_\exists$ is very weakly dominated by $\sigma'_\exists$ then
  $\EA^-(\sigma_\exists) \subseteq \EA^-(\sigma'_\exists)$.
\end{lemma}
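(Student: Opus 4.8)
The plan is to split along the defining decomposition $\EA^-(\sigma_\exists) = \EA(\sigma_\exists) \setminus \Doomed(\sigma_\exists)$ and treat the two parts independently. For the doomed part I would invoke Lemma~\ref{lem:dominated-implies-doomed-subset}: the hypothesis that $\sigma_\exists$ is very weakly dominated by $\sigma'_\exists$ gives $\Doomed(\sigma'_\exists) \subseteq \Doomed(\sigma_\exists)$, hence by complementation $\allpaths \setminus \Doomed(\sigma_\exists) \subseteq \allpaths \setminus \Doomed(\sigma'_\exists)$. It then suffices to prove the single inclusion $\EA^-(\sigma_\exists) \subseteq \EA(\sigma'_\exists)$: combining it with the trivial $\EA^-(\sigma_\exists) \subseteq \allpaths \setminus \Doomed(\sigma_\exists) \subseteq \allpaths \setminus \Doomed(\sigma'_\exists)$ yields $\EA^-(\sigma_\exists) \subseteq \EA(\sigma'_\exists) \setminus \Doomed(\sigma'_\exists) = \EA^-(\sigma'_\exists)$, which is the claim.

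For this remaining inclusion I would fix $w \in \EA^-(\sigma_\exists)$ and show $w \in \EA(\sigma'_\exists) = \spec \cup (\allpaths \setminus \Out(\sigma'_\exists))$. If $w \in \spec$ there is nothing to prove, so assume $w \notin \spec$. Membership of $w$ in $\EA(\sigma_\exists)$ then forces $w \notin \Out(\sigma_\exists)$, and membership in $\EA^-(\sigma_\exists)$ additionally gives $w \notin \Doomed(\sigma_\exists)$. The goal reduces to establishing $w \notin \Out(\sigma'_\exists)$, which I would attack by contradiction.

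Assuming $w \in \Out(\sigma'_\exists)$, the word $w$ is a losing outcome of $\sigma'_\exists$ that leaves $\sigma_\exists$; let $m$ be the first (output) position at which $w_m \ne \sigma_\exists(w_{\le m-1})$, so $w_{\le m-1}$ is still consistent with $\sigma_\exists$. Letting $\sigma_\forall$ be the environment strategy that replays the inputs of $w$ (formally $\sigma_\forall(h) = w_{|h|+1}$ at input positions), we have $\Out(\sigma'_\exists,\sigma_\forall) = w$, so the contrapositive of very weak domination yields $\Out(\sigma_\exists,\sigma_\forall) \notin \spec$: a losing outcome of $\sigma_\exists$ sharing the prefix $w_{\le m-1}$ with $w$. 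On the other hand, $w \notin \Doomed(\sigma_\exists)$ means that from every even prefix $w_{\le 2j}$ still consistent with $\sigma_\exists$ (i.e. $2j \le m-2$) some outcome of $\sigma_\exists$ extending $w_{\le 2j}$ satisfies $\spec$. The contradiction should come from confronting these winning continuations with the losing outcome $\Out(\sigma_\exists,\sigma_\forall)$.

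I expect this confrontation to be the main obstacle. The delicacy is that the winning $\sigma_\exists$-outcomes guaranteed by non-doomedness are only anchored at even prefixes, so they may branch away from $w$ already at the input $w_{m-1}$ rather than at the diverging output $w_m$; consequently they need not share the full history $w_{\le m-1}$ that the losing outcome does, and the argument has to track precisely which player controls each divergence and to use that $w_m$ is an \emph{output} position. This is exactly the point where the hypothesis of very weak domination—comparing $\sigma_\exists$ and $\sigma'_\exists$ against a common environment strategy, as already exploited in Lemma~\ref{lem:dominated-implies-doomed-subset}—must be used in full, rather than any coarser comparison of the two strategies.
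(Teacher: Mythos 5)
Your setup coincides with the paper's own proof: the split $\EA^-(\sigma_\exists) = \EA(\sigma_\exists) \setminus \Doomed(\sigma_\exists)$, the use of Lemma~\ref{lem:dominated-implies-doomed-subset} for the doomed part, the reduction to showing $w \notin \Out(\sigma'_\exists)$ for $w \in \EA^-(\sigma_\exists)$ with $w \notin \spec$, and the proof by contradiction. But your proposal stops exactly at the decisive step, and the route you sketch for it cannot be completed as stated. From the input-replaying strategy $\sigma_\forall$ you obtain, via the \emph{contrapositive} of very weak domination, one additional \emph{losing} outcome of $\sigma_\exists$ through $w_{\le m-1}$; that fact is perfectly consistent with the winning continuations guaranteed by non-doomedness (a strategy may have both winning and losing outcomes after the same history, nothing forbids it), so no contradiction can come from ``confronting'' the two. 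The hypothesis must instead be used in the \emph{forward} direction, against an environment strategy that makes $\sigma_\exists$ win and $\sigma'_\exists$ lose; constructing that strategy is the actual content of the paper's proof, and it is absent from yours.

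Here is the missing idea. Reading doomedness as in the paper's prose definition (``there is an index $k$'', with no parity restriction on $k$), non-doomedness of $w$ yields a \emph{winning} outcome $w'$ of $\sigma_\exists$ extending the odd prefix $w_{\le m-1}$ itself, not merely the even prefix $w_{\le m-2}$. Then $w'_m = \sigma_\exists(w_{\le m-1}) \ne w_m$, so $w$ and $w'$ first differ at an \emph{output} position, and the two-track environment strategy $\sigma_\forall$ given by $\sigma_\forall(w_{\le i}) = w_{i+1}$, $\sigma_\forall(w'_{\le i}) = w'_{i+1}$, and arbitrary elsewhere, is well defined: if $w_{\le i} = w'_{\le i}$ with $i$ even, then $i < m$ and hence $w_{i+1} = w'_{i+1}$. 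Against this $\sigma_\forall$ one gets $\Out(\sigma_\exists,\sigma_\forall) = w' \in \spec$ while $\Out(\sigma'_\exists,\sigma_\forall) = w \notin \spec$, contradicting the assumption that $\sigma_\exists$ is very weakly dominated by $\sigma'_\exists$. The obstacle you flag is real but is dissolved precisely by this any-index reading of $\Doomed$: if one only had non-doomedness at even prefixes, every winning continuation could indeed branch off at the input $w_{m-1}$, the two-track strategy could not be defined, and the desired winning continuation through $w_{\le m-1}$ need not exist at all; so staying with the contrapositive and the even anchoring, as your sketch does, cannot close the argument.
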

\begin{proof}
  Let $w \in \EA^-(\sigma_\exists) = \EA(\sigma_\exists) \setminus \Doomed(\sigma_\exists)$.
  Since $\sigma_\exists$ is very weakly dominated by $\sigma'_\exists$, we have by Lem.~\ref{lem:dominated-implies-doomed-subset} that $\Doomed(\sigma'_\exists) \subseteq \Doomed(\sigma_\exists)$.
  Therefore $w \not\in \Doomed(\sigma'_\exists)$.

  Assume towards a contradiction that $w\not\in \EA(\sigma'_\exists)$.
  Then $w\not\in \spec$ and $w\in \Out(\sigma'_\exists)$.
  Let $w_{\le k}$ be the longest such that there is an outcome of $\sigma_\exists$ that has this prefix.
  Note that $k$ is odd, because $\sigma_\exists$ does not take the decision at odd positions.
  Since $w\not\in\Doomed(\sigma_\exists)$ there is an outcome~$w'$ of $\sigma_\exists$ that has $w_{\le k}$ for prefix and that is winning.
  We define the strategy $\sigma_\forall$ of the environment such that:
  \[
  \sigma_\forall(h) = 
  \left\{
  \begin{array}{r l} 
    w_{i+1} & \text{ if } h = w_{\le i} \text{ for some }i \\
    w'_{i+1} & \text{ if } h = w'_{\le i} \text{ for some }i \\
    \sigma_\forall'(h) & \text{ otherwise}
  \end{array}
  \right.
  \]
  where $\sigma_\forall'$ is some arbitrary strategy that we have fixed.
  Note that this is well defined because if $w_{\le i} = w'_{\le i}$ and $i$ is even then $i \le k$ and therefore $w_{i+1} = w'_{i+1}$.
  We have that $\Out(\sigma_\exists,\sigma_\forall) = w'$ and $\Out(\sigma_\exists',\sigma_\forall) = w$.
  Thus $\Out(\sigma_\exists,\sigma_\forall) \in \spec$ and $\Out(\sigma_\exists',\sigma_\forall) \not\in \spec$ which contradicts the fact that $\sigma_\exists$ is very weakly dominated by $\sigma'_\exists$.
  Hence $w \in \EA(\sigma'_\exists) \setminus \Doomed(\sigma'_\exists) = \EA^-(\sigma'_\exists)$.
\end{proof}

\begin{theorem}\label{thm:non-dominated-nonrestrictive}
  Let $\spec$ be an $\omega$-regular specification.
  If $\sigma_\exists$ is a non-dominated strategy for~$\spec$, then $\EA^-(\sigma_\exists)$ is ensurable optimal for $\spec$.
  Reciprocally if $A$ is an ensurable optimal assumption for $\spec$, then there is $\sigma_\exists$ a non-dominated strategy for~$\spec$ such that $A = \EA^-(\sigma_\exists)$.
\end{theorem}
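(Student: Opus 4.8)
The plan is to reduce both directions to a single clean equivalence between weak dominance and strict inclusion of the necessary nonrestrictive assumptions, and then to transport maximality across the two families via Lem.~\ref{lem:ensurable-nonrestrictive} and Lem.~\ref{lem:ea-minus-sufficient}. First I would combine Lem.~\ref{lem:subset-implies-weakly-dominated} and Lem.~\ref{lem:weakly-dominated-implies-output-restrictive-or-subset} into the equivalence, valid for any two strategies,
\[ \EA^-(\sigma_\exists) \subseteq \EA^-(\sigma'_\exists) \iff \sigma_\exists \text{ is very weakly dominated by } \sigma'_\exists. \]
Unfolding weak dominance as very weak dominance of $\sigma_\exists$ by $\sigma'_\exists$ together with the failure of very weak dominance in the reverse sense, and applying the equivalence in both directions, I would upgrade this to
\[ \sigma_\exists \text{ is weakly dominated by } \sigma'_\exists \text{ from } \varepsilon \iff \EA^-(\sigma_\exists) \subsetneq \EA^-(\sigma'_\exists). \]
Consequently $\sigma_\exists$ is non-dominated exactly when $\EA^-(\sigma_\exists)$ is maximal for inclusion in the family $\{ \EA^-(\sigma'_\exists) \}$ ranging over all strategies.

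Next I would connect this family to the ensurable sufficient assumptions. By Lem.~\ref{lem:ensurable-nonrestrictive} ensurable coincides with nonrestrictive for $\omega$-regular $\spec$, and Lem.~\ref{lem:ea-minus-sufficient} states that each $\EA^-(\sigma'_\exists)$ is nonrestrictive and sufficient for $\sigma'_\exists$, while every nonrestrictive assumption $B$ that is sufficient for some $\sigma'_\exists$ satisfies $B \subseteq \EA^-(\sigma'_\exists)$. Thus every ensurable sufficient assumption is contained in some $\EA^-(\sigma'_\exists)$, and each $\EA^-(\sigma'_\exists)$ is itself ensurable and sufficient, so the two families have exactly the same maximal elements for inclusion. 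Since optimality in these results means maximality among sufficient assumptions of the class (no strictly larger ensurable sufficient assumption), being ensurable-optimal is the same as being a maximal element of $\{ \EA^-(\sigma'_\exists) \}$.

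For the forward direction, given a non-dominated $\sigma_\exists$, the assumption $\EA^-(\sigma_\exists)$ is ensurable and sufficient by the above. If some ensurable sufficient $B$ strictly contained it, I would pick a strategy $\sigma'_\exists$ for which $B$ is sufficient, obtaining $\EA^-(\sigma_\exists) \subsetneq B \subseteq \EA^-(\sigma'_\exists)$; the second equivalence then makes $\sigma_\exists$ weakly dominated by $\sigma'_\exists$, contradicting non-dominance, so $\EA^-(\sigma_\exists)$ is ensurable-optimal. For the converse, given an ensurable-optimal $A$, a witnessing strategy $\sigma_\exists$ for its sufficiency gives $A \subseteq \EA^-(\sigma_\exists)$ by Lem.~\ref{lem:ea-minus-sufficient}; since $\EA^-(\sigma_\exists)$ is ensurable and sufficient, maximality of $A$ forces $A = \EA^-(\sigma_\exists)$. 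If $\sigma_\exists$ were weakly dominated by some $\sigma'_\exists$, the equivalence would yield $A = \EA^-(\sigma_\exists) \subsetneq \EA^-(\sigma'_\exists)$, a strictly larger ensurable sufficient assumption, contradicting optimality; hence $\sigma_\exists$ is non-dominated.

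The main obstacle I anticipate is the strictness bookkeeping in passing from very weak dominance to weak dominance. One must use one direction of the equivalence to secure very weak dominance of $\sigma_\exists$ by $\sigma'_\exists$ and the contrapositive of the other direction to rule out very weak dominance in the reverse sense, being careful that $\EA^-(\sigma_\exists) = \EA^-(\sigma'_\exists)$ yields mutual very weak dominance and therefore no weak dominance either way, so that strict inclusion is exactly what corresponds to weak dominance. Once this correspondence is pinned down, the remaining steps are direct applications of Lem.~\ref{lem:ea-minus-sufficient} and Lem.~\ref{lem:ensurable-nonrestrictive}.
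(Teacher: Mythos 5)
Your proof is correct and takes essentially the same route as the paper: both directions rest on combining Lem.~\ref{lem:subset-implies-weakly-dominated} and Lem.~\ref{lem:weakly-dominated-implies-output-restrictive-or-subset} into the equivalence between very weak dominance and inclusion of the $\EA^-$-sets, then appealing to Lem.~\ref{lem:ea-minus-sufficient} and Lem.~\ref{lem:ensurable-nonrestrictive} exactly as the paper does. Your repackaging of weak dominance as strict inclusion $\EA^-(\sigma_\exists) \subsetneq \EA^-(\sigma'_\exists)$, and your explicit argument that an ensurable-optimal $A$ must equal $\EA^-(\sigma_\exists)$ for a witnessing strategy (a step the paper's reciprocal direction leaves implicit), are cleaner bookkeeping than the paper's inline case analysis, but not a different argument.
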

\begin{proof}
  \fbox{$\Rightarrow$}
  Let $\sigma_\exists$ be a non-dominated strategy.
  Note already that by Lem.~\ref{lem:ea-minus-sufficient}, the assumption $\EA^-(\sigma_\exists)$ is sufficient for $\spec$ and not output-restrictive.
  We now prove it is optimal.
  Let $A$ be a nonrestrictive assumption sufficient for $\spec$.
  There is a strategy $\sigma'_\exists$ such that $A$ is sufficient for $\sigma'_\exists$.
  By Lem.~\ref{lem:ea-sufficient}, $A \subseteq \EA(\sigma'_\exists)$.
  As $\sigma_\exists$ is not weakly dominated by $\sigma'_\exists$, either:
  \begin{itemize}
    \item $\sigma_\exists$ is not very weakly dominated by $\sigma'_\exists$.
      Then by Lem.~\ref{lem:subset-implies-weakly-dominated} $\EA^-(\sigma_\exists) \not\subseteq \EA^-(\sigma'_\exists)$.
    \item or $\sigma_\exists$ very weakly dominates $\sigma'_\exists$ then by Lem.~\ref{lem:weakly-dominated-implies-output-restrictive-or-subset}, $\EA^-(\sigma'_\exists) \subseteq \EA^-(\sigma_\exists)$.
  \end{itemize}
  Therefore $\EA^-(\sigma_\exists) \not\subset \EA^-(\sigma'_\exists)$ and as $A \subseteq \EA^-(\sigma'_\exists)$, $\EA^-(\sigma_\exists) \not\subset A$.
  This shows that $\EA^-(\sigma_\exists)$ is optimal among nonrestrictive assumptions for $\spec$.

  \medskip

  \fbox{$\Leftarrow$}
  Let $\sigma_\exists$ be a strategy such that $\EA^-(\sigma_\exists)$ is nonrestrictive-optimal, we show that $\sigma_\exists$ is non-dominated.
  Let $\sigma'_\exists$ be a strategy which very weakly dominates $\sigma_\exists$, we prove that $\sigma_\exists$ very weakly dominates $\sigma'_\exists$, which shows that $\sigma_\exists$ is not weakly dominated.
  By Lem.~\ref{lem:weakly-dominated-implies-output-restrictive-or-subset}, $\EA^-(\sigma_\exists) \subseteq \EA^-(\sigma'_\exists)$.
  Since $\EA^-(\sigma_\exists)$ is optimal, $\EA^-(\sigma_\exists) \not\subset \EA^-(\sigma'_\exists)$.
  Therefore $\EA^-(\sigma_\exists) = \EA^-(\sigma'_\exists)$.
  By Lem.~\ref{lem:subset-implies-weakly-dominated} this implies that $\sigma'_\exists$ is very weakly dominated by $\sigma_\exists$ which shows the property.
\end{proof}


A strategy is said \emph{dominant} if it very weakly dominates all strategies.
We obtain the following from the previous theorem.
\begin{corollary}
  If $\sigma_\exists$ is dominant then $\EA^-(\sigma_\exists)$ is the unique nonrestrictive optimal assumption.
\end{corollary}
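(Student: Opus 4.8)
The plan is to derive this corollary directly from Theorem~\ref{thm:non-dominated-nonrestrictive}, together with the comparison Lemma~\ref{lem:weakly-dominated-implies-output-restrictive-or-subset}; no new argument about outcomes or doomed words is needed, since all of that work already lives in the preceding lemmas. The first step is to observe that a dominant strategy is in particular non-dominated. Indeed, suppose some $\sigma'_\exists$ weakly dominated $\sigma_\exists$ from the empty history~$\varepsilon$; then by definition of weak dominance $\sigma'_\exists$ would not be very weakly dominated by $\sigma_\exists$, contradicting the assumption that $\sigma_\exists$ very weakly dominates \emph{every} strategy. Hence $\sigma_\exists$ is non-dominated, and the forward direction of Theorem~\ref{thm:non-dominated-nonrestrictive} immediately gives that $\EA^-(\sigma_\exists)$ is an ensurable (equivalently, nonrestrictive) optimal assumption for $\spec$. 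This settles existence and identifies the candidate for the unique optimum.

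For uniqueness, I would take an arbitrary ensurable optimal assumption $A$ and prove $A = \EA^-(\sigma_\exists)$. By the reciprocal direction of Theorem~\ref{thm:non-dominated-nonrestrictive}, there is a non-dominated strategy $\sigma'_\exists$ with $A = \EA^-(\sigma'_\exists)$. Since $\sigma_\exists$ is dominant, it very weakly dominates $\sigma'_\exists$; that is, $\sigma'_\exists$ is very weakly dominated by $\sigma_\exists$. Applying Lemma~\ref{lem:weakly-dominated-implies-output-restrictive-or-subset} to this pair yields $\EA^-(\sigma'_\exists) \subseteq \EA^-(\sigma_\exists)$, i.e. $A \subseteq \EA^-(\sigma_\exists)$.

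The last step closes the inclusion using optimality. Both $A$ and $\EA^-(\sigma_\exists)$ are sufficient and nonrestrictive, and we have just shown $A \subseteq \EA^-(\sigma_\exists)$. By optimality of $A$, no sufficient nonrestrictive assumption is strictly less restrictive than $A$, so $A \not\subset \EA^-(\sigma_\exists)$; combined with $A \subseteq \EA^-(\sigma_\exists)$ this forces $A = \EA^-(\sigma_\exists)$. As $A$ was an arbitrary ensurable optimal assumption, $\EA^-(\sigma_\exists)$ is the unique one.

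The proof is short precisely because the conceptual difficulty—translating language inclusion between assumptions into very weak domination between strategies—has already been discharged in Lemmas~\ref{lem:subset-implies-weakly-dominated}--\ref{lem:weakly-dominated-implies-output-restrictive-or-subset}. The only points that demand care are getting the direction of domination right when invoking Lemma~\ref{lem:weakly-dominated-implies-output-restrictive-or-subset} (it is $\sigma'_\exists$ that is very weakly dominated, so the inclusion runs $\EA^-(\sigma'_\exists)\subseteq\EA^-(\sigma_\exists)$ and not the reverse), and remembering that optimality here is maximality for set inclusion, so that the inclusion $A \subseteq \EA^-(\sigma_\exists)$ is exactly what lets us conclude equality.
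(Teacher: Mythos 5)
Your proof is correct and takes essentially the same route as the paper: existence via the forward direction of Thm.~\ref{thm:non-dominated-nonrestrictive}, and uniqueness by representing an arbitrary ensurable optimal $A$ as $\EA^-(\sigma'_\exists)$ through the reciprocal direction, applying Lem.~\ref{lem:weakly-dominated-implies-output-restrictive-or-subset} to the dominated strategy to obtain $A \subseteq \EA^-(\sigma_\exists)$, and closing the inclusion by optimality of $A$. The only addition is that you spell out why dominance implies non-dominance, which the paper asserts without detail.
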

\begin{proof}
  A strategy that is dominant is non-dominated and therefore the fact that $\EA^-(\sigma_\exists)$ is nonrestrictive optimal is a consequence of Thm.~\ref{thm:non-dominated-nonrestrictive}.
  To prove uniqueness assume $A$ is a nonrestrictive optimal assumption.
  By Thm.~\ref{thm:non-dominated-nonrestrictive}, there is $\sigma'_\exists$ such that $A = \EA^-(\sigma'_\exists)$.
  Since $\sigma_\exists$ is dominant, $\sigma'_\exists$ is very weakly dominated by $\sigma_\exists$.
  By Lem.~\ref{lem:weakly-dominated-implies-output-restrictive-or-subset}, $\EA^-(\sigma'_\exists) \subseteq \EA^-(\sigma_\exists)$ and since $A$ is optimal we also have $\EA^-(\sigma_\exists) \subseteq \EA^-(\sigma'_\exists)$ which proves equality and thus uniqueness of the optimal.
\end{proof}

\subsection{Computation of optimal ensurable assumptions}

In parity games, deciding the existence of a winning strategy from a certain state can be done \NP $\cap$ \co\NP~\cite{EJ91}. 
We will show that if we have available an algorithm for solving parity games then the remaining of the operations to obtain optimal assumptions can be performed efficiently.
We first construct a representation of one arbitrary non-dominated strategy.
Our construction is based on the notion of memoryless strategies: given $\spec$ as a parity automaton, a strategy is said \emph{memoryless} if it only depends on the current state of the automaton, in other words it can be implemented with a Moore machine which has the same structure than the given automaton.

\begin{lemma}\label{lem:Moore-non-dominated}
  Given a parity automaton and a memoryless strategy $\sigma_\exists$ which ensures we are winning from each state in the winning region, we can compute in polynomial time 
  a Moore machine implementing a memoryless non-dominated strategy $\sigma'_\exists$.
\end{lemma}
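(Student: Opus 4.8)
The plan is to build $\sigma'_\exists$ region by region, exploiting that a memoryless strategy winning on the antagonistic winning region is already supplied. First I would fix the winning region $W$ of the controller (from which the given $\sigma_\exists$ forces $\spec$) and then compute the \emph{cooperative} winning region $C \supseteq W$, namely the set of automaton states from which some play satisfies the parity condition when both players collaborate. Both $C$ and a witnessing structure are computable in polynomial time, since deciding cooperative winning is a one-player parity problem. On $C \setminus W$ I would then compute a memoryless controller strategy $\sigma_c$ that stays inside $C$ under a cooperating environment and makes progress toward a fixed accepting cooperative cycle, so that from every state of $C$ there remains an environment response producing a run in $\spec$. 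The strategy $\sigma'_\exists$ plays $\sigma_\exists$ on $W$, plays $\sigma_c$ on $C \setminus W$, and plays arbitrarily on the remaining (doomed) states. Every choice depends only on the current automaton state, so $\sigma'_\exists$ is memoryless and is implemented by the Moore machine whose states and transitions are those of the automaton and whose output function reads off $\sigma'_\exists$ at the controller states; the whole construction is polynomial.

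The one structural fact I would isolate first is that if $s$ is a controller state in $C \setminus W$ then \emph{no} successor of $s$ lies in $W$: because the automaton is deterministic, a single controller move reaching $W$ would already witness $s \in W$. Hence at any such $s$ both players are forced to move to states outside $W$, and this is exactly what lets the domination argument close.

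For non-dominance I would argue by contradiction. Suppose $\sigma''_\exists$ weakly dominates $\sigma'_\exists$ from the empty history; very weak domination says that every environment strategy beaten by $\sigma'_\exists$ is also beaten by $\sigma''_\exists$, while strictness supplies a witness $\sigma_\forall$ with $\Out(\sigma''_\exists,\sigma_\forall) \in \spec$ but $w' = \Out(\sigma'_\exists,\sigma_\forall) \notin \spec$. As both outcomes use the same $\sigma_\forall$, they first differ at an output, i.e.\ at a controller state $s$ reached by a common history $h$. Since $\sigma''_\exists$ wins, $s \in C$; since $\sigma'_\exists$ loses along $w'$ yet plays the winning strategy throughout $W$, necessarily $s \notin W$, so $s \in C \setminus W$. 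I would then construct a single environment strategy that copies $\sigma_\forall$ up to $h$ and afterwards splits according to the output played at $s$ (the two branches are distinguished by their histories, which record that output): on the branch following $\sigma'_\exists$'s move $\sigma_c(s)$ it cooperates to realise an accepting run, which exists because $\sigma_c$ keeps the play in $C$ with an accepting continuation available; on the branch following $\sigma''_\exists$'s move, which reaches some successor $t'' \notin W$, it follows an environment strategy winning for $\allpaths \setminus \spec$ from $t''$, which exists by determinacy of $\omega$-regular games~\cite[Corollary~2.10]{GTW02}. This environment strategy is beaten by $\sigma'_\exists$ but not by $\sigma''_\exists$, contradicting the win-set inclusion; hence no weak dominator exists and $\sigma'_\exists$ is non-dominated.

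The hard part is the construction of $\sigma_c$: I must produce a \emph{memoryless} controller strategy on $C$ for which a cooperating environment can always still force an accepting run, without the resulting memoryless play getting trapped in a non-accepting loop. This needs a progress/ranking argument on the one-player parity game underlying $C$ (at each controller state, selecting an edge that decreases distance to the fixed accepting cooperative cycle while staying in $C$) together with a verification that the cooperating environment strategy used above can follow that ranking all the way to acceptance. Everything else — determinacy, the region computations, and the memorylessness of the supplied winning strategy — is standard and polynomial.
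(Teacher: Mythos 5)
Your construction of $\sigma'_\exists$ coincides with the paper's: play the supplied memoryless winning strategy on the winning region $W$, a memoryless ``cooperatively winning'' strategy on $C \setminus W$, and anything on the doomed states, all read off state-by-state into a Moore machine. Where you genuinely diverge is the verification that the result is non-dominated. The paper invokes Berwanger's value characterization of non-dominated strategies (\cite[Lem.~9]{berwanger07}) --- winning from histories ending in $W$, some winning outcome from histories ending in $C \setminus W$, arbitrary on doomed histories --- and just checks the three conditions region by region. You instead prove non-dominance from first principles: assuming a weak dominator $\sigma''_\exists$, you locate the first divergence at a controller state $s$, show $s \in C \setminus W$, use your (correct) structural observation that no controller move from such an $s$ can enter $W$, and build a single splitting environment strategy --- cooperative after $\sigma'_\exists$'s move, spoiling after $\sigma''_\exists$'s move, the spoiler existing by determinacy of $\omega$-regular games --- which $\sigma'_\exists$ beats and $\sigma''_\exists$ does not, contradicting very weak domination. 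This is in effect a self-contained proof, specialized to parity games, of the direction of Berwanger's lemma that the paper uses; it costs a page of bookkeeping but removes the external dependency. One simplification you should make: the step you call ``the hard part'' --- producing a memoryless $\sigma_c$ on $C \setminus W$ under which an accepting cooperative continuation always remains available --- needs no bespoke ranking argument. Declare every state controllable, as the paper does, and apply memoryless determinacy to the resulting one-player parity game, which is solvable in polynomial time; the controller part of that one-player strategy is your $\sigma_c$, and its environment part is exactly the cooperative response your splitting strategy needs (with the harmless proviso that if the cooperative play ever enters $W$, acceptance is guaranteed anyway because $\sigma'_\exists$ there follows $\sigma_\exists$).
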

\begin{proof}
  By the characterisation of Berwanger~\cite[Lem.~9]{berwanger07}, a non-dominated strategy $\sigma'_\exists$ has to be such that for all history~$h$, the value of $h$ is equal to the value of $\sigma'_\exists$ for $h$, for parity games this means:
  \begin{enumerate}
  \item if $h$ ends with a winning state, then $\sigma'_\exists$ should be winning from $h$;
  \item if $h$ ends in a state from which there is no winning path, $\sigma'_\exists$ can behave arbitrarily;
  \item if $h$ ends in another state, then there should exist one outcome of $\sigma'_\exists$ from $h$ which is wining.
  \end{enumerate}
  We define $\sigma'_\exists$ to behave like $\sigma_\exists$ in winning states, which already ensures that condition 1 and 2 are satisfied.
  For states which are not in the winning region but from which there is a winning path, we consider now that all states are controllable and compute a memoryless winning strategy.
  Such a strategy exists because from all these states there exists a winning path, and it can be computed in polynomial time because there is only one player.
  We now set $\sigma'_\exists$ to behave like this strategy on states which are not in the winning region.
  It ensures that one of its outcome is winning.
  Hence satisfies all the condition.  
  The corresponding Moore machine is obtained by removing from $\G$ the transitions that are not taken by $\sigma'_\exists$.
\end{proof}

By combining this construction with a parity automaton for $\spec$ it is possible to compute $\EA(\sigma'_\exists) = \spec \cup (\allpaths \setminus \Out(\sigma_\exists))$.

\begin{lemma}\label{lem:Moore-to-Out}
  Given a Moore machine representing a strategy $\sigma'_\exists$, we can construct in polynomial time a safety automaton recognising $\Out(\sigma_\exists)$.
\end{lemma}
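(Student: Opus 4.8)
The plan is to construct the safety automaton directly from the Moore machine $M = \langle S_I, S_O, s_0, \delta, G \rangle$ representing $\sigma'_\exists$. The key observation is that a word $w$ belongs to $\Out(\sigma'_\exists)$ exactly when every output symbol produced along the run of $M$ on $w$ matches what $M$ prescribes; the only way to fall out of $\Out(\sigma'_\exists)$ is for the adversary (or the word itself) to disagree at an output position. Since deviations can only happen at output states and are irrevocable, this is manifestly a safety condition, and the bad prefixes are precisely those finite histories where the word has already contradicted $G$ at some output state.

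First I would define the automaton $\langle S \cup \{\bot\}, s_0, \Delta', \chi\rangle$ whose state space reuses the Moore machine states together with a single absorbing sink $\bot$. For input states $s \in S_I$ and every input signal $a \in \Sigma_I$, I add the transition $(s, a, \delta(s,a))$, faithfully copying $M$; the controller has no say at input positions, so no word is ever rejected there. For output states $s \in S_O$, I add the single transition $(s, G(s), \delta(s, G(s)))$ following the prescribed output, and for every other output signal $b \ne G(s)$ I add $(s, b, \bot)$, routing any disagreement to the sink. The sink $\bot$ loops on all signals, i.e. $(\bot, c, \bot)$ for every $c \in \Sigma_I \cup \Sigma_O$. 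I set the colouring so that all original states receive colour $0$ and $\bot$ receives colour $1$, which, since $\bot$ is absorbing, makes this a safety automaton in the sense defined earlier (states of colour $1$ are absorbing).

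The correctness argument then splits into the two inclusions. If $w \in \Out(\sigma'_\exists)$, then by the definition of $\Out$ every output symbol $w_{2i+2}$ equals $\sigma'_\exists(w_{\le 2i+1}) = G(s_{2i+1})$, so the run never takes a transition into $\bot$, stays forever in colour $0$, and is accepted. Conversely, if $w$ is accepted then the run avoids $\bot$ (otherwise it is trapped there and sees colour $1$ infinitely often), which forces $w_{2i+2} = G(s_{2i+1})$ at every output position, and since the Moore machine computes $\sigma'_\exists$ by construction, $w$ is an outcome of $\sigma'_\exists$. The polynomial time bound is immediate: the automaton has $|S| + 1$ states and at most $|S| \cdot |\Sigma_I \cup \Sigma_O|$ transitions, all computable by a single pass over $M$.

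I do not expect a genuine obstacle here, as the construction is essentially a routine product-free transformation; the one point requiring care is bookkeeping the parity-position convention, namely that square (input) states of $M$ must read $\Sigma_I$ and circle (output) states must read $\Sigma_O$, so that the automaton respects the alternation structure of $\allpaths$ and the sink transitions are added only for the genuinely forbidden output symbols. A secondary subtlety is confirming that the resulting object meets the paper's precise definition of a safety automaton, i.e. that it is a B\"uchi automaton whose colour-$1$ state is absorbing; assigning colour $1$ to $\bot$ alone and colour $0$ elsewhere discharges this, since the unique rejecting state is the absorbing sink.
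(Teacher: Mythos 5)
Your proof is correct and follows essentially the same approach as the paper: simulate the Moore machine and redirect any output symbol that disagrees with $G$ to an absorbing rejecting sink $\bot$ of colour $1$, yielding a safety automaton of linear size. The only cosmetic difference is bookkeeping: the paper adds intermediate states $S \times \Sigma_I$ to record the last input, whereas you exploit the existing partition into input states $S_I$ and output states $S_O$, which is arguably cleaner given the paper's Moore machine definition.
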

\begin{proof}
  To obtain this automaton we add to the Moore machine states of the form $S\times \Sigma_I$ and a rejecting state $\bot$ which is absorbing.
  The transition function $\delta'$ is such that for all $s\in S$ and $i \in \Sigma_I$, $\delta'(s, i) = (s,i)$; if $G(s,i) = o$ and $\delta(s,i) = s'$ then $\delta'((s,i),o) = s'$; if $G(s,i) \ne o$ then $\delta'((s,i),o) = \bot$.
  This defines a safety automaton which recognises $\Out(\sigma_\exists)$.
\end{proof}

\begin{lemma}\label{lem:auto-EA}
  Given $\spec$ as a parity automaton and a strategy $\sigma'_\exists$, we can compute in polynomial time a parity automaton that recognises $\EA(\sigma'_\exists)$.
\end{lemma}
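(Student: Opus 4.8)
The plan is to reduce the construction to the union of a parity language with a reachability language, and then to exploit the fact that the reachability target is absorbing so that this union stays inside the parity class at the cost of only a synchronous product.

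First, since $\sigma'_\exists$ is represented by a Moore machine, I would apply Lem.~\ref{lem:Moore-to-Out} to build in polynomial time a deterministic safety automaton $\mathcal{B}$ recognising $\Out(\sigma'_\exists)$, whose only rejecting behaviour is to reach the absorbing state $\bot$. Consequently $\allpaths \setminus \Out(\sigma'_\exists)$ is exactly the set of words whose run in $\mathcal{B}$ eventually reaches $\bot$, that is, a reachability condition. Recall that by definition $\EA(\sigma'_\exists) = \spec \cup (\allpaths \setminus \Out(\sigma'_\exists))$, so it suffices to produce a parity automaton for this disjunction.

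Next, I would form the synchronous product of the given parity automaton for $\spec$ with $\mathcal{B}$: a run reads the same word $w \in \allpaths$ in both components simultaneously, so the product is deterministic and of size $|S_{\spec}| \cdot |S_{\mathcal{B}}|$, hence computable in polynomial time. The acceptance I want is the disjunction ``the $\spec$-component satisfies its parity condition, \emph{or} the $\mathcal{B}$-component reaches $\bot$''. In general the disjunction of a parity condition with another condition is not itself a parity condition, but here the reachability target $\bot$ is absorbing, and this lets me collapse the disjunction into a single colouring: I assign colour $0$ to every product state whose $\mathcal{B}$-component is $\bot$, and I copy the colour $\chi$ of the $\spec$-component to all remaining product states.

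Finally, I would verify correctness. If $w \notin \Out(\sigma'_\exists)$, its run reaches a product state whose second component is $\bot$; since $\bot$ is absorbing the run stays in colour-$0$ states forever, so the least colour seen infinitely often is $0$ and $w$ is accepted, which is correct because $w$ then lies in the complement of $\Out(\sigma'_\exists)$ and hence in $\EA(\sigma'_\exists)$. If instead $w \in \Out(\sigma'_\exists)$, its run never visits a $\bot$-state, so every colour seen infinitely often comes from the $\spec$-component and the run is accepting iff $w \in \spec$, which matches $\EA(\sigma'_\exists)$ exactly since for such $w$ membership in $\EA(\sigma'_\exists)$ reduces to membership in $\spec$. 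Thus the product recognises $\spec \cup (\allpaths \setminus \Out(\sigma'_\exists)) = \EA(\sigma'_\exists)$ and is a parity automaton built in polynomial time. I expect the only delicate point to be the passage from the disjunctive acceptance to a genuine parity colouring; the absorbing nature of $\bot$ is precisely what makes this step elementary, rather than forcing a general super-polynomial union construction for two arbitrary parity conditions.
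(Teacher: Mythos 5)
Your proof is correct and takes essentially the same route as the paper: construct the safety automaton for $\Out(\sigma'_\exists)$ via Lem.~\ref{lem:Moore-to-Out}, take the synchronous product with the parity automaton for $\spec$, and recolour every product state whose second component is $\bot$ to colour $0$ while copying the $\spec$-colours elsewhere. Your correctness check is in fact slightly more careful than the paper's, since you also verify explicitly that words in $\Out(\sigma'_\exists)\setminus\spec$ are rejected.
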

\begin{proof}
  By Lem.~\ref{lem:Moore-to-Out}, we can construct a safety automaton that recognises $\Out(\sigma_\exists)$.
  Consider the product between the automaton for $\spec$ and for $\Out(\sigma_\exists)$.
  We write $(s,t)$ for a state of this automaton where $s$ corresponds to the state for $\spec$ and $t$ for $\Out(\sigma_\exists)$.
  We set the colours to be the same than $s$ if the colour for $t$ is $0$ and the colour is $0$ otherwise.
  A word that is not in $\Out(\sigma_\exists)$ will reach $\bot$ (which has now colour $0$) at some point and therefore is accepted by our construction.
  A word that is in $\spec$ in also accepting.
  Hence this automaton recognises $\EA(\sigma'_\exists)$.
\end{proof}

\begin{lemma}\label{lem:non-dominated-strongly-winning}
  If $\sigma_\exists$ is non-subgame-dominated then $\sigma_\exists$ is subgame winning.
\end{lemma}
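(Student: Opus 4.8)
The plan is to prove the contrapositive: I will show that if $\sigma_\exists$ is \emph{not} subgame winning, then it is subgame-dominated, i.e.\ there is some history from which a strategy weakly dominates it. Unfolding the definition of subgame winning, the failure means there exists a history $h$ together with a witness strategy $\sigma'_\exists$ such that $\Out_h(\sigma'_\exists) \subseteq \spec$ (so $\sigma'_\exists$ wins the subgame rooted at $h$), yet $\Out_h(\sigma_\exists) \not\subseteq \spec$, i.e.\ $\sigma_\exists$ admits at least one outcome extending $h$ that is losing. The goal is then to establish that this very $\sigma'_\exists$ weakly dominates $\sigma_\exists$ from $h$, which contradicts the assumption that $\sigma_\exists$ is non-subgame-dominated.

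First I would check that $\sigma'_\exists$ very weakly dominates $\sigma_\exists$ from $h$. By definition this requires that for every environment strategy $\sigma_\forall$, $\Out_h(\sigma_\exists, \sigma_\forall) \in \spec$ implies $\Out_h(\sigma'_\exists, \sigma_\forall) \in \spec$. Since every outcome $\Out_h(\sigma'_\exists, \sigma_\forall)$ lies in $\Out_h(\sigma'_\exists) \subseteq \spec$, the conclusion of the implication holds unconditionally, so very weak domination is immediate.

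Next I would show that $\sigma'_\exists$ is \emph{not} very weakly dominated by $\sigma_\exists$ from $h$, which upgrades the relation to genuine weak domination. Because $\Out_h(\sigma_\exists) \not\subseteq \spec$, fix a losing outcome $w \in \Out_h(\sigma_\exists) \setminus \spec$. This $w$ extends $h$ and follows $\sigma_\exists$ on all outputs occurring after $h$; its inputs after $h$ are some particular sequence. I define an environment strategy $\sigma_\forall$ that replays exactly the inputs of $w$, namely $\sigma_\forall(h') = w_{|h'|+1}$ for every even-length prefix $h'$ of $w$ (and arbitrarily elsewhere). Then $\Out_h(\sigma_\exists, \sigma_\forall) = w \notin \spec$, whereas $\Out_h(\sigma'_\exists, \sigma_\forall) \in \Out_h(\sigma'_\exists) \subseteq \spec$. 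This single $\sigma_\forall$ witnesses that $\sigma'_\exists$ is not very weakly dominated by $\sigma_\exists$ from $h$, so $\sigma'_\exists$ weakly dominates $\sigma_\exists$ from $h$, giving the required contradiction.

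The argument is short and essentially a direct unfolding of definitions; the only point demanding care is the bookkeeping around $\Out_h$, which forces the prefix $h$ and only then lets both strategies play. I expect the main (mild) obstacle to be verifying that the constructed $\sigma_\forall$ indeed yields $\Out_h(\sigma_\exists, \sigma_\forall) = w$: one must observe that $w$ already agrees with $\sigma_\exists$ on outputs past $h$ since $w \in \Out_h(\sigma_\exists)$, so pinning down the inputs of $w$ via $\sigma_\forall$ determines the unique joint outcome to be $w$. Everything else follows formally.
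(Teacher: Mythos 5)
Your proof is correct; it proves the contrapositive, but it rests on the same two observations as the paper's proof (a strategy winning from $h$ very weakly dominates anything from $h$, and an environment strategy replaying the inputs of a losing outcome of $\sigma_\exists$ pins that outcome down), so the difference is organisational rather than conceptual --- still, the differences are instructive. The paper argues directly: for $w \in \Out_h(\sigma_\exists)$ realised by some $\sigma_\forall$, it builds the switched strategy $\sigma_\exists\switch{h}{\sigma'_\exists}$, observes that it very weakly dominates $\sigma_\exists$ from $h$, deduces from non-subgame-domination that $\sigma_\exists$ must very weakly dominate it back, and then transfers ``winning against $\sigma_\forall$'' along this domination to conclude $w \in \spec$; its logic never needs the strictness clause of weak domination. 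You instead exhibit weak domination outright and contradict the hypothesis, which forces you to verify the strictness clause --- and your explicit $\sigma_\forall$ does exactly that. Two things your route buys: (i) the switch construction is dispensable, since domination from $h$ is phrased entirely in terms of $\Out_h$, which depends only on the strategies' behaviour on histories extending $h$, where $\sigma'_\exists$ and $\sigma_\exists\switch{h}{\sigma'_\exists}$ coincide anyway; (ii) your argument is more careful than the paper's wording, which asserts that the switched strategy ``is weakly dominated \dots because the second strategy is winning from $h$'' --- literally, winning from $h$ only yields \emph{very} weak domination (the paper's proof does go through under that weaker reading), whereas establishing genuine weak domination requires precisely the witness environment strategy you construct.
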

\begin{proof}
  Let $h$ be a history such that there is a strategy $\sigma'_\exists$ such that 
  $\Out_h(\sigma_\exists') \subseteq \spec$.
  We will prove that $\Out_h(\sigma_\exists) \subseteq \spec$ which shows that $\sigma_\exists$ is strongly winning.

  Assume $w \in \Out_h(\sigma_\exists)$, and let $\sigma_\forall$ be such that $w= \Out_h(\sigma_\exists,\sigma_\forall)$.
  Consider the strategy $\sigma_\exists\switch{h}{\sigma'_\exists}$ that plays according to $\sigma_\exists$ and when $h$ is reached shifts to $\sigma'_\exists$.
  Formally, given a history $h'$:
  \[\sigma_\exists\switch{h}{\sigma'_\exists} = \left\{\begin{array}{ll}
  \sigma'_\exists(h') & \textrm{if } h \text{ is a prefix of } h' \\
  \sigma_\exists(h') & \textrm{otherwise}
  \end{array}\right.\]
  Note that we already used this construction in the proof of Thm.~\ref{thm:strongly-winning-optimal}.

  Strategy $\sigma_\exists$ is weakly dominated by $\sigma_\exists\switch{h}{\sigma'_\exists}$ from $h$ 
  because the second strategy is winning from $h$.
  Then, since $\sigma_\exists$ is non-subgame-dominated, it weakly dominates $\sigma_\exists\switch{h}{\sigma'_\exists}$ from $h$.
  We have that $\Out_h(\sigma'_\exists,\sigma_\forall)$ is winning, therefore $\sigma_\exists\switch{h}{\sigma'_\exists}$ is winning against $\sigma_\forall$.
  Since $\sigma_\exists$ weakly dominates $\sigma_\exists\switch{h}{\sigma'_\exists}$, it is also winning against $\sigma_\forall$ and $w \in \spec$.
\end{proof}

As strongly non-dominated strategy is also strongly winning~(Lem.~\ref{lem:non-dominated-strongly-winning}), Lem.~\ref{lem:auto-EA} already gives an algorithm to compute a (not necessarily ensurable) optimal assumption.
\begin{corollary}
  Given an oracle to compute a memoryless winning strategy in a parity game, we can compute efficiently an optimal assumption as a disjunction of parity automata.
\end{corollary}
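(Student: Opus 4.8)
The plan is to chain together the structural and algorithmic results established above into a single pipeline, all of whose steps run in polynomial time once the oracle is invoked. First I would call the oracle on the parity automaton for $\spec$ to obtain the winning region together with a memoryless winning strategy on it. Feeding this into Lem.~\ref{lem:Moore-non-dominated} produces, in polynomial time, a Moore machine implementing a memoryless strategy $\sigma'_\exists$ satisfying Berwanger's value condition at every history; in particular $\sigma'_\exists$ is non-subgame-dominated.

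The correctness of the output then follows from the correspondence results. By Lem.~\ref{lem:non-dominated-strongly-winning}, a non-subgame-dominated strategy is subgame winning, and a subgame winning strategy is strongly winning: when a history $h$ is compatible with some $\sigma''_\exists$ whose outcomes through $h$ all satisfy $\spec$, compatibility forces $\Out_h(\sigma''_\exists) = \Out(\sigma''_\exists) \cap h \cdot \allpaths$, so the subgame hypothesis applies and transfers to $\sigma'_\exists$. Hence, by Thm.~\ref{thm:strongly-winning-optimal}, the assumption $\EA(\sigma'_\exists) = \spec \cup (\allpaths \setminus \Out(\sigma'_\exists))$ is optimal for $\spec$. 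To represent it effectively, I would apply Lem.~\ref{lem:Moore-to-Out} to obtain a safety automaton for $\Out(\sigma'_\exists)$, whose complement is a reachability automaton, and take its disjunction with the given parity automaton for $\spec$; equivalently Lem.~\ref{lem:auto-EA} yields a single parity automaton recognising $\EA(\sigma'_\exists)$ in polynomial time. Either way the optimal assumption is produced as a disjunction of parity automata.

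For the complexity accounting I would note that the oracle is consulted only once, on the game for $\spec$. The auxiliary strategy computation inside Lem.~\ref{lem:Moore-non-dominated}, on states outside the winning region from which a winning path still exists, is a one-player reachability computation and so needs no further oracle calls; every remaining construction (the Moore machine, the safety automaton, the product) is polynomial in the size of the automaton for $\spec$. The main obstacle I anticipate is bookkeeping the notion of domination: Lem.~\ref{lem:non-dominated-strongly-winning} requires non-subgame-domination, a condition ranging over \emph{all} histories, whereas a merely non-dominated strategy is only constrained at the empty history $\varepsilon$. I must therefore verify that the strategy delivered by Lem.~\ref{lem:Moore-non-dominated} meets the value condition at every history rather than only at $\varepsilon$, which it does since the Berwanger characterisation invoked there is stated for all histories.
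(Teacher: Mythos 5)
Your proposal is correct and follows essentially the same route as the paper: oracle $\to$ Lem.~\ref{lem:Moore-non-dominated} $\to$ Lem.~\ref{lem:non-dominated-strongly-winning} $\to$ Thm.~\ref{thm:strongly-winning-optimal} $\to$ Lem.~\ref{lem:auto-EA}, reading $\EA(\sigma'_\exists) = \spec \cup (\allpaths \setminus \Out(\sigma'_\exists))$ as the advertised disjunction. You are in fact more careful than the paper on the two points it glosses over --- that the strategy from Lem.~\ref{lem:Moore-non-dominated} is non-\emph{subgame}-dominated (the paper only remarks on this later, in Thm.~\ref{thm:compute-optimal}) and that subgame winning yields the strongly winning hypothesis of Thm.~\ref{thm:strongly-winning-optimal} --- and both of your justifications are sound.
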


Now to remove doomed histories, we remove transitions going to states from which there is no winning path.
\begin{lemma}\label{lem:auto-Doomed}
  Given a parity automaton and a strategy~$\sigma_\exists$, we can compute in polynomial time a safety automaton recognising $\allpaths \setminus \Doomed(\sigma_\exists)$.
\end{lemma}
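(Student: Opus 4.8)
The plan is to reuse the product construction from Lemma~\ref{lem:Moore-to-Out} and Lemma~\ref{lem:auto-EA}, and to locate the doomed positions by a one-player parity computation on the product. Recall that $w$ is doomed for $\sigma_\exists$ exactly when, at some even position $k$, the prefix $w_{\le k}$ is still an outcome prefix of $\sigma_\exists$ (every output so far agrees with the Moore machine) and no outcome of $\sigma_\exists$ extending $w_{\le k}$ satisfies $\spec$. The second condition depends only on the state reached, because once $\sigma_\exists$ is fixed the remaining choices are the environment inputs. So first I would form the synchronised product of the deterministic parity automaton for $\spec$ with the Moore machine implementing $\sigma_\exists$, whose states are pairs $(s,t)$ with $s$ a specification-automaton state and $t$ a Moore-machine state. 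On this product only input transitions branch; at an output state $t$ the Moore machine forces the single output $G(t)$.

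Second, I would compute the set $D$ of product states from which no path following $\sigma_\exists$ (outputs given by $G$, inputs arbitrary) satisfies the parity condition of $\spec$. Viewing the product as a one-player parity game in which the single player chooses inputs and tries to satisfy parity, $D$ is the complement of that player's winning region; since one-player parity games are solved in polynomial time (as already used in the proof of Lemma~\ref{lem:Moore-non-dominated}), $D$ is computable in polynomial time. A state $(s,t)$ with $t$ an input state lies in $D$ iff the corresponding prefix is doomed: $t$ being an input state means an even number of symbols has been read, and $D$ is closed under $\sigma_\exists$-successors (a winning path from a successor would lift to a winning path from $(s,t)$).

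Third, I would assemble the safety automaton. It tracks $(s,t)$ while the word follows $\sigma_\exists$: at an output state it takes the unique transition labelled $G(t)$, and on any other output it moves to an accepting absorbing state $\top$, since a word that deviates from $\sigma_\exists$ at position $k$ cannot be doomed at $k$ or later ($w_{\le k'}$ is no longer an outcome prefix for $k' \ge k$). Whenever an input state $(s,t)\in D$ is reached while still following $\sigma_\exists$, the automaton moves to a rejecting absorbing state $\bot$ of colour $1$; every other state has colour $0$. A word is then rejected exactly when it reaches $\bot$, i.e.\ when it follows $\sigma_\exists$ up to some even position whose product state lies in $D$, which is precisely the doomed condition (the case $k=0$ is $(s_0,t_0)\in D$, where no outcome of $\sigma_\exists$ wins and every word is doomed). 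Hence the automaton recognises $\allpaths\setminus\Doomed(\sigma_\exists)$, it is a safety automaton because $\bot$ is the only colour-$1$ state and it is absorbing, and all steps are polynomial.

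The main obstacle to get right is the interaction between deviation and doom: the doom test must be applied only at even positions reached while faithfully following $\sigma_\exists$, so that words deviating before hitting $D$ are accepted while words reaching $D$ before deviating are rejected. Keeping the ``deviate to $\top$'' transitions (at output states) separate from the ``fall into $\bot$'' transitions (at input states in $D$), and using that $D$ is closed under $\sigma_\exists$-successors, is exactly what makes the safety acceptance condition coincide with $\lnot\Doomed(\sigma_\exists)$.
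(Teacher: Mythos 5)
Your proof is correct and takes essentially the same route as the paper's: build the product of the specification parity automaton with the Moore machine for $\sigma_\exists$, compute the set $D$ of product states from which no strategy-following path satisfies the parity condition by solving a one-player parity game, and turn entry into $D$ into a move to an absorbing rejecting state $\bot$. If anything, your version is more careful than the paper's own two-line argument: by sending outputs that deviate from $G(t)$ to an accepting sink $\top$ and applying the $D$-test only at input states reached along $\sigma_\exists$-compatible prefixes, you correctly accept words that deviate from the strategy before ever reaching $D$ (such words are not doomed by definition, yet a naive reading of ``replace transitions leading to states in $D$'' would reject them if they wander into $D$ only after deviating).
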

\begin{proof}
  Consider the product between the parity automaton and the Moore machine implementing $\sigma_\exists$.
  We consider $D$ the set of states from which there is no winning path.
  This set can be computed in polynomial time (it can be seen as the winning states of a one player parity game).
  To obtain the desired safety automaton, we replace transitions leading to state in $D$ to a state $\bot$ that is rejecting.

  Let $w\in\Doomed(\sigma_\exists)$, there is a prefix $h$ of $w$ such that $h \cdot \allpaths \cap \Out(\sigma_\exists) \cap \spec = \varnothing$. 
  Consider the state~$s$ reached of the product reached after reading $h$.
  It is such that no path will be accepting, therefore $s \in D$, and $w$ is rejected by the safety automaton.
  Similarly if a word is rejected by the automaton, the corresponding path in the product reaches a state in $D$, which means it is doomed for $\sigma_\exists$.
\end{proof}

Thanks to this lemma, we can easily exclude doomed histories from the previous construction and recognise $\EA^-(\sigma_\exists)$ which is an ensurable optimal assumption.
\begin{lemma}\label{lem:Moore-ensurable-optimal}
  Given a specification as a parity automaton, and a strategy $\sigma_\exists$ as a Moore machine, we can compute in polynomial time a parity automaton recognising $\EA^-(\sigma_\exists)$.
\end{lemma}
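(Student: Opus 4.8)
The plan is to exploit the definition $\EA^-(\sigma_\exists) = \EA(\sigma_\exists) \setminus \Doomed(\sigma_\exists)$, which I rewrite as the intersection $\EA^-(\sigma_\exists) = \EA(\sigma_\exists) \cap (\allpaths \setminus \Doomed(\sigma_\exists))$. Both factors are already available as automata of polynomial size: Lemma~\ref{lem:auto-EA} produces a parity automaton $\mathcal{A}_1$ recognising $\EA(\sigma_\exists)$, and Lemma~\ref{lem:auto-Doomed} produces a safety automaton $\mathcal{A}_2$ recognising $\allpaths \setminus \Doomed(\sigma_\exists)$. It therefore suffices to take the synchronous product of $\mathcal{A}_1$ and $\mathcal{A}_2$ and equip it with an acceptance condition expressing ``$\mathcal{A}_1$ accepts \emph{and} $\mathcal{A}_2$ accepts''.

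First I would form the product automaton whose states are pairs $(s,t)$, with $s$ a state of $\mathcal{A}_1$ and $t$ a state of $\mathcal{A}_2$, both components reading the same letter. This has size $|\mathcal{A}_1| \cdot |\mathcal{A}_2|$, hence polynomial. To define the colouring I would use the fact that a safety condition is the simplest kind of parity condition: a run of $\mathcal{A}_2$ is accepting exactly when it never reaches its absorbing rejecting sink $\bot$. Concretely, I set the colour of a product state $(s,t)$ to be the colour of $s$ in $\mathcal{A}_1$ whenever $t \ne \bot$, and I make every state $(s,\bot)$ absorbing with an odd colour, say $1$. Since $\bot$ is absorbing, once a run enters a state whose second component is $\bot$ it stays there, so the least colour seen infinitely often is the odd colour $1$ and the run is rejected; this matches the fact that the corresponding word lies in $\Doomed(\sigma_\exists)$. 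Conversely, on any run that never visits $\bot$ the colours are exactly those of $\mathcal{A}_1$, so the run is accepting in the product iff the word belongs to $\EA(\sigma_\exists)$, and such a word is also outside $\Doomed(\sigma_\exists)$.

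Combining the two cases, a word is accepted by the product automaton exactly when it belongs to $\EA(\sigma_\exists)$ and not to $\Doomed(\sigma_\exists)$, i.e.\ exactly when it lies in $\EA^-(\sigma_\exists)$. The construction runs in polynomial time because both input automata are of polynomial size and the product introduces no new priorities beyond those of $\mathcal{A}_1$ together with the single odd colour assigned to $\bot$. There is essentially no difficult step here: the only point requiring care is to check that intersecting a parity language with a safety language stays inside the parity class without blowing up the number of priorities, which is precisely what the ``freeze to an odd colour at $\bot$'' trick guarantees.
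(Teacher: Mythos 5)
Your proposal is correct and matches the paper's own proof essentially step for step: both invoke Lem.~\ref{lem:auto-EA} and Lem.~\ref{lem:auto-Doomed}, form the product of the resulting parity and safety automata, keep the colours of the $\EA(\sigma_\exists)$ component except on states whose safety component is $\bot$, which receive the odd colour $1$, and conclude that the product recognises the intersection $\EA^-(\sigma_\exists)$. The only difference is that you spell out the correctness of the colouring in more detail than the paper does.
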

\begin{proof}
  We can  construct a parity automaton which recognises $\EA(\sigma_\exists)$ by Lem.~\ref{lem:auto-EA}.
  By Lem.~\ref{lem:auto-Doomed}, we can construct a safety automaton which recognises $\allpaths \setminus \Doomed(\sigma_\exists)$.
  We consider the product of these two automata and set the colours to be the same than for $\EA(\sigma_\exists)$ except if we reach $\bot$ in the second component, in which case the colour is $1$.
  A word is accepted if, and only if, it belongs to the intersection of the two languages and therefore to $\EA^-(\sigma_\exists)$.
\end{proof}

\begin{theorem}\label{thm:Moore-ensurable-optimal}
  Given a specification as a parity automaton, we can compute in exponential time a parity automaton of polynomial size recognising an ensurable optimal assumption.
  Moreover, if we have access to an oracle for computing memoryless winning strategies in parity games, our algorithm works in polynomial time.
\end{theorem}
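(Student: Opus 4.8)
The plan is to chain together the constructions of the preceding lemmas, observing that every step except the resolution of a parity game is polynomial and produces an object of polynomial size. First I would feed the input parity automaton for $\spec$ into a parity-game solver, viewing it as a game in which the controller owns the output states and the environment owns the input states. This yields the winning region together with a \emph{memoryless} winning strategy $\sigma_\exists$ guaranteeing victory from every state of the winning region. This is the only costly step: with the assumed oracle it is a single polynomial-time call, whereas without it I would run a standard parity-game algorithm whose running time is exponential (e.g.\ exponential in the number of colours) but which still returns a memoryless strategy, so its Moore-machine representation remains of polynomial size.

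Next I would apply Lem.~\ref{lem:Moore-non-dominated} to $\sigma_\exists$, obtaining in polynomial time a Moore machine implementing a memoryless non-dominated strategy $\sigma'_\exists$; since this machine has the same state space as the automaton for $\spec$, its size is polynomial. I would then invoke Lem.~\ref{lem:Moore-ensurable-optimal} on the automaton for $\spec$ together with the Moore machine for $\sigma'_\exists$ to build, again in polynomial time and of polynomial size, a parity automaton recognising $\EA^-(\sigma'_\exists)$.

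It remains only to justify correctness, namely that the computed language is an ensurable optimal assumption, and this is immediate from Thm.~\ref{thm:non-dominated-nonrestrictive}: as $\sigma'_\exists$ is non-dominated, $\EA^-(\sigma'_\exists)$ is ensurable optimal for $\spec$. Collecting the complexities, every step downstream of the game solving is polynomial and preserves polynomial size, so the total cost is dominated by the parity-game resolution, which is exponential in general and polynomial when the oracle is available.

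The main obstacle is confined to that first step: arranging that the parity-game solver returns exactly the data that Lem.~\ref{lem:Moore-non-dominated} expects, i.e.\ the winning region \emph{and} a memoryless winning strategy defined on it, and verifying that the unconditional solver runs within exponential time while the memorylessness keeps the resulting automaton of polynomial size. Everything beyond this is a direct concatenation of the already-established polynomial-time constructions, so no further genuinely new argument is needed.
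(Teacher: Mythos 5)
Your proposal is correct and takes essentially the same route as the paper's proof: solve the parity game to obtain a memoryless winning strategy (the only exponential step, or one oracle call), apply Lem.~\ref{lem:Moore-non-dominated} to get a memoryless non-dominated strategy $\sigma'_\exists$, then Lem.~\ref{lem:Moore-ensurable-optimal} to construct a polynomial-size parity automaton for $\EA^-(\sigma'_\exists)$, with correctness supplied by Thm.~\ref{thm:non-dominated-nonrestrictive}. No substantive difference from the paper's argument.
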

\begin{proof}
  We first need to obtain a Moore machine for a memoryless winning strategy $\sigma_\exists$, this can be done in exponential time or constant time if we have an oracle for that.
  Then by Lem.~\ref{lem:Moore-non-dominated}, we can compute a Moore machine implementing a memoryless non-dominated strategy $\sigma'_\exists$.
  By Lem.~\ref{lem:Moore-ensurable-optimal}, we can construct a parity automaton recognising $\EA^-(\sigma'_\exists)$.
  By Lem.~\ref{thm:non-dominated-nonrestrictive}, the language of this automaton is an ensurable optimal assumption.
\end{proof}



\subsection{Scenarios}
\begin{theorem}\label{thm:ea-minus-scenario}
  Let $\spec$ be a specification, and $S$ a coherent scenario compatible with $\spec$.
  If $\sigma_\exists$ is a non-subgame-dominated, then $\EA^-([S \rightarrow \sigma_\exists])$ is $\mathcal{E}$-optimal for $\spec$ and $S$.
\end{theorem}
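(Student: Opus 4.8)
The plan is to check the three ingredients of $\mathcal{E}$-optimality for $H := \EA^-([S\rightarrow\sigma_\exists])$ separately: that $H$ is ensurable, that it is sufficient for $\spec$ and $S$, and that it is optimal among such assumptions. Throughout write $\tau := [S \rightarrow \sigma_\exists]$, so $H = \EA^-(\tau)$. Ensurability is immediate: by Lem.~\ref{lem:ea-minus-sufficient}(\ref{ea-suf}), $\EA^-(\tau)$ is not output-restrictive, hence ensurable by Lem.~\ref{lem:ensurable-nonrestrictive}. For sufficiency for $\spec$ and $S$ I would exhibit $\tau$ itself as witness. By Lem.~\ref{lem:scenario-compatible}, $\tau$ is compatible with $S$, so $S \subseteq \Out(\tau)$, and Lem.~\ref{lem:ea-minus-sufficient}(\ref{ea-suf}) gives $\Out(\tau) \cap \EA^-(\tau) \subseteq \spec$. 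It remains to see $S \subseteq \EA^-(\tau)$: each $w \in S$ lies in $\spec \subseteq \EA(\tau)$ (as $S$ is compatible with $\spec$) and is not doomed for $\tau$, since $w$ is itself an outcome of $\tau$ satisfying $\spec$ and extending each prefix $w_{\le k}$. Hence $S \subseteq \EA^-(\tau) \cap \Out(\tau) \subseteq \spec$.

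For optimality, let $A \in \mathcal{E}$ be sufficient for $\spec$ and $S$ with $\EA^-(\tau) \subseteq A$; I aim to prove $A = \EA^-(\tau)$, i.e.\ $A \subseteq \EA^-(\tau)$. Sufficiency for $\spec$ and $S$ provides $\sigma'_\exists$ with $S \subseteq A \cap \Out(\sigma'_\exists) \subseteq \spec$, so $\sigma'_\exists$ is compatible with $S$, $A$ is sufficient for $\sigma'_\exists$, and $A$ is not output-restrictive (Lem.~\ref{lem:ensurable-nonrestrictive}). Then Lem.~\ref{lem:ea-minus-sufficient}(\ref{ea-nec}) gives $A \subseteq \EA^-(\sigma'_\exists)$, so $\EA^-(\tau) \subseteq \EA^-(\sigma'_\exists)$, and Lem.~\ref{lem:subset-implies-weakly-dominated} yields that $\tau$ is very weakly dominated by $\sigma'_\exists$. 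The whole statement now reduces to one claim: \emph{$\sigma'_\exists$ is very weakly dominated by $\tau$}. Granting it, Lem.~\ref{lem:weakly-dominated-implies-output-restrictive-or-subset} gives $\EA^-(\sigma'_\exists) \subseteq \EA^-(\tau)$, whence $A \subseteq \EA^-(\sigma'_\exists) \subseteq \EA^-(\tau)$, as required.

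The claim is the heart of the argument and the step I expect to be the main obstacle. To prove it, fix an environment strategy $\sigma_\forall$ with $\Out(\sigma'_\exists,\sigma_\forall) = w' \in \spec$; I must show $w := \Out(\tau,\sigma_\forall) \in \spec$. If $w = w'$ this is trivial, so let $w_{\le k}$ be the longest common prefix. Since $w$ and $w'$ answer the same $\sigma_\forall$, the first difference occurs at an output position ($k+1$ even); and since $\tau$ and $\sigma'_\exists$ are both compatible with the coherent scenario $S$ yet disagree at $w_{\le k}$, the history $w_{\le k}$ and all its extensions are off-scenario, so $\tau$ coincides with $\sigma_\exists$ from $w_{\le k}$ on. The plan is then to transport the global domination of $\tau$ by $\sigma'_\exists$ into the subgame rooted at $w_{\le k}$ and flip it using that $\sigma_\exists$ is non-subgame-dominated. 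Concretely, consider $\sigma_\exists\switch{w_{\le k}}{\sigma'_\exists}$, which plays $\sigma'_\exists$ from $w_{\le k}$ on. Any environment strategy in the subgame at $w_{\le k}$ extends to a global one that first drives the play along $w_{\le k}$; composing this with the coincidence $\tau = \sigma_\exists$ past $w_{\le k}$ and invoking that $\tau$ is very weakly dominated by $\sigma'_\exists$ shows that, from $w_{\le k}$, $\sigma_\exists$ is very weakly dominated by $\sigma_\exists\switch{w_{\le k}}{\sigma'_\exists}$. As $\sigma_\exists$ is non-subgame-dominated, the domination must hold both ways from $w_{\le k}$; equivalently, whenever $\sigma'_\exists$ is winning from $w_{\le k}$ so is $\sigma_\exists$. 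Applying this to $\sigma_\forall$, from $w' = \Out_{w_{\le k}}(\sigma'_\exists,\sigma_\forall) \in \spec$ we obtain $w = \Out_{w_{\le k}}(\sigma_\exists,\sigma_\forall) \in \spec$, which proves the claim.

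The delicate points, where I would spend most of the care, are the bookkeeping in the last paragraph: verifying that the first divergence is an output position, that $\sigma_\exists\switch{w_{\le k}}{\sigma'_\exists}$ genuinely realises $\sigma'_\exists$ in the subgame while leaving $\tau$ untouched past $w_{\le k}$, and that extending a subgame environment strategy to one reaching $w_{\le k}$ preserves the relevant outcomes on both sides. The interplay between the scenario region (where $\tau$ may differ from $\sigma_\exists$) and the off-scenario region (where they coincide) is exactly what lets me reduce to the plain non-subgame-dominatedness of $\sigma_\exists$, and is the subtlety that distinguishes this proof from that of Thm.~\ref{thm:ea-scenario}.
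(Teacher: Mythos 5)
Your proposal is correct, and it takes a genuinely different route to optimality than the paper does. The paper argues by contradiction at the level of individual words: it picks $w \in A \setminus \EA^-([S\to\sigma_\exists])$, splits into the two cases ($w$ a losing outcome of $[S\to\sigma_\exists]$, or $w$ doomed for it), in each case uses non-subgame-dominance of $\sigma_\exists$ at an off-scenario history to manufacture a \emph{losing} outcome $w''$ of $\sigma'_\exists$, and then closes with an ad hoc trick: $\EA^-([S\to\sigma_\exists]) \cup \{w''\}$ is ensurable and sufficient for $[S\to\sigma_\exists]$, hence $w'' \in \EA^-([S\to\sigma_\exists]) \subseteq A$, contradicting that $A$ is sufficient for $\sigma'_\exists$. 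You instead reuse the inclusion/domination correspondence wholesale, exactly as in the proof of Thm.~\ref{thm:non-dominated-nonrestrictive}: Lem.~\ref{lem:ea-minus-sufficient}(\ref{ea-nec}) and Lem.~\ref{lem:subset-implies-weakly-dominated} give that $[S\to\sigma_\exists]$ is very weakly dominated by $\sigma'_\exists$, and the whole scenario-specific content is isolated in one clean claim — that $\sigma'_\exists$ is very weakly dominated back by $[S\to\sigma_\exists]$ — after which Lem.~\ref{lem:weakly-dominated-implies-output-restrictive-or-subset} closes the inclusion chain. Your proof of that claim is sound: the first divergence against a common $\sigma_\forall$ is at an output position; by compatibility of both strategies with the coherent $S$ that history and all its extensions are off-scenario, so $[S\to\sigma_\exists]$ coincides there with $\sigma_\exists$; lifting subgame environment strategies to global ones that first track $w_{\le k}$ transports the global domination into the subgame, where non-subgame-dominance of $\sigma_\exists$ flips it. Both arguments exploit the same core insight (off-scenario, the scenario-modified strategy inherits $\sigma_\exists$'s non-dominance), but your packaging buys two things the paper's does not: it avoids the paper's informal steps (the parenthetical ensurability of $\EA^-\cup\{w''\}$, and an appeal to ``optimality'' that really means the necessity part of Lem.~\ref{lem:ea-minus-sufficient}), and it explicitly verifies sufficiency for $\spec$ \emph{and} $S$ — in particular that $S \subseteq \EA^-([S\to\sigma_\exists])$ because scenario words are undoomed outcomes in $\spec$ — a point the paper's proof leaves implicit.
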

\begin{proof}
  By Thm.~\ref{thm:ea-scenario} and since $\sigma_\exists$ is subgame winning (Lem.~\ref{lem:non-dominated-strongly-winning}), $\EA([S\rightarrow \sigma_\exists])$ is optimal for $\spec$ and $S$.
  By Lem.~\ref{lem:ea-minus-sufficient}, $\EA^-([S \rightarrow \sigma_\exists])$ is ensurable.
  Assume there are $A \subseteq \allpaths$ and $\sigma'_\exists$ such that $\EA^-([S \rightarrow \sigma_\exists])\subseteq A$, $A$ is ensurable and $A$ sufficient for $\sigma'_\exists$.
  Assume towards a contradiction that there is $w \in A \setminus \EA^-([S \rightarrow \sigma_\exists])$.

  Since $w\not\in \EA^-([S \rightarrow \sigma_\exists])$ it is a losing outcome of $[S \rightarrow \sigma_\exists]$ or $w \in \Doomed([S\rightarrow \sigma_\exists])$.
  We show that in both cases there is $i$ such that $w_{\le i}$ is not prefix of a word in $S$ and there is a losing word $w''$ in $\Out_{w_{\le i}}(\sigma'_\exists)$:
  \begin{itemize}
  \item If $w$ is a losing outcome of $[S \rightarrow \sigma_\exists]$, let $\sigma_\forall$ be such that $w = \Out([S \rightarrow \sigma_\exists],\sigma_\forall)$. 
    Let then $w' = \Out(\sigma'_\exists,\sigma_\forall)$.
    Since $w$ losing and in $A$ which is sufficient for $\sigma'_\exists$, $w$ is not an outcome of $\sigma'_\exists$ and we have that $w \ne w'$.
    Let $i$ be the first index such that $[S \rightarrow \sigma_\exists](w_{\le i}) \ne \sigma'_\exists(w_{\le i})$.
    Since both strategies are compatible with $S$, $w_{\le i}$ is not a prefix of a word in $S$.
    Therefore by definition of $[S \rightarrow \sigma_\exists]$, $w = \Out_{w_{\le i}}(\sigma_\exists)$.
    Since $\sigma_\exists$ is non-subgame-dominated and $w$ losing, there exists a losing outcome~$w''$ in $\Out_{w_{\le i}}(\sigma'_\exists)$.
  \item Otherwise there is $w_{\le k}$ such that $w_{\le k} \cdot \allpaths \cap \Out([S\rightarrow \sigma_\exists]) \ne \varnothing$ and $w_{\le k} \cdot \allpaths \cap \Out([S\rightarrow \sigma_\exists]) \cap \spec = \varnothing$.
    Assume towards a contradiction that $w_{\le k}$ is prefix of an outcome in $S$.
    Let $w'$ be that outcome, we have that $w' \in \Out([S\rightarrow \sigma_\exists]) \cap \spec$ because it is compatible with $S$ which is compatible with $\spec$.
    This is a contradiction with the fact that $w_{\le k} \cdot \allpaths \cap \Out([S\rightarrow \sigma_\exists]) \cap \spec = \varnothing$.
    
    Now, since $w_{\le k}$ is not prefix of an outcome in $S$, $\Out_{w_{\le k}}([S\rightarrow \sigma_\exists]) = \Out_{w_{\le k}}(\sigma_\exists)$.
    Since $\sigma_\exists$ is non-subgame-dominated and $\Out_{w_{\le k}}(\sigma_\exists) \cap \spec = \varnothing$, there is no winning outcome from $w_{\le k}$ (otherwise a strategy allowing this winning outcome would dominated $\sigma_\exists$ from $w_{\le k}$).
    Therefore any word in $w_{\le k} \cdot \allpaths \cap A$ is losing, and since $A$ is sufficient for $\sigma'_\exists$, $w_{\le k} \cdot \allpaths \cap A \cap \Out(\sigma'_\exists) = \varnothing$.
    Let $w_{\le i}$ the greatest prefix of a word in $A$ compatible with $\sigma'_\exists$.
    Since $\sigma_\exists$ is non-subgame-dominated and has a losing outcome from $w_{\le i}$, there exists a losing outcome~$w''$ in $\Out_{w_{\le i}}(\sigma'_\exists)$.
  \end{itemize}
  
  The assumption $\EA^-([S\rightarrow \sigma_\exists]) \cup w''$ is ensurable (as soon as there are at least two input symbols), and it is sufficient for $[S \rightarrow \sigma_\exists]$.
  By optimality of $\EA^-([S\rightarrow \sigma_\exists])$, it includes $w''$.
  Since $A$ includes $\EA^-([S\rightarrow \sigma_\exists])$, it also includes $w''$, which contradicts that $A$ is sufficient for $\sigma'_\exists$.
\end{proof}

\begin{lemma}\label{lem:Moore-shift}
  Given a parity automaton for a coherent scenario $S$ and a strategy $\sigma_\exists$ we can compute in polynomial time a Moore machine for $[S\rightarrow \sigma_\exists]$.
\end{lemma}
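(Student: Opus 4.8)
The plan is to realise $[S \rightarrow \sigma_\exists]$ as a synchronous product of the deterministic automaton recognising $S$ with a Moore machine implementing $\sigma_\exists$; as in the surrounding lemmas I take $\sigma_\exists$ to be given as a Moore machine $\mathcal{M} = \langle S_I, S_O, m_0, \delta, G\rangle$. The $S$-component is used for two tasks at once: to detect whether the current history is still a prefix of some word of $S$, and, when it is, to read off the letter that $S$ prescribes.

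First I would precompute, in the automaton $\mathcal{B} = \langle Q, q_0, \Delta, \chi\rangle$ for $S$, the set $L \subseteq Q$ of states whose recognised language is non-empty. Since $\mathcal{B}$ is deterministic, a history $h$ is a prefix of some $w \in S$ exactly when the run of $\mathcal{B}$ on $h$ ends in a state of $L$. The set $L$ is computable in polynomial time, being a parity non-emptiness check from each state (a state lies in $L$ iff it can reach a cycle whose least colour is even). Observe also that once the run on a history leaves $L$ it can never re-enter it, so being a prefix of a word of $S$ is prefix-closed and absorbing on failure.

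Next I would take the product of $\mathcal{B}$ with $\mathcal{M}$, collapsing all of $Q \setminus L$ into a single absorbing marker $\dagger$. A state is a pair $(q,m)$ with $q \in L \cup \{\dagger\}$; both components are updated synchronously on each letter, with the $\mathcal{B}$-component sent to $\dagger$ as soon as $\Delta$ would leave $L$. At an output state $(q,m)$ I would set the output to $G(m)$ when $q = \dagger$ (thus following $\sigma_\exists$ off the scenario), and otherwise to the letter $o$ with $\Delta(q,o) \in L$. The point that makes this well defined is coherence of $S$: liveness of $q$ guarantees at least one such output (the next letter of any $w \in S$ extending the history is an output leading into $L$), while coherence forbids two distinct outputs both continuing into $S$, so the choice is unique.

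Correctness then follows by induction on $|h|$: the $\mathcal{B}$-component reaches $\dagger$ exactly when $h$ is not a prefix of any word of $S$, in which case the machine emits $\sigma_\exists(h)$, and otherwise it emits the unique $S$-prescribed letter $w_{|h|+1}$, matching the definition of $[S \rightarrow \sigma_\exists]$. The construction is polynomial in $|\mathcal{B}|$ and $|\mathcal{M}|$. I expect the only genuinely delicate step to be the well-definedness of the output at live states, which is exactly where coherence of the scenario enters; computing $L$ and forming the product are routine.
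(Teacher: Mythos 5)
Your construction is correct and is essentially the paper's own proof: the paper likewise forms the product of the scenario automaton with the Moore machine for $\sigma_\exists$, precomputes the set $D$ of scenario states from which there is no accepting path (the complement of your $L$), outputs $G$ of the $\sigma_\exists$-component when the scenario component lies in $D$, and otherwise outputs the unique letter leading to a state outside $D$. Your refinements --- collapsing the dead states into an absorbing sink $\dagger$ and spelling out that liveness gives at least one continuing output while coherence gives at most one --- merely make explicit what the paper's proof leaves implicit.
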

\begin{proof}
  We consider the Moore machine where the state space is the product of the parity automaton and the Moore machine representing $\sigma_\exists$.
  We write $D$ for the states of the parity automaton for $S$ from which there is no accepting path.
  If $(s_1,s_2)$ is a state of the product then $\delta''((s_1,s_2), i) = (\delta(s_1,i),\delta'(s_2,i))$, where $\delta$ and $\delta'$ are the transition relation for $S$ and $\sigma_\exists$ respectively.
  Moreover if $s_1 \in D$, then $G'((s_1,s_2)) = G(s_2)$, and otherwise we know that in an output state there is only one~$o$ that leads to a state not in $D$, we then set $G'((s_1,s_2)) = o$.
  This Moore machine implements $[S\rightarrow \sigma_\exists]$.
\end{proof}

\begin{theorem}\label{thm:compute-optimal}
  Given a specification~$\spec$ and a scenario~$S$ as parity automata, we can compute in exponential time a parity automaton of polynomial size recognising a ensurable optimal assumption for $\spec$ and $S$.
\end{theorem}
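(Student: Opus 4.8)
The plan is to assemble the proof of Thm.~\ref{thm:compute-optimal} by chaining together the constructions established in the preceding lemmas, exactly mirroring how Thm.~\ref{thm:Moore-ensurable-optimal} was obtained but now threading the scenario $S$ through the pipeline. The target object is a parity automaton for $\EA^-([S \rightarrow \sigma_\exists])$ where $\sigma_\exists$ is a non-subgame-dominated strategy, since Thm.~\ref{thm:ea-minus-scenario} guarantees precisely that this language is $\mathcal{E}$-optimal for $\spec$ and $S$.

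First I would obtain a Moore machine for a memoryless winning strategy in the parity game for $\spec$; this is the one exponential-time step (or constant time under the oracle). Feeding this into Lem.~\ref{lem:Moore-non-dominated} yields a Moore machine for a memoryless non-dominated strategy $\sigma_\exists$. The delicate point relative to the scenario-free case is that Thm.~\ref{thm:ea-minus-scenario} requires $\sigma_\exists$ to be \emph{non-subgame-dominated}, not merely non-dominated; so I would either invoke that the memoryless strategy produced in Lem.~\ref{lem:Moore-non-dominated} already satisfies Berwanger's value condition at every history (which is what the characterisation in that proof actually enforces, since the conditions 1--3 are stated for all histories and hold state-wise for a memoryless strategy), or note that for memoryless strategies on a parity automaton non-domination and non-subgame-domination coincide because the relevant condition depends only on the current state.

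Next I would apply Lem.~\ref{lem:Moore-shift} to the parity automaton for $S$ together with the Moore machine for $\sigma_\exists$, producing in polynomial time a Moore machine implementing $[S \rightarrow \sigma_\exists]$. Then Lem.~\ref{lem:Moore-ensurable-optimal}, applied to this composed strategy and the automaton for $\spec$, gives in polynomial time a parity automaton recognising $\EA^-([S \rightarrow \sigma_\exists])$. By Thm.~\ref{thm:ea-minus-scenario} its language is $\mathcal{E}$-optimal for $\spec$ and $S$, which is the desired output. The size is polynomial because each construction (the product in Lem.~\ref{lem:Moore-shift}, the $\EA$ product of Lem.~\ref{lem:auto-EA}, and the $\Doomed$ product of Lem.~\ref{lem:auto-Doomed} combined in Lem.~\ref{lem:Moore-ensurable-optimal}) multiplies the state space only by a polynomial factor, and the single exponential blow-up is confined to synthesising the initial winning strategy.

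The main obstacle I anticipate is the bookkeeping around the subgame-domination hypothesis: Lem.~\ref{lem:Moore-non-dominated} is literally stated for \emph{non-dominated} strategies, whereas Thm.~\ref{thm:ea-minus-scenario} demands \emph{non-subgame-dominated}. I would want to verify carefully that the memoryless construction delivers the stronger property, arguing that because $\sigma'_\exists$ plays a fixed winning (or value-preserving) move at every reachable state regardless of the prefix, the three Berwanger conditions hold from every history $h$ and not merely from $\varepsilon$. Once this is settled, the rest is a routine composition of polynomial-time automata constructions, and the complexity claim follows by tracking the single exponential step.
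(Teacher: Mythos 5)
Your proposal is correct and follows essentially the same route as the paper's proof: compute a memoryless winning strategy (the one exponential step), convert it via Lem.~\ref{lem:Moore-non-dominated} to a memoryless non-dominated strategy, observe that memorylessness upgrades non-domination to non-subgame-domination, then compose Lem.~\ref{lem:Moore-shift}, Thm.~\ref{thm:ea-minus-scenario} and Lem.~\ref{lem:Moore-ensurable-optimal}. In fact your treatment of the non-dominated versus non-subgame-dominated gap is more careful than the paper's, which dismisses it with ``this should be clear from the definition.''
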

\begin{proof}
  We can compute in exponential time a memoryless strategy in parity game and as seen in Lem.~\ref{lem:Moore-non-dominated}, we can then compute in polynomial time a memoryless non-dominated strategy $\sigma'_\exists$.
  Since it is memoryless and non-dominated, it is in fact non-subgame-dominated (this should be clear from the definition of non-subgame-dominated).
  Then by Lem.~\ref{lem:Moore-shift}, we can compute a Moore machine for $[S \rightarrow \sigma_\exists]$.
  By Thm.~\ref{thm:ea-minus-scenario}, the corresponding assumption $\EA^-([S \rightarrow \sigma_\exists])$ is ensurable optimal for $\spec$ and $S$.
  By Lem~\ref{lem:Moore-ensurable-optimal}, $\EA^-([S \rightarrow \sigma_\exists])$ can be computed in polynomial time.
\end{proof}

\subsection{Generalisation}

We have seen in Lem.~\ref{lem:generalisation} that from a winning strategy for $H \Rightarrow \spec$ and a strongly winning strategy for $\spec$, we could obtain a strategy $\sigma_\exists$ that has both properties.
Furthermore, we can compute $\sigma'_\exists$ that is strongly non-dominated for $\spec$ and define a strategy that is both non-dominated for $\spec$ and winning for $H \Rightarrow \spec$.
We define $\sigma'_\exists[H \rightarrow \sigma_\exists]$ to be the function that maps $h$ to $\sigma_\exists(h)$ if $h \cdot \sigma_\exists(h)$ is a prefix of some $w \in H$ and maps $h$ to $\sigma'_\exists(h)$ otherwise.

\begin{lemma}\label{lem:ensurable-generalisation}
  If $\sigma_\exists$ is winning for $H \Rightarrow \spec$ and strongly winning for $\spec$ and $\sigma'_\exists$ is strongly non-dominated for $\spec$, then $\EA^-(\sigma'_\exists[H \rightarrow \sigma_\exists])$ contains $H$ and is ensurable-optimal for $\spec$.
\end{lemma}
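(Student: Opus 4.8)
The plan is to deduce everything from Theorem~\ref{thm:non-dominated-nonrestrictive}. Write $\tau = \sigma'_\exists[H \rightarrow \sigma_\exists]$. It suffices to prove two facts: (A) $\tau$ is non-dominated for $\spec$, and (B) $\tau$ is winning for $H \Rightarrow \spec$, that is $\Out(\tau) \cap H \subseteq \spec$. Given (A), Theorem~\ref{thm:non-dominated-nonrestrictive} immediately gives that $\EA^-(\tau)$ is ensurable-optimal for $\spec$. Given (B), I would obtain $H \subseteq \EA^-(\tau)$ as follows: for $w \in H$, either $w \in \Out(\tau)$, and then $w \in \spec \subseteq \EA(\tau)$ by (B), or $w \notin \Out(\tau)$, and then $w \in \allpaths \setminus \Out(\tau) \subseteq \EA(\tau)$; this is the same reasoning as the final paragraph of Lemma~\ref{lem:generalisation}. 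The one extra point, which did not arise for general assumptions, is that no word of $H$ may be doomed for $\tau$; I would derive $H \cap \Doomed(\tau) = \varnothing$ from the nonrestrictiveness of $\EA^-(\tau)$ (Lemma~\ref{lem:ea-minus-sufficient}) together with (B), since a doom-witnessing prefix of a word of $H$ is still extendable to an $H$-outcome of $\tau$, which by (B) lies in $\spec$ and thus rules out the doom.

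For (B), I would trace an outcome $w \in H \cap \Out(\tau)$ through its output positions. While $\tau$ follows $\sigma_\exists$, $w$ remains an outcome of $\sigma_\exists$; hence if $\tau$ never switches along $w$ then $w \in \Out(\sigma_\exists) \cap H \subseteq \spec$ by hypothesis. If $\tau$ does switch, let $h$ be the first output position where, by the definition of $\sigma'_\exists[H \rightarrow \sigma_\exists]$, the move $h \cdot \sigma_\exists(h)$ leaves the set of prefixes of $H$ although $w$ continues inside it. I would then show that $h$ lies in the winning region; since $\sigma'_\exists$ is non-subgame-dominated it is subgame winning by Lemma~\ref{lem:non-dominated-strongly-winning}, so it is winning from $h$, and therefore $w \in \spec$.

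For (A), I would check the value-preservation characterisation of non-dominated strategies underlying Lemma~\ref{lem:Moore-non-dominated}: for every history $h$, the value of $\tau$ from $h$ equals the value of $h$. Outside the prefixes of $H$, and at every switch, $\tau$ agrees with $\sigma'_\exists$, which is value-preserving by hypothesis; on the prefixes of $H$ where $\tau$ follows $\sigma_\exists$, the winning-region condition holds because $\sigma_\exists$ is strongly winning, and the remaining condition, that a history admitting a winning path keeps one winning outcome of $\tau$, holds because $\sigma_\exists$, being winning for $H \Rightarrow \spec$, keeps a winning $H$-outcome available along $H$-prefixes.

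The main obstacle is the behaviour of $\tau$ at the switch points between $\sigma_\exists$ and $\sigma'_\exists$. For (B) I must argue that $\sigma_\exists$ abandons the corridor of $H$-prefixes only where it is forced to, namely inside the winning region, so that $\sigma'_\exists$ is genuinely winning there; for (A), and for the doom-freeness needed in the first step, I must ensure that interleaving the two strategies neither breaks the parity condition in the winning region nor turns an $H$-outcome into a doomed one. I expect this step to be the delicate one and to use the specific shape of the strategy $\sigma_\exists$ built in Lemma~\ref{lem:generalisation}, which follows $H$ outside the winning region, rather than only the bare hypotheses of the statement.
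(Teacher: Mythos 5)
Your skeleton coincides with the paper's: prove (A) that $\tau = \sigma'_\exists[H \rightarrow \sigma_\exists]$ is non-dominated for $\spec$ and (B) that $\tau$ is winning for $H \Rightarrow \spec$, then obtain ensurable-optimality from Thm.~\ref{thm:non-dominated-nonrestrictive} and the containment $H \subseteq \EA^-(\tau)$ from (B). The paper disposes of (B) in one line, claiming that every outcome of $\tau$ that lies in $H$ is an outcome of $\sigma_\exists$; you rightly refuse this claim, since it is exactly at a switch point (where $h \cdot \sigma_\exists(h)$ leaves the prefixes of $H$ while the play continues inside them, following $\sigma'_\exists$) that it can break. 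But your proposal never closes this hole, and it cannot be closed from the hypotheses of the statement. Take $\spec$ to be the specification of \figurename~\ref{fig:counter-example}, namely $\Sigma_I \cdot (o_1 \cdot i_1 \mid o_2 \cdot i_2) \cdot \Sigma_O \cdot \allpaths$, take $H = \Sigma_I \cdot o_2 \cdot \allpaths$, let $\sigma_\exists$ output $o_1$ at every turn and $\sigma'_\exists$ output $o_2$ at every turn. Then $\sigma_\exists$ is winning for $H \Rightarrow \spec$ (its outcomes avoid $H$) and strongly winning for $\spec$ (the only histories admitting a winning strategy are those whose first three letters already decide membership in $\spec$), and $\sigma'_\exists$ is non-subgame-dominated. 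Yet $\tau$ outputs $o_2$ at the first turn (a switch, since for $h \in \Sigma_I$ the history $h \cdot o_1$ is a prefix of no word of $H$) and $o_1$ ever after (no switch, since membership in $H$ constrains only the second letter), so $i_1 \cdot o_2 \cdot i_1 \cdot o_1 \cdot (i_1 \cdot o_1)^\omega$ belongs to $(H \cap \Out(\tau)) \setminus \spec$: (B) fails and $H \not\subseteq \EA^-(\tau)$, so the lemma read literally is false and any blind proof of it must break somewhere. Your suspicion that the bare hypotheses are insufficient is thus correct, but the repair you anticipate is not the right one: instantiating $\sigma_\exists$ with the strategy built in Lemma~\ref{lem:generalisation} (which on this instance also plays $o_1$ at the first turn) leaves the counterexample intact. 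The missing hypothesis is on $H$: it must be non-output-restrictive (equivalently ensurable, by Lem.~\ref{lem:ensurable-nonrestrictive}). Granting it, (B) has a short proof in the paper's style: if $h$ is a prefix of a word of $H$ and compatible with $\sigma_\exists$, non-restrictiveness yields a word in $H \cap \Out(\sigma_\exists) \cap h \cdot \allpaths$, which has $h \cdot \sigma_\exists(h)$ as a prefix, so no switch occurs at $h$; inductively every $H$-outcome of $\tau$ is an outcome of $\sigma_\exists$, hence in $\spec$.

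Two further steps of yours fail for related reasons. First, even if a switch point $h$ were in the winning region, the inference ``$\sigma'_\exists$ is subgame winning, hence winning from $h$, therefore $w \in \spec$'' is invalid: after the switch the play follows $\tau$, not $\sigma'_\exists$, because $\tau$ re-evaluates its test at every output and may revert to $\sigma_\exists$; and an interleaving of two subgame-winning strategies need not be subgame winning (each may forever postpone the accepting visits the other was about to make). The same mixing objection undermines your value-preservation argument for (A), which moreover rests on a characterisation the paper only exploits for memoryless strategies; the paper instead proves (A) by a direct case analysis on the last common prefix $w_{\le j}$ of $\Out(\sigma''_\exists,\sigma_\forall)$ and $\Out(\tau,\sigma_\forall)$, invoking strong winning-ness of $\sigma_\exists$ together with winning-ness for $H \Rightarrow \spec$ when the next move stays among prefixes of $H$, and non-subgame-domination of $\sigma'_\exists$ otherwise. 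Second, your treatment of doomed words (extending a doom-witnessing prefix of a word of $H$ to an $H$-outcome of $\tau$) is the right idea, but the existence of that extension is again precisely non-restrictiveness of $H$; it does not follow from (B) alone.
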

\begin{proof}
  First notice that $\sigma'_\exists[H \rightarrow \sigma_\exists]$ is winning for $H\Rightarrow \spec$, since by its definition any of its outcome which belongs to $H$ is an outcome of $\sigma_\exists$ and thus satisfies $\spec$.
  We therefore have that $\EA^-(\sigma'_\exists[H \rightarrow \sigma_\exists])$ contains $H$.

  We now show that $\sigma'_\exists[H \rightarrow \sigma_\exists]$ is non-dominated for $\spec$.
  Let $\sigma''_\exists$ be a strategy and $\sigma_\forall$ be an environment strategy such that $w = \Out(\sigma''_\exists,\sigma_\forall)$ is in $\spec$.
  Let $w' = \Out(\sigma'_\exists[H \rightarrow \sigma_\exists],\sigma_\forall)$.
  Let $j$ be the largest index such that $w'_{\le j} = w_{\le j}$.
  Notice that since the environment strategy is the same for both outcomes, the first time they differ is on an output of the system, hence $j$ is odd.
  
  If $w_{\le j+1}$ is a prefix of a word in $H$, then we will follow $\sigma_\exists$ which is strongly winning for $\spec$.
  If there is a winning strategy from $w_{\le j}$ then as  $\sigma_\exists$ is strongly winning for $\spec$, $w$ satisfies $\spec$.
  Otherwise there is a strategy $\sigma_\forall'$ compatible with $w_{\le j}$ such that $\Out(\sigma''_\exists,\sigma_\forall')$ is losing and follows a path of $H$ after $w_{\le j+1}$.
  As $\sigma_\exists$ is winning for $H \Rightarrow \spec$, $\Out(\sigma'_\exists[H \rightarrow \sigma_\exists],\sigma_\forall') \in \spec$.
  
  If $w_{\le j +1}$ is not a prefix of a word in $H$, then we play according to $\sigma'_\exists$ which is strongly non-dominated.
  Hence either $\Out(\sigma'_\exists,\sigma_\forall) \in \spec$ and $\Out(\sigma'_\exists[H \rightarrow \sigma_\exists],\sigma_\forall) \in \spec$ or there exists $\sigma'_\forall$ that is compatible with $w_{\le j}$ and such that $\Out(\sigma'_\exists,\sigma_\forall')\in \spec$ and $\Out(\sigma_\exists'',\sigma_\forall)\not\in \spec$, then since it is compatible with $w_{\le j}$, $\Out(\sigma'_\exists[H \rightarrow \sigma_\exists],\sigma_\forall') = \Out(\sigma'_\exists,\sigma_\forall') \in \spec$.
  This shows that $\sigma'_\exists[H \rightarrow \sigma_\exists]$ is non-dominated.

  Then by Thm.~\ref{thm:non-dominated-nonrestrictive}, $\EA^-(\sigma'_\exists[H \rightarrow \sigma_\exists])$ is ensurable-optimal for $\spec$.
\end{proof}

\begin{lemma}\label{lem:Moore-generalisation}
  Given strategies $\sigma'_\exists$, $\sigma_\exists$ as Moore machines and assumption $H$ as a parity automaton, can construct in polynomial time a Moore machine for $\sigma'_\exists[H \rightarrow \sigma_\exists]$.
\end{lemma}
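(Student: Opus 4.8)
The plan is to build the desired Moore machine as a synchronous product of the three given objects: the Moore machine for $\sigma'_\exists$ (with transition $\delta'$ and output $G'$), the Moore machine for $\sigma_\exists$ (with transition $\delta''$ and output $G''$), and the parity automaton for $H$ (with transition $\delta_H$). The product keeps track, along every play, of the current state of each component, so that at any history $h$ we simultaneously know the move $\sigma'_\exists(h)$ would play, the move $\sigma_\exists(h)$ would play, and the state of $H$ reached after reading $h$. This mirrors the construction used for $[S\rightarrow \sigma_\exists]$ in Lem.~\ref{lem:Moore-shift}.

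First I would precompute, in the parity automaton for $H$, the set $D$ of states from which no accepting path exists. As in Lem.~\ref{lem:Moore-shift} and Lem.~\ref{lem:auto-Doomed}, this can be done in polynomial time since it amounts to deciding emptiness from each state, i.e. solving a one-player parity game. The point of $D$ is that a finite word $u$ is a prefix of some $w\in H$ exactly when the state reached after reading $u$ does not lie in $D$.

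The transition function of the product applies each component transition to the read signal, so on a signal $x \in \Sigma_I \cup \Sigma_O$ we move $(s',s'',q)$ to $(\delta'(s',x),\delta''(s'',x),\delta_H(q,x))$, keeping the three components synchronised with the history. It remains to define the output function on output states. At an output state $(s',s'',q)$, let $o = G''(s'')$ be the move that $\sigma_\exists$ prescribes. By definition of $\sigma'_\exists[H \rightarrow \sigma_\exists]$ we should play $o$ precisely when $h \cdot o$ is still a prefix of a word of $H$, which by the property of $D$ holds exactly when $\delta_H(q,o) \notin D$. Hence I set $G(s',s'',q) = G''(s'')$ if $\delta_H(q,G''(s'')) \notin D$, and $G(s',s'',q) = G'(s')$ otherwise.

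Correctness then follows by induction on the length of a history: the $q$-component is always the state of the $H$-automaton after reading $h$, so the output function reproduces exactly the branching in the definition of $\sigma'_\exists[H \rightarrow \sigma_\exists]$. The product has size polynomial in the three inputs and, together with the polynomial-time computation of $D$, gives the claimed polynomial bound. The only step that is not purely syntactic is the computation of $D$; but this is the same one-player parity emptiness computation already invoked in Lem.~\ref{lem:Moore-shift} and Lem.~\ref{lem:auto-Doomed}, so I expect no real difficulty there.
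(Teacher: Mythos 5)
Your proposal is correct and follows essentially the same approach as the paper: a synchronous product of the two Moore machines with the deterministic parity automaton for $H$, precomputing the set $D$ of states with no accepting path, and using membership of the successor $H$-state in $D$ to decide whether to output $\sigma_\exists$'s move or $\sigma'_\exists$'s move. The only difference is bookkeeping --- the paper adds a fourth component in $\{1,2\}$ that records the choice when the input is read, whereas you recompute the test $\delta_H(q,G''(s''))\notin D$ directly in the output function at output states; the two are equivalent (and yours is arguably slightly cleaner).
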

\begin{proof}
  We write $D$ the set of states of the parity automaton from which there is no accepting path; this can be computed in polynomial time.
  We write $\delta'$ and $\delta$ the transition function for the Moore machine of $\sigma'_\exists$ and $\sigma_\exists$ respectively, $G'$ and $G$ their output functions and $\delta''$ for transition of the parity automaton recognising $H$.
  We consider a Moore machines whose state space is the product of the two Moore machines, the parity automaton and a fourth component in $\{1,2\}$ to mark which strategy to follow; we write $(s,t,u,v)$ a state of the product (note that $s$ corresponds to $\sigma'_\exists$ and $t$ to $\sigma_\exists$).
  We now describe the transition function $\delta'''$ of this Moore machine for a state $(s,t,u,v)$ and an input $i$, and the output function $G''$.
  If $v= 1$ we follow $\sigma_\exists(h)$ then $G''(s,t,u,v) = G(t)$ and otherwise $v=2$, we follow $\sigma'_\exists(h)$ and $G''(s,t,u,v) = G'(s)$.
  If $u$ is an input state and $\delta''(u, G(\delta(t,i)))\not\in D$ then the path following $\sigma_\exists$ is prefix of some word in $H$, so $\sigma'_\exists[H \rightarrow \sigma_\exists]$ follows $\sigma_\exists(h)$, which means for us $\delta'''((s,t,u,v),i)=(\delta'(s,i),\delta(t,i),\delta''(u,i),1)$.
  Otherwise we follow $\sigma'_\exists(h)$, which means $\delta'''((s,t,u,v),i)=(\delta'(s,i),\delta(t,i),\delta''(u,i),2)$.
  This implements the strategy $\sigma'_\exists[H \rightarrow \sigma_\exists]$.
\end{proof}

\begin{theorem}\label{thm:generalization}
  There is an exponential algorithm that given $\spec$ and $H$ sufficient for $\spec$ as parity automata, computes a parity automaton whose language $H'$ is such that $H \subseteq H'$ and $H'$ ensurable-optimal for $\spec$.
\end{theorem}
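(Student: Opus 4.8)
The proof will string together the lemmas in the natural order. First I would compute a winning strategy for the derived specification $H \Rightarrow \spec$. Since $H$ is sufficient for $\spec$, such a strategy exists, and by taking the product of the parity automata for $H$ and $\spec$ we obtain a parity automaton for $H \Rightarrow \spec = \spec \cup (\allpaths \setminus H)$; solving this parity game yields a memoryless winning strategy $\sigma_\exists$, and by the standard subset-construction-free argument this costs exponential time (or constant time with a parity-game oracle). I would then invoke the construction behind Lem.~\ref{lem:generalisation} to turn $\sigma_\exists$ into a strategy that is \emph{both} winning for $H \Rightarrow \spec$ and strongly (indeed subgame) winning for $\spec$, so that its hypotheses are met.

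Next I would compute, by Lem.~\ref{lem:Moore-non-dominated}, a Moore machine implementing a memoryless strongly non-dominated strategy $\sigma'_\exists$ for $\spec$ alone; being memoryless and non-dominated, it is in fact non-subgame-dominated. With both a Moore machine for $\sigma_\exists$ (winning for $H \Rightarrow \spec$ and strongly winning) and one for $\sigma'_\exists$ (strongly non-dominated) in hand, Lem.~\ref{lem:Moore-generalisation} lets me build in polynomial time a Moore machine for the combined strategy $\sigma'_\exists[H \rightarrow \sigma_\exists]$. By Lem.~\ref{lem:ensurable-generalisation} this combined strategy is non-dominated for $\spec$ and its associated assumption $\EA^-(\sigma'_\exists[H \rightarrow \sigma_\exists])$ both contains $H$ and is ensurable-optimal for $\spec$.

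Finally I would apply Lem.~\ref{lem:Moore-ensurable-optimal} to the Moore machine for $\sigma'_\exists[H \rightarrow \sigma_\exists]$ and the parity automaton for $\spec$, producing in polynomial time a parity automaton recognising $\EA^-(\sigma'_\exists[H \rightarrow \sigma_\exists])$; this is the desired automaton whose language $H'$ satisfies $H \subseteq H'$ and is ensurable-optimal. The only exponential step is the initial computation of the memoryless winning strategies in the two parity games; every other step is polynomial, giving an overall exponential bound and polynomial size for the output automaton. The main obstacle, which is entirely discharged by Lem.~\ref{lem:ensurable-generalisation}, is verifying that the composite strategy $\sigma'_\exists[H \rightarrow \sigma_\exists]$ simultaneously preserves winning for $H \Rightarrow \spec$ (needed for $H \subseteq H'$) and non-dominance for $\spec$ (needed for ensurable-optimality via Thm.~\ref{thm:non-dominated-nonrestrictive}); here the delicate point is that the two sub-strategies must agree whenever a history stays inside $H$, which is exactly why the definition of $\sigma'_\exists[H \rightarrow \sigma_\exists]$ switches based on whether $h \cdot \sigma_\exists(h)$ remains a prefix of a word in $H$.
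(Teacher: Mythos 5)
Your proposal is correct and follows the same architecture as the paper's proof: solve the parity game for $H \Rightarrow \spec$ on the product automaton, obtain a memoryless well-behaved strategy for $\spec$, combine the two into $\sigma'_\exists[H \rightarrow \sigma_\exists]$ via Lem.~\ref{lem:Moore-generalisation}, conclude containment of $H$ and ensurable-optimality via Lem.~\ref{lem:ensurable-generalisation}, and extract the parity automaton with Lem.~\ref{lem:Moore-ensurable-optimal}. In fact you are more careful than the paper in discharging the hypotheses of Lem.~\ref{lem:ensurable-generalisation}: the paper's own proof only computes a winning strategy $\sigma_\exists$ for $H \Rightarrow \spec$ (never arguing it is also strongly winning for $\spec$) and a memoryless strategy for $\spec$ that it asserts to be strongly winning (where the lemma asks for a strongly non-dominated one), whereas you explicitly route through Lem.~\ref{lem:generalisation} to upgrade the first strategy so that it is also strongly (subgame) winning for $\spec$, and through Lem.~\ref{lem:Moore-non-dominated} to obtain a memoryless non-dominated, hence non-subgame-dominated, strategy. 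The one detail you leave implicit is that the strategy produced by the construction behind Lem.~\ref{lem:generalisation} must itself be realised as a polynomial-size Moore machine before Lem.~\ref{lem:Moore-generalisation} can be applied; this follows from a product construction analogous to Lem.~\ref{lem:Moore-shift} (tracking the state of the automaton for $H$, membership in the winning region $W$, and the states of the two strategies' Moore machines), but neither your proposal nor the paper spells it out.
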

\begin{proof}
  Assume we are given automata $\calA_H$ for $H$, and $\calA_\spec$ for $\spec$.
  We construct $\calA_{H\Rightarrow\spec}$ recognising $\spec \cup \allpaths \setminus H$.
  We first compute a winning strategy $\sigma_\exists$ in $\calA_{H\Rightarrow\spec}$.
  We can compute in exponential time a memoryless strategy $\sigma_\exists'$ which is winning in $\calA_\spec$ from all states from which there is a winning strategy~\cite{GTW02}, it is in fact strongly winning.
  By Lem.~\ref{lem:Moore-generalisation} we can construct a Moore machine for $\sigma'_\exists[H \rightarrow \sigma_\exists]$, and by Lem.~\ref{lem:Moore-ensurable-optimal} we can obtain an automaton for $\EA^-(\sigma'_\exists[H \rightarrow \sigma_\exists])$.
  Thanks to Lem.~\ref{lem:ensurable-generalisation}, $\EA^-(\sigma'_\exists[H \rightarrow \sigma_\exists])$ is ensurable optimal for $\spec$ and contains $H$.
\end{proof}

\section{Input-assumptions}\label{sec:input}

\subsection{Infinity of optimals}

As we show now, in general there can be an infinite number of incomparable assumptions that are sufficient.

\begin{theorem}\label{thm:infinity}
  There is a specification $\spec$ for which there are an infinite number of optimal input assumptions.
\end{theorem}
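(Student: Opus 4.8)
The plan is to exhibit a single specification $\spec$ together with an infinite family of input assumptions $A_n$, each of which is sufficient for $\spec$, such that no two are comparable by inclusion and each is optimal within $\mathcal{I}$. The natural source for such a family is a specification where the controller needs the environment to eventually produce some ``good'' input, but where the environment can legitimately delay that input for an arbitrary but bounded amount of time along different branches. Concretely, I would reuse the style of the witnessing game from Thm.~\ref{thm:infinity-optimal} (\figurename~\ref{fig:infinity-general}), where strategies $\sigma_\exists^n$ differ only in how long they loop before committing, and adapt it so that the distinguishing behaviour depends purely on the \emph{input} word rather than on an interleaving of inputs and outputs. A candidate is a specification over a loop on inputs, say of the shape $\spec = \Sigma_I \cdot \bigl(\F i_1\bigr)$-flavoured objective where the controller can win only if the environment plays $i_1$ at some point, but where for each $n$ one can carve out an input language $A_n$ that permits exactly the input sequences in which $i_1$ occurs within the first $n$ rounds (or some analogous bounded-delay condition).

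First I would fix $\spec$ and describe explicitly the infinite family $A_n = \{ w \mid \pi_I(w) \in L_n \}$ for suitable input languages $L_n \subseteq \Sigma_I^\omega$, checking immediately that each $A_n \in \mathcal{I}$ (this is automatic from the definition since membership depends only on $\pi_I(w)$). Second, I would verify that each $A_n$ is sufficient for $\spec$: for this I exhibit, for each $n$, a strategy $\sigma_\exists^n$ such that $\Out(\sigma_\exists^n) \cap A_n \subseteq \spec$, i.e. under the input restriction $L_n$ the controller has a genuine winning response. Third, and this is the crux, I would show the $A_n$ are pairwise incomparable and that each is $\mathcal{I}$-optimal. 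Incomparability means constructing, for $m \ne n$, an input word in $L_m \setminus L_n$ and one in $L_n \setminus L_m$; this forces $A_m \not\subseteq A_n$ and $A_n \not\subseteq A_m$. Optimality requires showing that no strictly larger input assumption $B \supsetneq A_n$ (with $B \in \mathcal{I}$) is still sufficient: I would argue that any input word added to $L_n$ corresponds to an environment behaviour against which \emph{no} controller strategy can guarantee $\spec$, because the extra delay breaks the winning response while the input-only nature of $B$ forbids the controller from reacting differently on the newly allowed inputs.

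The main obstacle I expect is the optimality argument, specifically ruling out \emph{any} sufficient enlargement rather than just the obvious ones. Because input assumptions cannot condition on outputs, the controller must commit to a single output response compatible with \emph{all} inputs permitted by the assumption; so enlarging $L_n$ to include an input word with a longer delay would demand a single output strategy that wins both for the short-delay and the long-delay inputs simultaneously, and I must show this is impossible for the chosen $\spec$. The technical heart is therefore a diagonalisation-style contradiction: assume $B \in \mathcal{I}$ with $A_n \subsetneq B$ is sufficient via some $\sigma_\exists$, extract an input word $u \in \pi_I(B) \setminus L_n$, run $\sigma_\exists$ against $u$, and show the resulting outcome lies in $B$ (since $B$ is an input assumption and $u$ is permitted) but not in $\spec$, contradicting sufficiency. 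I would make sure the geometry of $\spec$ is chosen so that this losing outcome is forced regardless of how $\sigma_\exists$ plays, which is exactly what a bounded-delay ``eventually $i_1$'' structure guarantees.
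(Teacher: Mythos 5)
Your overall architecture (exhibit a family $A_n$, check membership in $\mathcal{I}$, check sufficiency, then prove incomparability and optimality by a diagonalisation against an arbitrary sufficient input assumption $B \supsetneq A_n$) is exactly the shape of the paper's proof, and your last paragraph correctly isolates the crux: an input assumption cannot condition on outputs, so enlarging it would force a single strategy to win simultaneously against the old and the newly admitted inputs. However, the concrete witness you sketch does not work, and the witness is the entire content of this theorem. The family ``$L_n =$ input sequences in which $i_1$ occurs within the first $n$ rounds'' is monotone: $L_n \subseteq L_{n+1}$, so the resulting $A_n$ are pairwise \emph{comparable}, and (if all are sufficient) no $A_n$ is optimal, being strictly contained in the sufficient $A_{n+1}$; this breaks your own step three and the statement itself. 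Moreover, a specification of the literal flavour ``the controller wins iff the environment eventually plays $i_1$'' is an input language, and for any input-based $\spec$ the assumption $\spec$ itself is the \emph{unique} optimal input assumption: if an input assumption $B$ contains some $w \notin \spec$, then for every strategy $\sigma_\exists$ the outcome $\Out(\sigma_\exists,\pi_I(w))$ lies in $B$ (same input sampling) and outside $\spec$ (input-based), so $B$ is not sufficient. Hence no such spec can have infinitely many optimal input assumptions.

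What is needed --- and what the paper does with $\spec = (\lnot o_2) \U (o_2 \land \X i_2)$ (\figurename~\ref{fig:infinity}) --- is a specification forcing an \emph{irreversible output commitment} whose success depends on a single not-yet-seen input: the controller chooses when to first play $o_2$, and the input immediately following must be $i_2$. The assumptions are then $A_n = \allpaths \setminus (\Sigma_I \cdot \Sigma_O)^n \cdot i_1 \cdot \Sigma_O \cdot \allpaths$, i.e.\ ``the $(n+1)$-st input is not $i_1$'': each constrains a \emph{different single position}, so they are pairwise incomparable; each is sufficient via the strategy that plays $o_1$ for $n$ rounds and then $o_2$; and optimality follows by precisely the diagonalisation you describe, instantiated as follows. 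If $B \in \mathcal{I}$ with $A_n \subseteq B$ is sufficient for some $\sigma_\exists$, take $w \notin A_n$ and set $w' = \Out(\sigma_\exists,\pi_I(w))$; if $w'$ were in $\spec$, then $\sigma_\exists$ first plays $o_2$ at some round $k \neq n$, and perturbing the input word at rounds $k+1$ and $n+1$ yields an outcome of $\sigma_\exists$ that lies in $A_n \subseteq B$ but outside $\spec$, a contradiction; hence $w' \notin \spec$, so $w' \notin B$, so $w \notin B$, giving $B \subseteq A_n$. Your plan becomes the paper's proof once the witness family is repaired in this way, but as written the construction you would carry out fails at the incomparability and optimality steps.
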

\begin{proof}
  Consider the B\"uchi automaton in \figurename~\ref{fig:infinity} with
  $\Sigma_I = \{ i_1, i_2\}$ and $\Sigma_O = \{ o_1 ,o_2\}$, and $L$ its language.
  In the automaton, the objective of the controller is to reach $s_3$.
  For that, it can only influence the transition out of state $s_1$.
  Given $n$, we define the language $A_n$ to be
  $S^\omega \setminus \{(\Sigma_I \cdot \Sigma_O )^n \cdot i_1 \cdot \Sigma_O \cdot\allpaths \}$.
  Under assumption $A_n$, the winning strategy~$\sigma_\exists$ consists in playing $o_2$ at step $n$.
  If assumption $A_n$ is respected  the path goes to $s_3$ and is accepting.
  This shows that $A_n$ is sufficient for $L$.
  
  Each assumption $A_n$ is a safety assumption and an input assumption:
  it is safety because all words not in $A_n$ have a prefix in $(\Sigma_I \cdot \Sigma_O)^n \cdot i_1$ which is a bad prefix, and it is an input assumption because if $w \in A_n \land \pi_I(w) = \pi_I(w')$ then $w'_{2 n + 1} = w_{2 n + 1} \ne i_1$ and therefore $w' \in A_n$.

  We now show that each $A_n$ is optimal.
  Let $B$ be a sufficient assumption such that $A_n \subseteq B$ and $\sigma_\exists$ be a strategy such that $\Out(\sigma_\exists) \cap B \subseteq L$.
  Let $w \not\in A_n$ we will prove $w \not\in B$, which shows $B \subseteq A_n$ and optimality of $A_n$.
  By definition of $A_n$, $w_{2 n + 1} = i_1$ 
  Let $w' = \Out(\sigma_\exists,\pi_I(w))$, we have that $\pi_I(w') = \pi_I(w)$ and since $B$ is an input assumption $w' \in B \Leftrightarrow w \in B$.
  Assume towards a contradiction that $w' \in L$.
  Then there is an index $k \ge 1$ such that $w'_{2 k} = o_2$, $w'_{2 k +1} = i_2$ and for all $1 \le j < k$, $w'_{2 j} = o_1$.
  Since $w_{2n+1} = i_1$, $k \ne n$.
  \begin{itemize}
  \item If $k > n$ then let $w'' =\Out(\sigma_\exists, u)$ where $u = \pi_I(w')_{\le n} \cdot i_2 \cdot {i_1}^\omega$.
    We have $w''_{2 n + 1} = i_2$ therefore $w'' \in A_n \subseteq B$, and $w'' \in \Out(\sigma_\exists)$.
    After reading $w''_{\le 2 n} = w'_{\le 2 n}$ we are still in state $s_0$.
    After that we read $i_2$ which brings us in $s_1$ and then the inputs are all $i_1$, therefore $w''$ is losing which contradicts $B \cap \Out(\sigma_\exists) \subseteq L$.
  \item If $k < n$ then let $w'' =\Out(\sigma_\exists, u)$ where $u_{k+1} = i_1$, $u_{n+1} = i_2$ and $u_j = \pi_I(w')_j$ otherwise.
  We have that $w'' \in A_n \subseteq B$, and $w'' \in \Out(\sigma_\exists)$.
  Since $k < n$, $w''_{\le 2 k} = w'_{\le 2 k}$ and $w''_{2 k + 1} = i_1$, therefore $w''$ is losing which contradicts $B \cap \Out(\sigma_\exists) \subseteq L$.
  \end{itemize}
  This shows that $w' \in \Out(\sigma_\exists) \setminus L$ and since $\Out(\sigma_\exists) \cap B \subseteq L$, $w' \not\in B$ and as a consequence $w\not\in B$.
\end{proof}

\subsection{Link with remorsefree strategies}
In this section we show a link between input assumptions and a class of strategies called remorsefree.
\begin{definition}[Remorsefree]
  Given a specification~$\spec$, a strategy~$\sigma_\exists$ is \emph{remorsefree} if for all~$\sigma'_\exists$ and $w \in \Sigma_I^\omega$,
  $\out(\sigma'_\exists,w) \models \spec$ implies $\out(\sigma_\exists,w) \models \spec$. 
  This is the notion used in~\cite{DF11} for dominance.
  A strategy~$\sigma_\exists$ is \emph{remorsefree-admissible} if for all~$\sigma'_\exists$ either
  $\forall w \in \Sigma_I^\omega.\ \out(\sigma'_\exists,w) \models \spec \Rightarrow \out(\sigma_\exists,w) \models \spec$
  or
  $\exists w \in \Sigma_{I}^\omega.\ \out(\sigma_\exists,w) \models \spec \not\Rightarrow \out(\sigma'_\exists,w) \models \spec$.
\end{definition}

\begin{lemma}
  Given a strategy $\sigma_\exists$ of \eve, if $\spec \ne \varnothing$ then the assumption
  \[
  \IA(\sigma_\exists) = \{ w \in \allpaths \mid \pi_I(w) \not\in \pi_I( \out(\sigma_\exists) \setminus \spec) \}
  \]
  is an input-assumption that is sufficient for $\sigma_\exists$.
  Moreover if $A$ is an input-assumption which is sufficient for $\sigma_\exists$ then $A \subseteq \IA(\sigma_\exists)$.
\end{lemma}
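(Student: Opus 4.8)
The plan is to verify three things in turn, all by unfolding the definitions and arguing by contradiction: that $\IA(\sigma_\exists)$ depends only on the input projection (so it is an input-assumption), that it is sufficient for $\sigma_\exists$, and finally the maximality claim that every input-assumption sufficient for $\sigma_\exists$ is contained in $\IA(\sigma_\exists)$. Throughout I will use that, by the definition of sufficiency, ``$A$ sufficient for $\sigma_\exists$'' means exactly $\out(\sigma_\exists) \cap A \subseteq \spec$, and that membership of $w$ in $\IA(\sigma_\exists)$ is phrased as the single condition $\pi_I(w) \notin \pi_I(\out(\sigma_\exists) \setminus \spec)$ on $\pi_I(w)$ alone.

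First I would note that this last observation immediately gives the input-assumption property: if $\pi_I(w) = \pi_I(w')$ and $w \in \IA(\sigma_\exists)$, then the defining condition, being a statement about $\pi_I(w) = \pi_I(w')$, holds for $w'$ as well, so $w' \in \IA(\sigma_\exists)$; this is precisely the closure required of a member of $\mathcal{I}$. For sufficiency I would take an arbitrary $w \in \out(\sigma_\exists) \cap \IA(\sigma_\exists)$ and show $w \in \spec$. If on the contrary $w \notin \spec$, then $w \in \out(\sigma_\exists) \setminus \spec$, whence $\pi_I(w) \in \pi_I(\out(\sigma_\exists) \setminus \spec)$, directly contradicting $w \in \IA(\sigma_\exists)$. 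Hence $\out(\sigma_\exists) \cap \IA(\sigma_\exists) \subseteq \spec$, i.e.\ $\IA(\sigma_\exists)$ is sufficient for $\sigma_\exists$.

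For the maximality part I would fix an input-assumption $A$ that is sufficient for $\sigma_\exists$ and take any $w \in A$, aiming at $w \in \IA(\sigma_\exists)$. Suppose not: then $\pi_I(w) \in \pi_I(\out(\sigma_\exists) \setminus \spec)$, so there is a losing outcome $v \in \out(\sigma_\exists) \setminus \spec$ with $\pi_I(v) = \pi_I(w)$. Here is the one step that genuinely uses the hypothesis on $A$: since $A$ is an input-assumption and $w \in A$, the equality $\pi_I(v) = \pi_I(w)$ forces $v \in A$. But then $v \in \out(\sigma_\exists) \cap A \subseteq \spec$, contradicting $v \notin \spec$. Therefore $w \in \IA(\sigma_\exists)$, and as $w$ was arbitrary, $A \subseteq \IA(\sigma_\exists)$.

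I do not expect any of these steps to be a real obstacle; the proof is definition-chasing, and the only substantive ingredient is the use of input-closure of $A$ in the final step, which is exactly what makes $\IA(\sigma_\exists)$ the \emph{largest} input-assumption sufficient for $\sigma_\exists$ rather than merely a sufficient one. I would remark that the hypothesis $\spec \neq \varnothing$ is not actually invoked in any of the three arguments: when $\spec = \varnothing$ one simply computes $\IA(\sigma_\exists) = \varnothing$ (every input word is the projection of some outcome of $\sigma_\exists$), which is still an input-assumption and vacuously sufficient, and the maximality argument still forces any competitor to be empty. I would keep the hypothesis only to exclude this degenerate empty assumption and to align with the companion statements on remorsefree strategies that follow.
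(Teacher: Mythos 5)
Your proof is correct and follows essentially the same route as the paper's: input-closure is immediate because membership in $\IA(\sigma_\exists)$ depends only on $\pi_I(w)$, sufficiency follows by unfolding the definition, and maximality uses exactly the same witness (a losing outcome of $\sigma_\exists$ with the same input projection) together with input-closure of $A$, merely applying sufficiency and closure in the opposite order. Your closing observation that the hypothesis $\spec \ne \varnothing$ is never actually used (and that $\spec = \varnothing$ just yields $\IA(\sigma_\exists) = \varnothing$) is accurate; the paper's proof does not invoke it either.
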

\begin{proof}
  First notice that $\IA$ is an input assumption: if $\pi(w) = \pi(w')$ then $w \in \IA(\sigma_\exists) \Leftrightarrow w' \in \IA(\sigma_\exists)$.

  We now show that $\IA(\sigma_\exists)$ is sufficient for $\sigma_\exists$.
  Let $w \in \IA(\sigma_\exists) \cap \Out(\sigma_\exists)$.
  Since $w \in \IA(\sigma_\exists)$, $\pi_I(w) \not\in \pi_I(\Out(\sigma_\exists) \setminus \spec)$.
  Therefore $w \not\in \Out(\sigma_\exists) \setminus \spec$.
  Hence $w\in \spec$.

  Finally, we show that if $A$ is an input-assumption sufficient for $\sigma_\exists$ then $A \subseteq \IA(\sigma_\exists)$.
  Let $A$ be a sufficient input-assumption and $w \in A$.
  Assume towards a contradiction that $w \not\in \IA(\sigma_\exists)$.
  Then $\pi_I(w) \in \pi_I(\Out(\sigma_\exists) \setminus \spec)$.
  Let $w' \in \Out(\sigma_\exists) \setminus \spec$ such that $\pi_I(w') = \pi_I(w)$.
  Since $A$ is sufficient, $w' \not\in A$.
  Since $A$ is an input-assumption and $\pi_I(w') = \pi_I(w)$ we also have $w \not\in A$ which is a contradiction.
\end{proof}

In~\cite{DF11}, Finkbeiner and al. show the following.
\begin{theorem}{\cite{DF11}}
  There is a remorsefree strategy if, and only if, there is a unique minimal assumption for $\spec$.
  Moreover the minimal assumption is the input-assumption sufficient for the remorsefree strategy.
\end{theorem}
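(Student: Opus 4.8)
The plan is to route everything through the preceding lemma, which states (for $\spec \neq \varnothing$) that $\IA(\sigma_\exists)$ is the largest input-assumption sufficient for $\sigma_\exists$. I read ``unique minimal assumption'' as: there is a greatest sufficient input-assumption $A^\star$, i.e. one that contains every input-assumption sufficient for $\spec$; this is the weakest (optimal) one and, being greatest, is automatically unique. The whole statement then reduces to a single correspondence, namely that $\sigma_\exists$ is remorsefree if and only if $\IA(\sigma_\exists)$ contains every sufficient input-assumption. Proving this correspondence in both directions also yields the ``moreover'' clause for free, since the greatest element turns out to be exactly $\IA$ of the remorsefree strategy.

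For the forward direction I would assume $\sigma_\exists$ is remorsefree and let $A$ be any input-assumption sufficient for $\spec$, say via a strategy $\sigma'_\exists$ with $\Out(\sigma'_\exists)\cap A\subseteq\spec$, and show $A\subseteq\IA(\sigma_\exists)$. Pick $w\in A$ and consider $v=\Out(\sigma'_\exists,\pi_I(w))$, the unique outcome of $\sigma'_\exists$ with input projection $\pi_I(w)$. Since $\pi_I(v)=\pi_I(w)$ and $A$ is an input-assumption, $v\in A$; as $v$ is an outcome of $\sigma'_\exists$ and $A$ is sufficient for $\sigma'_\exists$, we get $v\models\spec$, that is $\Out(\sigma'_\exists,\pi_I(w))\models\spec$. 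Remorse-freeness of $\sigma_\exists$ transfers this to $\Out(\sigma_\exists,\pi_I(w))\models\spec$. As this is the only outcome of $\sigma_\exists$ with projection $\pi_I(w)$, I conclude $\pi_I(w)\notin\pi_I(\Out(\sigma_\exists)\setminus\spec)$, hence $w\in\IA(\sigma_\exists)$. Thus $\IA(\sigma_\exists)$ is greatest, so it is the unique minimal assumption.

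For the converse, suppose a greatest sufficient input-assumption $A^\star$ exists. Choosing a strategy $\sigma_\exists$ for which $A^\star$ is sufficient, the preceding lemma gives $A^\star\subseteq\IA(\sigma_\exists)$; since $\IA(\sigma_\exists)$ is itself a sufficient input-assumption and $A^\star$ is greatest, equality $A^\star=\IA(\sigma_\exists)$ follows. I would then verify $\sigma_\exists$ is remorsefree by contradiction: take $w\in\Sigma_I^\omega$ and $\sigma'_\exists$ with $\Out(\sigma'_\exists,w)\models\spec$ but $\Out(\sigma_\exists,w)\not\models\spec$. On the one hand $w\in\pi_I(\Out(\sigma_\exists)\setminus\spec)$, so $\Out(\sigma'_\exists,w)\notin\IA(\sigma_\exists)=A^\star$. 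On the other hand the only outcome of $\sigma'_\exists$ with projection $w$ is the winning $\Out(\sigma'_\exists,w)$, so $w\notin\pi_I(\Out(\sigma'_\exists)\setminus\spec)$ and hence $\Out(\sigma'_\exists,w)\in\IA(\sigma'_\exists)$. But $\IA(\sigma'_\exists)$ is a sufficient input-assumption not contained in $A^\star$, contradicting that $A^\star$ is greatest. Hence $\sigma_\exists$ is remorsefree, and the ``moreover'' clause holds with exactly this $\sigma_\exists$.

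The main obstacle is conceptual rather than computational and lives in the converse direction: I must ensure that the class over which ``minimal'' is taken is rich enough that $\IA(\sigma'_\exists)$ genuinely competes against $A^\star$. For an arbitrary input word $w$ the witnessing $\sigma'_\exists$, and the resulting $\IA(\sigma'_\exists)$, need not be $\omega$-regular, so if assumptions are confined to $\omega$-regular languages one must either argue that the greatest element dominates even non-regular sufficient input-assumptions, or restrict attention to finitely representable witnesses. A secondary point to nail down is that ``unique minimal'' is equivalent to ``greatest''; the forward direction already exhibits a greatest element whenever a remorsefree strategy exists, which reconciles the two readings and keeps the equivalence tight.
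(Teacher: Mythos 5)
The paper does not actually prove this statement: it is imported from \cite{DF11} (``In \cite{DF11}, Finkbeiner and al.\ show the following''), so there is no in-paper proof to compare yours against, and I can only judge your argument on its own terms. It is correct under the reading you adopt, and it is the natural reconstruction the citation stands in for: both directions route through the preceding lemma ($\IA(\sigma_\exists)$ is the largest input-assumption sufficient for $\sigma_\exists$) together with the fact that $u \mapsto \Out(\sigma_\exists,u)$ puts input words in bijection with outcomes, so that $\pi_I(w)\notin\pi_I(\Out(\sigma_\exists)\setminus\spec)$ is equivalent to $\Out(\sigma_\exists,\pi_I(w))\models\spec$. Your forward direction correctly shows that $\IA(\sigma_\exists)$ contains every sufficient input-assumption when $\sigma_\exists$ is remorsefree, and the converse contradiction via the competitor $\IA(\sigma'_\exists)$ is sound.

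Two comments on the caveats you flag. The $\omega$-regularity worry is legitimate but is not a defect of your proof relative to this paper: the paper defines assumptions to be $\omega$-regular and then systematically ignores this, treating $\EA(\sigma_\exists)$, $\EA^-(\sigma_\exists)$ and $\IA(\sigma_\exists)$ of arbitrary (infinite-memory) strategies as assumptions --- in particular in the very lemma you invoke and in Thm.~\ref{thm:remorse-free-input} --- so you are entitled to the same convention. The reading of ``unique minimal'' is the one place where something could genuinely be missing: your converse uses that $A^\star$ is \emph{greatest} (contains every sufficient input-assumption), and in an infinite poset a unique maximal element need not be greatest, so your remark that the forward direction ``reconciles the two readings'' does not by itself discharge the implication ``unique maximal element $\Rightarrow$ remorsefree''. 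What settles it is the theorem's ``moreover'' clause, which asserts that the minimal assumption \emph{is} the input-assumption sufficient for the remorsefree strategy, i.e.\ $\IA(\sigma_\exists)$, which by your forward direction contains all sufficient input-assumptions; so ``minimal'' here must mean weakest/greatest, and under that intended reading your proof is complete. Under the literal ``unique maximal element'' reading you would still owe an argument, so it is worth stating the greatest-element reading explicitly rather than leaving it as a reconciliation remark.
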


\begin{example}\label{ex:rfree}
Consider the game of \figurename~\ref{fig:rfree-/->dom}.
In this game the only remorsefree strategy is to output $o_1$ at the first step.
The corresponding assumption is $A = \Sigma_I \cdot \Sigma_O \cdot (\{ i_2, i_3 \} \cdot \Sigma_O)^\omega$ while the assumption corresponding to the strategy outputting $o_2$ is $\Sigma_I \cdot \Sigma_O \cdot (\{ i_3 \} \cdot \Sigma_O)^\omega$ which is more restrictive.
The assumption $A$ is indeed the unique optimal input-assumption.
\end{example}

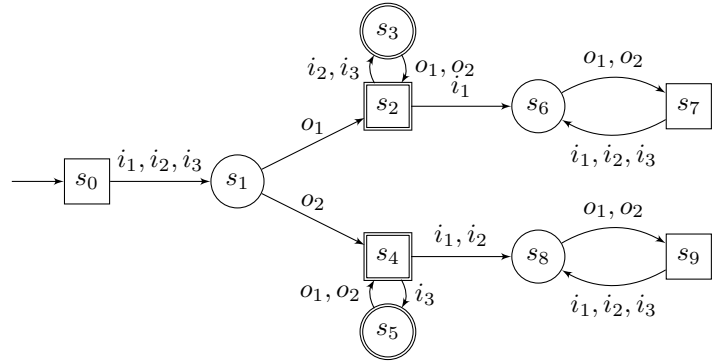
\begin{figure}[h]
  \centering{
  \begin{tikzpicture}
    \draw (-2,0) node[draw,minimum size=6mm] (S0) {$s_0$};
    \draw (0,0) node[draw,circle] (S1) {$s_1$};
    \draw (2,1) node[draw,minimum size=6mm,double] (S2) {$s_2$};
    \draw (2,2) node[draw,double,circle] (S3) {$s_3$};
    \draw (2,-1) node[draw,minimum size=6mm,double] (S4) {$s_4$};
    \draw (2,-2) node[draw,double,circle] (S5) {$s_5$};
    \draw (4,1) node[draw,circle] (L1) {$s_6$};
    \draw (6,1) node[draw,minimum size=6mm] (L2) {$s_7$};
    \draw (4,-1) node[draw,circle] (L3) {$s_8$};
    \draw (6,-1) node[draw,minimum size=6mm] (L4) {$s_9$};

    \draw[-latex'] (-3,0) -- (S0);
    \draw[-latex'] (S0) -- node[above]{$i_1,i_2,i_3$} (S1);
    \draw[-latex'] (S1) -- node[above]{$o_1$} (S2);
    \draw[-latex'] (S1) -- node[above]{$o_2$} (S4);
    \draw[-latex'] (S2) -- node[above] {$i_1$} (L1);
    \draw[-latex'] (S4) -- node[above] {$i_1,i_2$} (L3);
    \draw[-latex'] (L1) edge[bend left] node[above] {$o_1,o_2$} (L2);
    \draw[-latex'] (L3) edge[bend left] node[above] {$o_1,o_2$} (L4);
    \draw[-latex'] (L2) edge[bend left] node[below] {$i_1,i_2,i_3$} (L1);
    \draw[-latex'] (L4) edge[bend left] node[below] {$i_1,i_2,i_3$} (L3);
    \draw[-latex'] (S2) edge[bend left] node[left] {$i_2,i_3$} (S3);
    \draw[-latex'] (S3) edge[bend left] node[right] {$o_1,o_2$} (S2);
    \draw[-latex'] (S5) edge[bend left] node[left] {$o_1,o_2$} (S4);
    \draw[-latex'] (S4) edge[bend left] node[right] {$i_3$} (S5);
  \end{tikzpicture}
  }
  \caption{B\"uchi automaton for which the remorsefree strategy consists in outputting $o_1$.}
  \label{fig:rfree-/->dom}
\end{figure}

We now use the associated notion of admissibility to characterise the minimal assumptions that are sufficient to win.

\begin{theorem}\label{thm:remorse-free-input}
  If $\sigma_\exists$ is a remorsefree admissible strategy for $\spec$, then $\IA(\sigma_\exists)$ is an optimal input-assumption for $\spec$.
  Reciprocally if $A$ is an optimal input-assumption for $\spec$, then there is a remorsefree admissible strategy $\sigma_\exists$, such that $A = \IA(\sigma_\exists)$.
\end{theorem}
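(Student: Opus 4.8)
The plan is to reduce the statement to a clean dictionary between the remorsefree dominance order on strategies and language inclusion of input-assumptions, and then mirror the structure of the proofs of Thm.~\ref{thm:strongly-winning-optimal} and Thm.~\ref{thm:non-dominated-nonrestrictive}. First I would set up the dictionary. For a strategy $\sigma_\exists$ write $W(\sigma_\exists) = \{ u \in \Sigma_I^\omega \mid \Out(\sigma_\exists,u) \in \spec \}$ for the set of input words it wins. Since $\pi_I$ restricts to a bijection between $\Out(\sigma_\exists)$ and $\Sigma_I^\omega$, one has $\pi_I(\Out(\sigma_\exists)\setminus\spec) = \Sigma_I^\omega \setminus W(\sigma_\exists)$, so that $\IA(\sigma_\exists) = \{ w \mid \pi_I(w) \in W(\sigma_\exists) \}$; in particular $\IA(\sigma_\exists)$ is the cylinder over $W(\sigma_\exists)$ and $\pi_I(\IA(\sigma_\exists)) = W(\sigma_\exists)$. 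Because two input-assumptions are ordered by inclusion exactly when their input projections are, this yields the key equivalence $\IA(\sigma_\exists) \subseteq \IA(\sigma'_\exists) \iff W(\sigma_\exists)\subseteq W(\sigma'_\exists)$, i.e. inclusion of the associated input-assumptions coincides with remorsefree dominance of $\sigma_\exists$ by $\sigma'_\exists$. Unfolding the definition, remorsefree admissibility of $\sigma_\exists$ says exactly that no $\sigma'_\exists$ satisfies $W(\sigma_\exists)\subsetneq W(\sigma'_\exists)$, i.e. $W(\sigma_\exists)$ is maximal among all sets of the form $W(\cdot)$.

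For the forward direction I would assume $\sigma_\exists$ remorsefree admissible and prove $\IA(\sigma_\exists)$ optimal. By the lemma preceding the theorem, $\IA(\sigma_\exists)$ is an input-assumption sufficient for $\sigma_\exists$, hence sufficient for $\spec$. To check optimality, I take any sufficient input-assumption $B \supseteq \IA(\sigma_\exists)$, pick a witnessing strategy $\sigma'_\exists$ for which $B$ is sufficient, and use the same lemma to get $B \subseteq \IA(\sigma'_\exists)$. Then $\IA(\sigma_\exists)\subseteq\IA(\sigma'_\exists)$, so $W(\sigma_\exists)\subseteq W(\sigma'_\exists)$; admissibility forbids this inclusion from being strict, so $W(\sigma_\exists)=W(\sigma'_\exists)$ and hence $B = \IA(\sigma_\exists)$. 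Thus no sufficient input-assumption strictly contains $\IA(\sigma_\exists)$.

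For the converse I would start from an optimal input-assumption $A$, take a strategy $\sigma_\exists$ for which $A$ is sufficient, and observe (again by the preceding lemma) that $A \subseteq \IA(\sigma_\exists)$ with $\IA(\sigma_\exists)$ itself sufficient; optimality of $A$ then forces $A = \IA(\sigma_\exists)$. It remains to show $\sigma_\exists$ is remorsefree admissible: if some $\sigma'_\exists$ strictly dominated it, then $W(\sigma_\exists)\subsetneq W(\sigma'_\exists)$, giving $\IA(\sigma_\exists)\subsetneq\IA(\sigma'_\exists)$ with $\IA(\sigma'_\exists)$ a sufficient input-assumption strictly larger than $A$, contradicting optimality. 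Hence $\sigma_\exists$ is admissible and witnesses the claim.

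The only delicate points, where I expect to spend care rather than face a genuine obstacle, are (i) justifying $\pi_I(\Out(\sigma_\exists)\setminus\spec)=\Sigma_I^\omega\setminus W(\sigma_\exists)$, i.e. that for each input word the strategy produces a single outcome so that winning and losing input words partition $\Sigma_I^\omega$, and (ii) translating the two-clause definition of remorsefree admissibility into the crisp statement that $W(\sigma_\exists)$ is $\subseteq$-maximal, in particular checking that the negation of admissibility is exactly the existence of $\sigma'_\exists$ with $W(\sigma_\exists)\subsetneq W(\sigma'_\exists)$. Once the dictionary is in place, both implications are short inclusion chases that reuse the characterisation of $\IA(\sigma_\exists)$ as the largest sufficient input-assumption of its witnessing strategy.
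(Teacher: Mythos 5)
Your proposal is correct and takes essentially the same route as the paper: both directions rest on the preceding lemma that $\IA(\sigma_\exists)$ is the largest input-assumption sufficient for $\sigma_\exists$, combined with the observation that inclusion between such assumptions coincides with remorsefree dominance of the underlying strategies. Your $W(\cdot)$ dictionary (input-assumptions as cylinders over the sets of won input words, admissibility as $\subseteq$-maximality of $W(\sigma_\exists)$) is a cleaner packaging of the case analyses the paper performs inline, but it is the same decomposition and the same key lemma.
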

\begin{proof}
  \fbox{$\Rightarrow$}
  Let $\sigma_\exists$ be a remorsefree-admissible strategy and $A$ the corresponding environment assumption.
  Let $B$ be such that $A \subset B$.
  We show that $B$ is not sufficient for $\spec$ which will show that $A$ is optimal.
  
  Let $\sigma'_\exists$ be a strategy we show that $B$ is not sufficient for this strategy.
  Since $\sigma_\exists$ is remorsefree-admissible, one of those two cases occurs:
  \begin{itemize}
  \item  $\forall w' \in \Sigma_I^\omega.\ \out(\sigma'_\exists,w') \models \spec \Rightarrow \out(\sigma_\exists,w') \models \spec$.
    Let $w\in B\setminus A$.
    Since $A=\IA(\sigma_\exists)$, we have that $w \in \pi_I(\Out(\sigma_\exists)\setminus \spec)$.
    Hence $\Out(\sigma_\exists,w) \not\models \spec$, and we have that $\out(\sigma'_\exists,w) \not\models \spec$ which shows that $B$ is not sufficient for~$\sigma'_\exists$.
  \item
  or
  $\exists w' \in \Sigma_{I}^\omega.\ \out(\sigma_\exists,w') \models \spec \land \out(\sigma'_\exists,w') \not\models \spec$.
  We have that $w'$ belongs $A$ by definition, thus it belongs to $B$ by hypothesis, and since $\out(\sigma'_\exists,w') \not\models \spec$, $B$ is not sufficient for $\sigma'_\exists$.
  \end{itemize}

  \fbox{$\Leftarrow$}
  Let $A$ be an optimal assumption for $\spec$ and $\sigma_\exists$ the corresponding strategy.
  We will show that $\sigma_\exists$ is remorsefree-admissible.
  For that, let $\sigma'_\exists$ be another strategy of \eve, we show that either:
  $\forall w \in \Sigma_I^\omega.\ \out(\sigma'_\exists,w) \models \spec \Rightarrow \out(\sigma_\exists,w) \models \spec$
  or $\exists w \in \Sigma_{I}^\omega.\ \out(\sigma_\exists,w) \models \spec \not\Rightarrow \out(\sigma'_\exists,w) \models \spec$.

  Since $A$ is optimal, this means that it is not strictly included in $B = \IA(\sigma'_\exists)$.
  This means that either $A = B$ or $A \setminus B \ne \varnothing$.
  \begin{itemize}
  \item if $A=B$ then we show that $\forall w \in \Sigma_I^\omega.\ \out(\sigma'_\exists,w) \models \spec \Rightarrow \out(\sigma_\exists,w) \models \spec$:
    \begin{itemize}
    \item
      if $w \in A = \IA(\sigma_\exists)$, we have that $\Out(\sigma_\exists,w) \models \spec$, and the implication holds.
    \item if $w\not\in A = B = \IA(\sigma'_\exists)$, we have that $\Out(\sigma'_\exists,w) \not\models \spec$, and the implication holds. 
    \end{itemize}
\item
  Otherwise $A \setminus B \ne \varnothing$ then let $w\in A\setminus B$.
  Since $w \in A$, $\Out(\sigma_\exists,w) \models \spec$ and since $w \not\in B = \IA(\sigma'_\exists)$, $\Out(\sigma'_\exists,w) \not\models \spec$.
  This shows that $\out(\sigma_\exists,w) \models \spec \not\Rightarrow \out(\sigma'_\exists,w) \models \spec$.
  \end{itemize}
\end{proof}

\bibliographystyle{abbrv}
\bibliography{biblio}

\begin{thebibliography}{10}

\bibitem{berwanger07}
D.~Berwanger.
\newblock Admissibility in infinite games.
\newblock In {\em Proc. of {STACS}'07}, volume 4393 of {\em LNCS}, pages
  188--199. Springer, Feb. 2007.

\bibitem{BEJK14}
R.~Bloem, R.~Ehlers, S.~Jacobs, and R.~K{\"{o}}nighofer.
\newblock How to handle assumptions in synthesis.
\newblock In {\em Proceedings 3rd Workshop on Synthesis, {SYNT} 2014, Vienna,
  Austria, July 23-24, 2014.}, pages 34--50, 2014.

\bibitem{BRS14}
R.~Brenguier, J.~Raskin, and M.~Sassolas.
\newblock The complexity of admissibility in omega-regular games.
\newblock In {\em {CSL-LICS} '14, 2014}. ACM, 2014.

\bibitem{BRS15}
R.~Brenguier, J.-F. Raskin, and O.~Sankur.
\newblock Assume-admissible synthesis.
\newblock In L.~Aceto and D.~de~Frutos-Escrig, editors, {\em CONCUR}, volume~42
  of {\em LIPIcs}, pages 100--113. Schloss Dagstuhl - Leibniz-Zentrum fuer
  Informatik, 2015.

\bibitem{CHJ08}
K.~Chatterjee, T.~A. Henzinger, and B.~Jobstmann.
\newblock Environment assumptions for synthesis.
\newblock In {\em CONCUR 2008-Concurrency Theory}, pages 147--161. Springer,
  2008.

\bibitem{DF11}
W.~Damm and B.~Finkbeiner.
\newblock Does it pay to extend the perimeter of a world model?
\newblock In {\em FM 2011: Formal Methods}, pages 12--26. Springer, 2011.

\bibitem{EJ91}
E.~A. Emerson and C.~S. Jutla.
\newblock Tree automata, mu-calculus and determinacy.
\newblock In {\em Foundations of Computer Science, 1991. Proceedings., 32nd
  Annual Symposium on}, pages 368--377. IEEE, 1991.

\bibitem{Faella09}
M.~Faella.
\newblock Admissible strategies in infinite games over graphs.
\newblock In {\em {MFCS} 2009}, volume 5734 of {\em Lecture Notes in Computer
  Science}, pages 307--318. Springer, 2009.

\bibitem{FK15}
R.~Faran and O.~Kupferman.
\newblock Spanning the spectrum from safety to liveness.
\newblock In {\em Automated Technology for Verification and Analysis}, pages
  183--200. Springer, 2015.

\bibitem{FJR09}
E.~Filiot, N.~Jin, and J.-F. Raskin.
\newblock An antichain algorithm for {LTL} realizability.
\newblock In {\em Computer Aided Verification}, pages 263--277. Springer, 2009.

\bibitem{GTW02}
E.~Gr\"{a}del, W.~Thomas, and T.~Wilke, editors.
\newblock {\em Automata Logics, and Infinite Games: A Guide to Current
  Research}.
\newblock Springer-Verlag New York, Inc., New York, NY, USA, 2002.

\bibitem{PR89}
A.~Pnueli and R.~Rosner.
\newblock On the synthesis of a reactive module.
\newblock In {\em Proceedings of the 16th ACM SIGPLAN-SIGACT symposium on
  Principles of programming languages}, pages 179--190. ACM, 1989.

\end{thebibliography}



\end{document}